\documentclass{article}
\usepackage[applemac]{inputenc}
\usepackage[colorinlistoftodos]{todonotes}
\usepackage{authblk}
\usepackage{enumerate}
\usepackage{tikz}
  \usetikzlibrary{patterns}
  \usetikzlibrary{plotmarks}
\usepackage{graphicx}
\usepackage{caption}        
\usepackage{subcaption}
\usepackage{fancybox}
\usepackage{hyperref}
\usepackage{standalone}     
\usepackage{float}        
\usepackage{algorithm}
\usepackage{algcompatible}

  \algblockdefx{FORALLP}{ENDFORALLP}[1]%
    {\textbf{for all }#1 \textbf{do in parallel}}%
    {\textbf{end for}}
  \algloopdefx{RETURN}[1][]{\textbf{return} #1}  
\usepackage{color, colortbl}
\usepackage{array}
\usepackage{amsmath,amsfonts,amssymb,xspace,amsthm}
\usepackage{ bbold } 
\newtheorem{theorem}{Theorem}[section]

\newtheorem{lemma}[theorem]{Lemma}
\newtheorem{proposition}[theorem]{Proposition}
\theoremstyle{definition}
\newtheorem{definition}{Definition}[section]

\DeclareRobustCommand{\Trianguloooo}[1]{%
\protect\tikz{
  \begin{tikzpicture}[inner sep=0pt, baseline=(base),scale=0.25]%
    \draw[line width=1.5pt,fill=#1] (1.5,0.33012701892) -- (2.0,1.19615242271) -- (2.5,0.33012701892) -- (1.5,0.33012701892);
    \node (base) at (0.75,0.33012701892) {};
  \end{tikzpicture}}%
  \hspace*{-8pt}
}  

\newcommand{\defproblemu}[3]{
  \vspace{1mm}
\noindent\fbox{
  \begin{minipage}{0.95\textwidth}
  #1 \\
  {\bf{Input:}} #2  \\
  {\bf{Question:}} #3
  \end{minipage}
  }
  \vspace{1mm}
}

\newcounter{TODOcounter}

\pagestyle{plain}

\newcommand{\stability}{\textsc{Stability}\xspace}

\newcommand{\NC}{\textbf{NC}\xspace}
\newcommand{\Pt}{\textbf{P}\xspace}
\newcommand{\FNSM}{234}

\newcommand{\cO}{\mathcal{O}}
\def\minus{%
  \setbox0=\hbox{-}%
  \vcenter{%
    \hrule width\wd0 height \the\fontdimen8\textfont3%
  }%
}


\providecommand{\keywords}[1]{\textbf{\textit{Keywords:}} #1}
\title{On the Complexity of the Stability Problem of Binary  Freezing Totalistic Cellular Automata}
\author[1,2,3]{Eric Goles}
\author[2]{Diego Maldonado}
\author[1]{Pedro Montealegre}
\author[2]{Nicolas Ollinger}


\affil[1]{Facultad de Ingenieria y Ciencias, Universidad Adolfo Iba\~nez, Santiago, Chile.}
\affil[2]{Univ. Orl\'{e}ans, LIFO EA 4022, FR-45067 Orl\'{e}ans, France}
\affil[3]{LE STUDIUM, Loire Valley Institute for Advanced Studies, Orl\'eans, France}
\date{December 30, 2018}
\begin{document}

\maketitle

\begin{abstract}
In this paper we study the family of two-state Totalistic Freezing Cellular Automata (TFCA) defined over the triangular and square grids with von Neumann neighborhoods. We say that a Cellular Automaton is Freezing and Totalistic if the active cells remain unchanged, and the new value of an inactive cell depends only on the sum of its active neighbors. 

We classify all the Cellular Automata in the class of TFCA,  grouping them in five different classes: the Trivial rules, Turing Universal rules, Algebraic rules, Topological rules and Fractal Growing rules.  At the same time, we study in this family the  \stability problem, consisting in deciding whether an inactive cell becomes active, given an initial configuration. We exploit the properties of the automata in each group to show that:
\begin{itemize}
\item For Algebraic and Topological Rules the \stability problem is in \NC.
\item For Turing Universal rules the \stability problem is \Pt-Complete.
\end{itemize}

\end{abstract}

\keywords{
Computational Complexity,
Freezing Cellular Automata ,
Totalistic Cellular Automata ,
Fast parallel algorithms,
P-Complete
}

\section{Introduction}

%
%
 
A \emph{cellular automata} (CA) is a discrete dynamical system in time and space. The space is defined in our case as two-dimensional regular grid of \emph{cells}. Each cell on the grid has a binary state (active or inactive) which evolves in time accordingly to a local function. The local function depends on a set of \emph{neighbors}, and it is the same for every site in the grid. The global dynamics consists to update all sites synchronously.

Recently,   a particular family of CA, namely the \emph{freezing} CA (FCA) have been introduced and studied in \cite{goles:hal-01294144}.
These are the CA where the cells can only evolve into a state bigger than their current state (in some pre-defined order). In the case of binary cellular automata, where the states are \emph{active} and \emph{inactive}, the freezing dynamics imply that every active cell remains active in subsequent states. It is direct that, every initial configuration on a binary freezing cellular automaton converges in at most $N$ steps to a fixed point (where $N$ is the number of cells).

One challenging problem related to freezing dynamics consists in computing the fixed point reached by a FCA, given an initial configuration. Observe that this problem is equivalent to compute the inactive cells that remain inactive in subsequent states. We call such cells \emph{stable cells}, and {\stability} the problem consisting in deciding if a given cell is stable, given an initial configuration of a FCA. 

Of course, one can solve {\stability} in linear time, simply simulating the dynamics of the FCA. The interesting question is how fast one can determine a solution to {\stability}, and in particular if we may answer faster than simply simulating the automaton. That leads us to study the {\stability} problem in the context of the \emph{Computational Complexity Theory}.  Consider the class {\Pt} of problems that can be solved in polynomial time on a deterministic Turing machine.  Conventionally {\Pt} is considered the class of \emph{feasible problems}. Observe that the simple simulation of a FCA until it reaches a fixed point leads to an algorithm that solves {\stability} in \emph{polynomial time}, i.e. {\stability} is in {\Pt}. 

Conversely, {\NC} is the class containing all problems that can be solved in poly-logarithmic time in a parallel machine using a polynomial number of processors. Informally, {\NC} is the class of problems solvable by a \emph{fast-parallel algorithm}. It is known that $\NC\subseteq\Pt$, and is a wide-believed conjecture that this inclusion is proper. If the conjecture is correct, then there are problems in {\Pt} that are not contained in {\NC}. Such problems are called \emph{intrinsically sequential}. The problems that are the most likely to be intrinsically sequential are the {\Pt}-Complete problems. A problem $L$ is {\Pt}-Complete if every problem in the class \Pt can be reduced $L$ by a function computable in $\NC$.

We aim to classify FCA according to the complexity of its {\stability} problem. More precisely, we either seek for a fast-parallel algorithm that solves the {\stability} problem, or give evidence that  no such algorithm exists showing that the problem is {\Pt}-Complete.  Therefore, our goal is to classify a FCA in two groups, those where {\stability} is in \NC (in that case we say the problem is easy) and those where the problem is \Pt-Complete (so we say that the problem is \emph{hard}).

In this work we study the two simplest ways to tessellate the bidimensional grid: tessellation with triangles (where each cell has three neighbors) and with squares (where each cell has four neighbors, i.e. the two dimensional CA with von Neumann neighborhood).
In each one of those grids we study the family of \emph{freezing totalistic cellular automata} (FTCA). The name totalistic means that the new value of a cell only depends on the sum of its neighbors. We show that this family of CAs exhibits a broad and rich range of behaviors. More precisely, we classify FTCAs in five groups: 
\begin{itemize}
\item Simple rules:  Rules that exhibit very simple dynamics, which reach fixed points in a constant number of steps.
\item Topological rules: Rules where the stability of a cell depends on some topological property given by the initial configuration. 
\item Algebraic rules: Rules where the dynamics can be accelerated, exploiting some algebraic properties given by the rule. 
\item Turing Universal rules: Rules capable of simulating Boolean Circuits and capable to simulate Turing Computation. 
\item Fractal growing rules: Rules that produce patterns which grow forming fractal shapes. 
\end{itemize}

\subsection{Related work}

To our knowledge, the first study related with the computational complexity of Cellular Automata was done by E. Banks. In his PhD thesis he studied the possibility for simple Cellular Automata in two dimensional grids, to simulate logical gates.  If such simulation is possible, the automaton is capable of universal Turing computation \cite{conf/focs/Banks70}. Directly in the context of prediction problems C. Moore et al \cite{RePEc:wop:safiwp:96-08-060} studied the \emph{Majority Automata} (next state of a site will be the most represented in the neighborhood). He proved that predicting $T$ steps in the evolution of the majority automaton is \Pt-Complete in three and more dimensions. The complexity remains open in two dimensions.


More related with freezing dynamics, in \cite{goles:hal-01294144}  it is shown that the \emph{stability} problem is in \NC, for every one-dimensional freezing cellular automata. In order to find a FCA with  a higher complexity, the result of \cite{goles:hal-01294144} shows that it is necessary to study FCA in more than one dimensions. In this context, we should mention that D. Griffeath and C. Moore studied the \emph{Life without Death} Automaton (i.e. the \emph{Game of Life} restricted to freezing dynamics), showing that the {\stability} problem for this rule is {\Pt}-Complete  \cite{RePEc:wop:safiwp:97-05-044}. 

On other hand, in \cite{goles:hal-00914603}, it was studied the \emph{freezing majority cellular automaton}, also known as \emph{bootstrap percolation model}, in arbitrary undirected graph. In this case, an inactive cell becomes active if and only if the active cells are the most represented in its neighborhood. It was proved that \stability is \Pt-Complete over graphs such that its maximum degree (number of neighbors) $\geq5$. Otherwise (graphs with maximum degree $\leq4$), the problem is in \NC. This clearly includes the two dimensional case, with von Neumann neighborhood.

Other approaches on the relation of computational complexity and cellular automata consider other problems different than {\stability}. For example in \cite{Sutner1995, Dennunzio2017} is studied the \emph{reachability} problem: given two configurations of a cellular automata, namely $x$ and $y$, decide if $y$ is in the orbit of configuration $x$. An algorithm solving the {\stability} problem for a FCA can be used to solve the reachability problem when configuration $y$ is the fixed point reached from $x$. Moreover, most of the algorithms presented in this article can be used to solve the reachability problem for FTCA.

\subsection{Structure of the article}

The paper is structured as follows. In Section 2 definitions and notations are introduced. In Section 3, the FTCA for the triangular grid are studied. In Section 4, we study the FTCA on the square grid. Finally, in Section 5 we give some conclusions.


\section{Preliminaries}
Consider the plane tessellated by triangles, as depicted in Figure \ref{fig: grid3}, or tessellated by squares, as depicted in Figure \ref{fig: grid4}. We call such tessellations the \emph{triangular grid} and \emph{square grid}, respectively. 
In each case, a triangle or square is called a \emph{cell} or \emph{site}. In the triangular grid each cell (triangle) has three adjacent cells, and in the square grid each cell (square) has four adjacent cells. A cell  that is adjacent to a cell $u$ is called a \emph{neighbor} of $u$. The set of neighbors of $u$ is denoted by $N(u)$. In the triangular and square grid, this definition of neighbors is called the \emph{von Neumann Neighborhood}, and it is denoted $N(0,0)$. 

\begin{figure}
  \centering
  \begin{subfigure}[t]{0.475\textwidth}
    \centering
      \includegraphics[width=.95\textwidth]{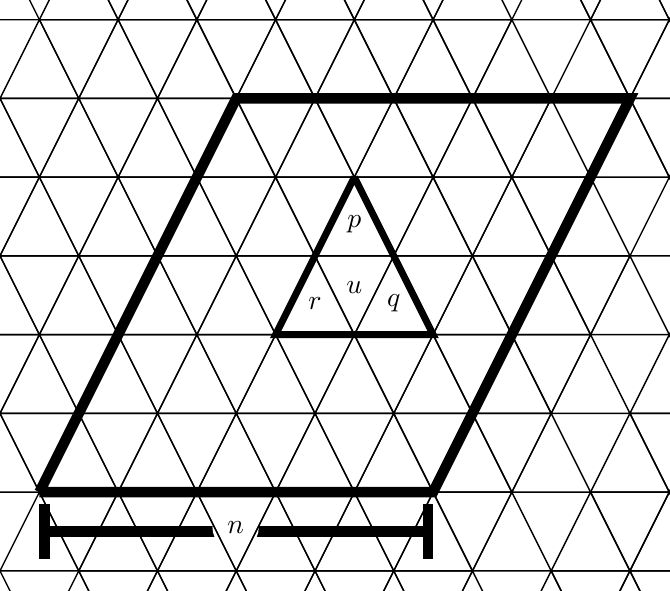}  
      \caption{Triangular grid. Cells $p$, $q$ and $r$ are the neighbors of the cell $u$.}
      \label{fig: grid3}
  \end{subfigure}%
  \hfill  
  \begin{subfigure}[t]{0.475\textwidth}
    \centering
      \includegraphics[width=.95\textwidth]{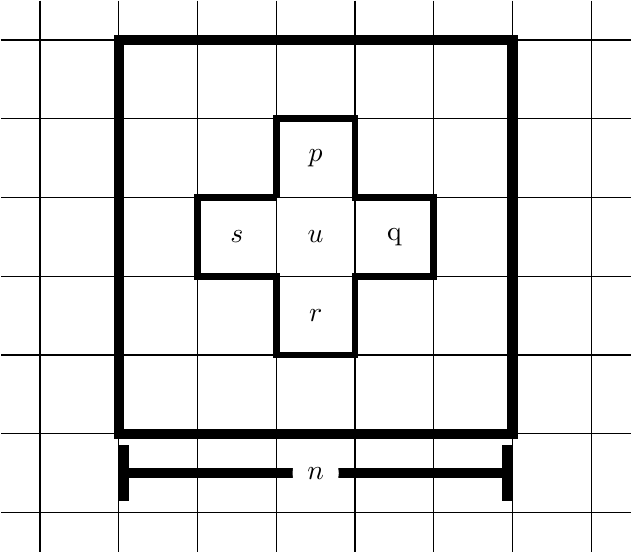}  
      \caption{Square grid. Cells $p$, $q$, $r$ and $s$ are the neighbors of the cell $u$.}
      \label{fig: grid4}
  \end{subfigure}%
  \caption{ Triangular (a) and square (b) grids with the von Neumann neighborhood of a cell $u$.}     
  \label{fig: grids}
\end{figure} 

Each cell in a grid has two possible \emph{states}, which are denoted $0$ and $1$. We say that a site in state $1$ is \emph{active} and a site in state $0$ is \emph{inactive}. A configuration of a grid (triangular or square) is a function that assigns a state to every cell. In a square grid, a \emph{finite configuration} $x$ of dimension $n\times n$  is a function that assigns values in $\{0,1\}$ to square-shaped area of $n^2$ cells. Analogously, in a triangular grid, a finite configuration $x$ of dimension $n\times n$ is a function that assigns values in $\{0,1\}$ to a rhomboid shaped area of $2n^2$ cells. The value of the cell $u$ in the configuration $x$ is denoted $x_u$ (See Figure \ref{fig: grids}). We remark that a finite configuration $x$ of dimension $n\times n$ has $2n^2$ cells in a triangular grid and has $n^2$ cells in a square grid. In both cases the number of covered cells is $\cO(n^2)$. 

Given a finite configuration $x$ of dimension $n\times n$, the periodic configuration $c=c(x)$ is an infinite configuration over the grid, obtained by repetitions of $x$ in all directions. The configuration $c(x)$ is a spatially periodic, and will be interpreted as a torus, where each cell in the boundary of $x$ has a neighbor placed in the opposite boundary of $x$. 

We call $\mathcal{C}$ is the set of all possible configurations over a (triangular or square) grid.
A \emph{cellular automaton} (CA)  with  set of states $\{0,1\}$ is a function $F:\mathcal{C} \to \mathcal{C}$, defined by a \emph{ local function} $f:\{0,1\}^{N(0,0)}\to \{0,1\}$ as $F(c)_u=f(c_{N(u)})$. 
Computing $F$ is equivalent to compute synchronously in each site of the grid, the application of the local function $f$ cell by cell.
A cellular automaton is called \emph{freezing} \cite{goles:hal-01294144}  (FCA) if the local rule $f$ satisfies that the active cells always remains active. A cellular automaton is called \emph{totalistic} \cite{RevModPhys.55.601} (TCA) if the local rule $f$ satisfies $f(c_{N(u)})=f(c_{u},\sum_{v\in N(u)}c_{v})$, i.e. it depends only on the sum of the states in the neighborhood of a cell.

We call FTCA the family of two-state freezing totalistic cellular automata, over the square and triangular grids, with von Neumann neighborhood. In this family, the active cells remain active, because the rule is freezing, and the inactive cells become active depending only in the sum of their neighbors. Notice that this sum of the states of the neighbors of a site is at most the size of the neighborhood, that we call $|N(0,0)|$, and equals $3$ in the case of the triangular grid, and $4$ in the case of the square grid. 

Let $F$ be a FTCA. We can identify $F$ with a set $\mathcal{I}_F \subseteq \{1, \dots, |N(0,0)|\}$ such that, for every configuration $c$ and site $u$: 
$$f(c_{N(u)}) = \left\{ \begin{array}{cl} 1 & \textrm{if } (c_u = 1) \vee (\sum_{v\in N(u)}c_{v}\in \mathcal{I}_F), \\ 0 & \textrm{otherwise.}  \end{array} \right.$$
Notice that  $\mathcal{I}_F\subseteq \{0,1,2,3\}$ in the triangular grid and $\mathcal{I}_f\subseteq \{0,1,2,3,4\}$ in the square grid. 
We will name the FTCAs according to the elements contained in $\mathcal{I}_F$, as the concatenation of the elements of $\mathcal{I}_F$ in increasing order (except when $\mathcal{I}_F = \emptyset$, that we call $\phi$). 
For example, let $Maj$ be the freezing majority vote CA, where an inactive cell becomes active if the majority of its neighbors is active. Note that $\mathcal{I}_{Maj} = \{2,3\}$ in the triangular grid and  $\mathcal{I}_{Maj} = \{2,3,4\}$ in the square grid. We call then $Maj$ the rule $23$ in the first case and $234$ in the later.

We deduce that there are $2^{|N(0,0)|}$ different FTCA, each one of them represented by the corresponding set $\mathcal{I}_F$. Notice that the number of different FTCA is $16$ in the triangular grid and $32$ different in the square grid. We will focus our analysis in the FTCAs where the inactive state is a quiescent state, which means that the inactive sites where the sum of their neighborhoods is $0$ remain inactive. Therefore, we will consider $8$ different FTCA in the triangular grid, and $16$ in the square grid.

Recall that in an FTCA the active cells remain always active. We will be interested in the inactive cells that always remain inactive.

\begin{definition}
Given a configuration $c \in \{0,1\}^{\mathbb{Z}^2}$, we say that a site $v$ is \emph{stable} if and only if $c_v= 0$ and it remains inactive after any iterated application of the rule, i.e., $F^t(c)_v= 0$ for all $t\geq 0$.
\end{definition}

From the previous definition, we consider the {\stability} problem, which consists in deciding if a cell on a periodic configuration $c$ is stable. More formally, if $F$ is a cellular automaton, then:

\medskip

\defproblemu{\stability}{A finite configuration $x$ of dimensions $n\times n$ and a site $u \in [n]\times [n]$ such that $x_u = 0$. }{Is $u$ stable for configuration $c=c(x)$? }

\medskip

In other words, the answer of $\stability$ is \emph{no} if there exists $T>0$ such that $F^T(c(x))_u = 1$.
Our goal is to understand the difficulty of $\stability$ in terms of its computational complexity, for every FTCA  defined over a triangular or square grid.
We consider two classes of problems: \Pt and \NC.

The class \Pt is the class of problems that can be solved by a deterministic Turing machine in  time $n^{\cO(1)}$, where $n$ is the size of the input.
Let $F$ be a freezing cellular automaton (FCA) and $x$ be a finite configuration of dimensions $n\times n$ cells. Notice that the dynamics of $F$ over $c(x)$ reach a \emph{fixed point} (a configuration $c'$ such that $F(c')=c'$) in $\cO(n^2)$ steps. Indeed, after each application of $F$ before reaching the fixed point, at least one inactive site become active in each copy of $x$. The application of one step of any FCA can be simulated in polynomial time, simply computing the local function of every cell. Therefore, for every FCA (and then for every FTCA) $F$  the $\stability(F)$ problem is in~\Pt.

The class {\NC} is a subclass of {\Pt}, consisting of all problems  solvable by a \emph{fast-parallel algorithm}. A fast-parallel algorithm is one that runs in a parallel random access machine (PRAM) in poly-logarithmic time (i.e. in time $(\log{n})^{\cO(1)}$)  using $n^{\cO(1)}$ processors.
It is direct that $\NC \subseteq \Pt$, and it is a wide-believed conjecture that the inclusion is proper \cite{book:879825}. Indeed,  $\NC = \Pt$ would imply that for any  problem solvable in polynomial time, there is a parallel algorithm solving that problem \emph{exponentially faster}. Back in our context, the fact that for some FTCA the  $\stability$ problem belongs to {\NC} will imply that one can solve the problem significantly faster than simply simulating the steps of the automaton.

The problems in {\Pt} that are the most likely to not belong to {\NC} are the {\Pt}-Complete problems. A problem $p$ is {\Pt}-Complete if it is contained in {\Pt} and every other problem in {\Pt} can be reduced to $p$ via a function computable in logarithmic-space. For further details we refer to the book of \cite{book:879825}.

\subsection{Some graph terminology}

 For a  set of cells $S \subseteq \mathbb{Z}^2$, we call $G[S] = (S, E)$ the graph defined with vertex set $S$, where two vertices are adjacent if the corresponding sites are neighbors for the von Neumann neighborhood. 

For a graph $G = (V,E)$, a sequence of vertices $P = v_1, \dots, v_k$ is called a $v_1, v_k$- \emph{path} if $\{v_i, v_{i+1}\}$ is an edge of $G$, for each $i \in [k-1]$. Two $u, v$-paths $P_1$, $P_2$ are called \emph{disjoint} if  $P_1 \cap P_2 = \{u, v\}$. A $u,v$-path where $u$ and $v$ are adjacent is called a \emph{cycle}.

\begin{definition} A graph $G$ is called $k$-\emph{connected} if for every pair of vertices $u,v \in V(G)$, $G$ contains $k$ disjoint $u,v$-paths. A $1$-connected graph is simply called \emph{connected}, a $2$-connected graph is called \emph{bi-connected} and a $3$-connected graph is called \emph{tri-connected}
\end{definition}

A maximal set of vertices of a graph $G$ that induces a $k$-connected subgraph is called a \emph{$k$-connected component} of $G$.

Two cells $v_1, v_2$ are at \emph{distance} $r$ if a shortest path connecting $v_1$ and $v_2$ is of length $r$.  The \emph{ball of radius $r$ centered in $u$}, denoted $B_r(u)$, is the the set of all cells at distance at most $r$ from $u$. On the other hand, the \emph{disc of radius $r$ centered in $u$}, denoted $D_r(u)$, is the set of all cells at distance exactly $r$ from $u$. Observe that $D_r(u) = B_r(u) \setminus B_{r-1}(u)$. When $u$ is the cell at the origin (the cell with coordinates $(0,0)$), these sets are denoted $B_r$ and $D_r$, respectively.

\subsection{Parallel subroutines}

In this subsection, we will give some \NC algoirhtms that we will use as subroutines of our fast-parallel algorithm solving \stability.

\subsubsection{Prefix-sum}
First, we will study a general way to compute in \NC called \emph{prefix sum algorithm} \cite{JaJa:1992:IPA:133889}.
Given an associative binary operation $*$ defined on a group $G$, and an array $A = (a_1, \dots, a_n)$ of $n$ elements of $G$, the prefix sum of $A$ is the vector $B$ of dimension $n$ such that $B_i = a_1 * \dots * a_i$. 
Computing the prefix sum of a vector is very useful. For example, it can be used to compute the parity of a Boolean array, the presence of a nonzero coordinate in an array, etc.

\begin{proposition}[\cite{JaJa:1992:IPA:133889}]\label{prop: prefix-sum}
There is an algorithm that computes the prefix-sum of an array of $n$ elements in time $\mathcal{O}(\log n)$ with $\mathcal{O}(n)$ processors.
\end{proposition}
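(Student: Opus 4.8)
The plan is to prove this via the classical pointer-jumping (Hillis--Steele) scan. First I would reduce to the case where $n$ is a power of two: if it is not, append copies of the identity element $e$ of $G$ to $A$ until its length becomes $2^{\lceil \log_2 n\rceil}\le 2n$; this does not change the first $n$ prefix sums and affects neither the asymptotic time nor the asymptotic processor count. Now initialize $B\leftarrow A$, so that $B_i=a_i$, and run $\lceil\log_2 n\rceil$ parallel rounds indexed by $k=0,1,\dots,\lceil\log_2 n\rceil-1$: in round $k$, assign one processor to each index $i$, and every processor with $i>2^k$ performs the update $B_i\leftarrow B_{i-2^k}* B_i$, all reads taken from the values at the start of the round (a standard EREW implementation double-buffers $B$ between rounds so that no cell is read concurrently). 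The correctness claim I would establish is the invariant, proved by induction on $k$, that at the end of round $k$ the entry $B_i$ equals the product of the last $\min(i,2^{k+1})$ entries of $A$ ending at $a_i$, i.e. $B_i=a_{\max(1,\,i-2^{k+1}+1)}*\cdots* a_i$. The base case $k=-1$ (before any round) is $B_i=a_i$, and the inductive step is a routine case split on whether $i\le 2^k$, $2^k<i<2^{k+1}$, or $i\ge 2^{k+1}$, each handled by associativity of $*$ together with the inductive description of $B_{i-2^k}$ and of the old $B_i$. After the last round $2^{k+1}\ge n\ge i$, so the invariant gives $B_i=a_1*\cdots* a_i$ for every $i$, which is exactly the prefix sum.

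For the resource analysis: there are $\lceil\log_2 n\rceil=\mathcal{O}(\log n)$ rounds, and each round consists of a constant number of parallel memory accesses and a single application of the group operation, so the total running time is $\mathcal{O}(\log n)$. Each round uses at most one processor per array index, that is at most $2n=\mathcal{O}(n)$ processors, and the same processors are reused in every round, so $\mathcal{O}(n)$ processors suffice overall. This matches the bounds in the statement.

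The genuinely delicate points here are bookkeeping rather than mathematical: (i) the reduction to a power-of-two length, handled by the identity-padding above; (ii) avoiding concurrent reads if one insists on the EREW model, handled by double-buffering $B$; and (iii) checking that the address computation $i\mapsto i-2^k$ performed by processor $i$ is itself a constant-time PRAM operation, which it is. None of these obstructs the argument. I would also remark that if one additionally wants work-optimality ($\mathcal{O}(n)$ total operations instead of $\mathcal{O}(n\log n)$), the recursive up-sweep/down-sweep scan gives it: pairwise-combine $A$ into an array $A'$ of length $n/2$, recursively compute its prefix sum $B'$, and expand by $B_1=a_1$, $B_{2i}=B'_i$, $B_{2i+1}=B'_i* a_{2i+1}$; this satisfies $T(n)=T(n/2)+\mathcal{O}(1)$ with at most $n$ processors in use at any time, at the cost of a slightly longer correctness proof. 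Either way, the only real work is the scheduling and processor accounting; the mathematical content is the one-line induction on associativity, and we therefore simply cite this as a known fact from \cite{JaJa:1992:IPA:133889}.
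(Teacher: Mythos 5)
Your proof is correct: the Hillis--Steele pointer-jumping scan with the invariant $B_i=a_{\max(1,\,i-2^{k+1}+1)}*\cdots*a_i$ is exactly the standard argument behind this bound, and your resource accounting ($\lceil\log_2 n\rceil$ rounds, one processor per index) matches the stated $\mathcal{O}(\log n)$ time and $\mathcal{O}(n)$ processors. The paper itself offers no proof --- it imports this as a black-box result from \cite{JaJa:1992:IPA:133889} --- so your write-up simply supplies the textbook argument that the citation stands in for; nothing further is needed.
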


\subsubsection{Connected components}

The following propositions state that the connected, bi-connected and tri-connected components of an input graph $G$ can be computed by fast-parallel algorithms. 
\begin{proposition}[\cite{JaJa:1992:IPA:133889}]\label{prop:conjaja}
There is an algorithm that computes the connected components of a graph with $n$ vertices in time $\mathcal{O}(\log^2n)$ with $\mathcal{O}(n{^2}) $ processors.
\end{proposition}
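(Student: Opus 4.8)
The plan is to implement the classical \emph{hook-and-shortcut} (pointer-jumping) strategy for connected components on a CREW PRAM, using the adjacency-matrix representation of $G$ so that one processor can be assigned to each of the $n^2$ ordered pairs of vertices. Throughout, the algorithm maintains a parent array $D$, initialized by $D[v]=v$ for every vertex $v$, subject to the invariant that $D$ always encodes a forest of rooted trees, each tree being contained in a single connected component of $G$, and that any two vertices in the same tree are known to lie in the same component. At the end we will have the stronger property that each connected component is exactly one tree, and moreover a star; then $u$ and $v$ are in the same component iff $D[u]=D[v]$.

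The algorithm runs in $O(\log n)$ phases, each consisting of two steps. In the \emph{hooking} step, every tree root $r$ looks for an edge of $G$ joining some vertex of its tree to a vertex whose root has a \emph{smaller} label; if such an edge exists, it sets $D[r]$ to that smaller root, which merges the two trees. With $O(n^2)$ processors this is done in $O(\log n)$ time: a block of $n$ processors scans row $u$ of the adjacency matrix and, via the (min-)prefix-sum primitive of Proposition~\ref{prop: prefix-sum}, computes the smallest root label reachable from $u$ by one edge; a second min over the vertices sharing a common value of $D$ (again $O(n^2)$ processors, $O(\log n)$ time) yields, for each root, its candidate new parent. In the \emph{shortcutting} step we repeatedly execute $D[v]\leftarrow D[D[v]]$ for all $v$ in parallel, $O(\log n)$ times, which uses $n$ processors and collapses every tree into a depth-one star. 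Hence one phase costs $O(\log n)$ time.

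The key combinatorial claim is that the number of trees decreases by a constant factor in each phase, so $O(\log n)$ phases suffice and the total running time is $O(\log^2 n)$ with $O(n^2)$ processors. After shortcutting, every tree is a star; if a star is not an entire connected component it has a neighbouring star, so in the next hooking step its root either hooks onto a smaller root or is hooked onto by a larger one, and a standard potential argument shows the count of trees at least halves. To keep stars from ``stagnating'' when the only available neighbours have larger roots, I would add the complementary rule that such roots hook onto a larger neighbouring root, carefully ordered (as in the Shiloach--Vishkin scheme) so that $D$ remains a forest at every moment.

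The main obstacle — and the part that deserves a real argument rather than a slogan — is precisely this hooking rule: a careless ``each root hooks to some neighbouring root'' can create long chains, or even cycles, that do not shrink geometrically, and the two hooking rules must be interleaved so that both (a) the forest invariant is preserved and (b) the tree count provably drops by a constant factor each phase. I would discharge this by adopting the Shiloach--Vishkin rule set and proving the forest invariant and the geometric decrease by induction on the phase number; everything else (assigning processors to matrix entries, the prefix-sum minima, the rounds of pointer jumping, and the final comparison $D[u]=D[v]$) is routine.
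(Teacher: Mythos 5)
The paper offers no proof of this proposition; it is quoted verbatim from J\'aJ\'a's book, and the algorithm given there for adjacency-matrix connected components on a CREW PRAM is exactly the hook-to-minimum-root-and-shortcut scheme you describe, with the same $O(\log n)$ phases of cost $O(\log n)$ each and the same $O(n^2)$ processor budget. Your outline is therefore the standard argument for the cited result, and your explicit flagging of the one genuinely delicate point (choosing the hooking rule so that the parent array stays a forest while the number of trees still drops by a constant factor per phase) is exactly where the care is needed in the reference as well.
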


\begin{proposition}[\cite{DBLP:journals/siamcomp/JaJaS82}]\label{prop:bijaja}
There is an algorithm that computes the bi-connected components of a graph with $n$ vertices in time $\mathcal{O}(\log^2n)$ with $\mathcal{O}(n{^3}/\log n) $ processors.
\end{proposition}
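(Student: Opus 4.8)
The plan is to reduce the computation of the bi-connected components to a few connected-components computations on an auxiliary graph, via the Euler-tour technique together with the two subroutines already available (Propositions~\ref{prop: prefix-sum} and~\ref{prop:conjaja}).

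First I would apply Proposition~\ref{prop:conjaja} to get a spanning forest of $G$; since the bi-connected components of distinct connected components can be found independently and in parallel, I may assume $G$ is connected and fix a spanning tree $T$ rooted at an arbitrary vertex. Building an Euler tour of $T$ and running the prefix-sum algorithm of Proposition~\ref{prop: prefix-sum} on it, I compute in $\cO(\log n)$ time the parent pointer, the depth, a preorder number $\mathrm{pre}(v)$ and the subtree size $\mathrm{nd}(v)$ of every vertex $v$, so that the descendants of $v$ are exactly the vertices with preorder number in $[\mathrm{pre}(v),\,\mathrm{pre}(v)+\mathrm{nd}(v)-1]$. A second, similar aggregation (again $\min$ and $\max$ are associative, hence expressible as a prefix sum over the Euler tour) gives, for each $v$, the values $\mathrm{low}(v)$ and $\mathrm{high}(v)$: the smallest and largest preorder number among endpoints of non-tree edges incident to a descendant of $v$.

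With these quantities I would form the auxiliary graph $G'$ whose vertices are the \emph{edges} of $G$ and in which I put an edge between: (i) the tree edge entering $v$ and the tree edge entering a child $w$ of $v$ whenever the subtree of $w$ is ``covered'' from outside, i.e.\ $\mathrm{low}(w)<\mathrm{pre}(v)$ or $\mathrm{high}(w)\ge \mathrm{pre}(v)+\mathrm{nd}(v)$; and (ii) every non-tree edge $\{x,y\}$ (say $\mathrm{pre}(x)<\mathrm{pre}(y)$) and the tree edge entering the child of $x$ on the tree path to $y$. Each of these tests is local given the precomputed data (the ancestor-at-a-given-depth query in~(ii) is resolved from the Euler tour), so $G'$ is constructed in $\cO(\log n)$ time. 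Then I run Proposition~\ref{prop:conjaja} on $G'$: each connected component of $G'$ is precisely the edge set of one bi-connected component of $G$, and collecting the endpoints of those edges (one more prefix-sum-style step) yields the vertex partition into bi-connected components, together with the articulation points, which are the vertices lying in more than one class. Since $G$ has at most $\binom{n}{2}$ edges, $G'$ has $\cO(n^2)$ vertices and $\cO(n^2)$ edges; choosing a connected-components routine whose cost on such a graph is $\cO(\log^2 n)$ time and $\cO(n^3/\log n)$ processors (rather than the naive bound quadratic in the vertex count) keeps the whole procedure within the stated resources.

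The step I expect to be the main obstacle is establishing \textbf{correctness of the reduction}: one has to prove that two edges of $G$ lie on a common cycle if and only if they are connected in $G'$, which amounts to showing that the $\mathrm{low}/\mathrm{high}$ covering condition captures exactly how bi-connected components merge as one walks up $T$, and then to handle the degenerate cases (bridges, the root of $T$, and several non-tree edges spanning the same tree path). A secondary difficulty is the parallel-complexity bookkeeping --- in particular invoking a connected-components subroutine whose processor count scales with the number of edges of $G'$, not with the square of its vertex count, so that the bound $\cO(n^3/\log n)$ is actually met.
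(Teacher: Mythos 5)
The paper gives no proof of this proposition at all: it is imported as a black box from the cited reference \cite{DBLP:journals/siamcomp/JaJaS82}, whose algorithm is based on a transitive-closure / common-cycle computation over pairs of edges --- that is where the matrix-multiplication-flavoured bound of $\mathcal{O}(n^3/\log n)$ processors comes from. What you have sketched instead is the Euler-tour plus auxiliary-graph method of Tarjan and Vishkin, and as an outline it is correct: the spanning tree, the preorder/subtree-size computation by prefix sums over the Euler tour, the $\mathrm{low}/\mathrm{high}$ values, and the two edge-insertion rules for $G'$ are all the standard ingredients, and the key lemma you correctly identify as the crux (two edges of $G$ lie on a common cycle iff they are connected in $G'$) is exactly the correctness theorem of that algorithm. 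Your route in fact proves something stronger than the stated proposition, since the Tarjan--Vishkin algorithm runs in $\mathcal{O}(\log n)$ time with $\mathcal{O}(m+n)$ processors, comfortably inside $\mathcal{O}(\log^2 n)$ time and $\mathcal{O}(n^3/\log n)$ processors. One concrete simplification would dissolve the bookkeeping worry you raise at the end: define the auxiliary graph on the $n-1$ \emph{tree} edges only (assigning each non-tree edge to the component of its associated tree edge in a postprocessing step), so that $G'$ has $n-1$ vertices and $\mathcal{O}(m)=\mathcal{O}(n^2)$ edges; then even the paper's own Proposition~\ref{prop:conjaja}, with its processor count quadratic in the vertex count, already fits within the claimed resources, and you do not need to appeal to an edge-count-sensitive connectivity routine. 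As written, your $G'$ has $\Theta(m)$ vertices, and invoking Proposition~\ref{prop:conjaja} on it naively would cost $\mathcal{O}(n^4)$ processors, so that adjustment (or the external routine you allude to) is genuinely needed to meet the stated bound.
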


\begin{proposition}[\cite{DBLP:journals/siamcomp/JaJaS82}]\label{prop:trijaja}
There is an algorithm that computes the tri-connected components of a graph in time $\mathcal{O}(\log^2n)$ with $\mathcal{O}(n{^4})$ processors.
\end{proposition}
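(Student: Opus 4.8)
The plan is to reduce the computation of the tri-connected components to a massively parallel collection of connectivity queries, and then to assemble the decomposition in the spirit of the classical Hopcroft--Tarjan / Tutte split operation. First, since every $3$-connected induced subgraph is in particular $2$-connected, the vertex set of each tri-connected component lies inside the vertex set of a single bi-connected component of $G$; so, by Proposition~\ref{prop:bijaja}, it suffices to treat each bi-connected component separately and in parallel, and from now on I assume that $G$ is bi-connected and has at most $n$ vertices. The combinatorial skeleton we need is the family of \emph{separation pairs}: the pairs $\{x,y\}$ such that $G\setminus\{x,y\}$ is disconnected, together with the partition of $G\setminus\{x,y\}$ into its connected sides.

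To obtain all separation pairs I would simply run, for each of the $\binom{n}{2}=\cO(n^2)$ candidate pairs $\{x,y\}$ and all of them simultaneously, the connected-components algorithm of Proposition~\ref{prop:conjaja} on $G\setminus\{x,y\}$. One such invocation runs in $\cO(\log^2 n)$ time with $\cO(n^2)$ processors, so all $\cO(n^2)$ of them together run in $\cO(\log^2 n)$ time with $\cO(n^4)$ processors, which is exactly where the $\cO(n^4)$ bound comes from. The output is, for every separation pair, the list of its sides.

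It then remains to convert this list into the tri-connected components. Here I would mimic the classical splitting procedure: each separation pair $\{x,y\}$ is used to cut $G$ into the subgraphs induced by $\{x,y\}$ together with each side, each augmented by a virtual edge $xy$; iterating exhausts all separation pairs and leaves only pieces that are bonds (parallel edges), triangles, or genuinely $3$-connected graphs, after which bonds are merged with bonds and triangles with triangles. Because all separation pairs and all of their sides are already known, their nesting structure can be resolved in $\cO(\log n)$ rounds of parallel contraction and pointer jumping (using Proposition~\ref{prop: prefix-sum} as a subroutine), producing the $\cO(n)$ pieces at once; the tri-connected components are then read off from the $3$-connected pieces, matching them against the definition of a maximal $3$-connected vertex set with a short additional argument. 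The whole assembly stays within $\cO(\log^2 n)$ time and $\cO(n^4)$ processors. The main obstacle is precisely this last step: carrying out the recursive split in parallel, proving the resulting pieces are correct, and untangling the possibly deeply nested separation-pair structure within the stated resource bounds; the full argument appears in \cite{DBLP:journals/siamcomp/JaJaS82}, where the assembly is organized around an open ear decomposition of $G$.
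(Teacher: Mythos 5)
This proposition is imported by the paper as a black-box result: the text contains no proof of it, only the citation to \cite{DBLP:journals/siamcomp/JaJaS82}, so there is no in-paper argument to compare yours against. Judged on its own terms, your sketch is a reasonable reconstruction of the cited algorithm's structure, and the resource accounting for the phase that actually determines the bound is right: testing all $\binom{n}{2}=\cO(n^2)$ candidate pairs $\{x,y\}$ simultaneously, each with the $\cO(n^2)$-processor, $\cO(\log^2 n)$-time connectivity algorithm of Proposition~\ref{prop:conjaja}, is exactly where the $\cO(n^4)$ processor count comes from, and the preliminary reduction to bi-connected components via Proposition~\ref{prop:bijaja} is sound.

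Two caveats. First, the step you yourself identify as the main obstacle --- carrying out the Tutte/Hopcroft--Tarjan splitting in parallel and resolving the nesting of separation pairs in $\cO(\log n)$ rounds --- is only gestured at and then deferred to the very reference whose content is being established; as written this is not a proof but an outline whose hard kernel is outsourced (and, as a minor attribution point, the organization around an open ear decomposition belongs to the later Miller--Ramachandran line of work rather than to Ja'Ja'--Simon, whose algorithm works directly from the separation pairs as in your second paragraph). Second, the paper's own definition of a tri-connected component (a \emph{maximal vertex set inducing a $3$-connected subgraph}) is not literally the same object as the split components produced by the splitting procedure, which carry virtual edges and need not be induced subgraphs; your closing remark that the former can be ``read off'' from the latter ``with a short additional argument'' is precisely the point that would need to be made explicit for the statement as phrased in this paper. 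Neither caveat is fatal --- the paper itself treats the result as a citation --- but they are the two places where your sketch falls short of a self-contained proof.
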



\subsubsection{Vertex level algorithm}

Given a rooted tree we are interested in computing the level {\it level}($v$) of each vertex $v$, which is the distance (number of edges) between $v$ and the root $r$. The following proposition shows that there is a fast-parallel algorithm that computes the level of every vertex of the graph.  


\begin{proposition}[\cite{JaJa:1992:IPA:133889}]\label{prop:vertex-level}
There is an algorithm that computes, on an input rooted tree $(T,r)$ the $level(v)$ of every vertex $v\in V(T)$ in time $\mathcal{O}(\log n)$ and using $\mathcal{O}(n) $ processors, where $n$ is the size of $T$.
\end{proposition}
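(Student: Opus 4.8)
The plan is to compute all levels simultaneously by the pointer-jumping (pointer-doubling) technique, which is the specialization of the Euler-tour method to the single function $level$. We may assume the rooted tree $(T,r)$ is given so that each non-root vertex knows its parent; if instead it is presented by undirected adjacency together with the distinguished root, the parent pointers are obtained within the same time and processor bounds by constructing an Euler tour of $T$ and ranking it (see the last paragraph). Set $p(v)$ to be the parent of $v$, with the convention $p(r)=r$ (a self-loop at the root), and introduce a counter $d(v)$ initialised to $d(r)=0$ and $d(v)=1$ for every $v\neq r$.

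The algorithm then performs $\lceil\log_2 n\rceil$ rounds. In each round every vertex $v$ acts as one processor: it reads the current values $d(p(v))$ and $p(p(v))$, and then updates $d(v)\leftarrow d(v)+d(p(v))$ and $p(v)\leftarrow p(p(v))$ (all reads are done before all writes). Each round uses $\cO(n)$ processors and $\cO(1)$ time, so the whole procedure runs in $\cO(\log n)$ time with $\cO(n)$ processors, as claimed.

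Correctness follows from the invariant: after $k$ rounds, for every vertex $v$ the pointer $p(v)$ is the ancestor of $v$ lying exactly $h_k(v):=\min(2^k,\,level(v))$ edges above $v$, and $d(v)=h_k(v)$. The base case $k=0$ is immediate from the initialisation. For the inductive step one checks that replacing $p(v)$ by $p(p(v))$ and adding $d(p(v))$ either doubles the height of the ancestor pointed to (when $2^k<level(v)$) or leaves the pointer at $r$ and adds $0$ (when it has already reached $r$, since $d(r)=0$ and $p(r)=r$), which is precisely the clamping encoded by the $\min$. Since $level(v)\le n-1<2^{\lceil\log_2 n\rceil}$, after the final round $p(v)=r$ and $d(v)=level(v)$ for all $v$.

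The only real obstacle is bookkeeping: writing the invariant argument cleanly, in particular handling the root's self-loop so that a vertex whose pointer has already reached $r$ stops changing, and making sure the concurrent reads of a common $d(w),p(w)$ are permitted by the model (they are on a CREW PRAM; an EREW implementation inserts an $\cO(\log n)$-time broadcast, which does not change the asymptotics). If one prefers to invoke Proposition~\ref{prop: prefix-sum} directly rather than re-derive pointer jumping, the work instead splits into: (i) building the Euler tour of $T$ from the adjacency lists, a local $\cO(1)$-time computation of arc successor pointers; and (ii) converting the resulting circular list into an array by list ranking, after which a weighted prefix sum with weights $+1$ on downward arcs and $-1$ on upward arcs is applied and the prefix value at the arc entering each vertex is read off as its level. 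Step (ii) is again $\cO(\log n)$ rounds of pointer jumping, so either route yields the stated bounds.
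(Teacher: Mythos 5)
The paper does not prove this proposition at all---it is quoted as a black box from J\'aJ\'a's textbook---so there is no in-paper argument to compare against; your pointer-jumping proof is correct (the invariant $d(v)=h_k(v)=\min(2^k,level(v))$ with the root self-loop acting as an absorbing state is exactly the right bookkeeping) and is essentially the standard Euler-tour/list-ranking argument the citation refers to. The one small imprecision is the parenthetical EREW remark: the concurrent reads of $d(p(v))$ and $p(p(v))$ recur in every round, so a per-round broadcast would cost $\cO(\log^2 n)$ overall rather than leaving the asymptotics unchanged---but this is immaterial, since your main argument already attains the stated $\cO(\log n)$ time and $\cO(n)$ processors on a CREW PRAM, and your alternative route via the Euler tour and weighted prefix sums is EREW in any case.
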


\subsubsection{All pairs shortest paths}

Given a graph $G$ of size $n$. Name $v_1, \dots, v_n$ the set of vertices of $G$. A matrix $B$ is called an \emph{All Pairs Shortest Paths matrix} if $B_{i,j}$ corresponds to the length of a shortest path from vertex $v_i$ to vertex $v_j$. The following proposition states that there is a fast-parallel algorithms computing an All Pairs Shortest Path matrix of an input graph $G$.

\begin{proposition}[\cite{JaJa:1992:IPA:133889}]\label{prop:shortest-paths}
There is an algorithm that computes all Pairs Shortest Paths matrix of a graph with $n$ vertices in time $\mathcal{O}(\log^2n)$ with $\mathcal{O}(n{^3}\log n) $ processors.
\end{proposition}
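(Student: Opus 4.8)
The plan is to use repeated squaring in the min-plus (tropical) semiring, which is the classical way to obtain this result. First I would encode the input graph $G$ as a weighted adjacency matrix $A$ with entries in $\mathbb{N}\cup\{\infty\}$: set $A_{i,i}=0$, set $A_{i,j}=1$ whenever $v_i$ and $v_j$ are adjacent, and $A_{i,j}=\infty$ otherwise (in an implementation, $\infty$ is replaced by any sentinel value $\geq n$, since no genuine shortest-path length in an $n$-vertex graph exceeds $n-1$). Equip the entry set with $a\oplus b=\min(a,b)$ and $a\odot b=a+b$, and let $\otimes$ denote the induced matrix product $(A\otimes B)_{i,j}=\min_{1\le k\le n}(A_{i,k}+B_{k,j})$. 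A straightforward induction on $t$ shows that the $(i,j)$ entry of the $t$-fold power $A^{\otimes t}$ equals the minimum length of a walk from $v_i$ to $v_j$ using at most $t$ edges. Since every shortest path in an $n$-vertex graph uses at most $n-1$ edges, $A^{\otimes(n-1)}$ is exactly the desired All Pairs Shortest Paths matrix, and moreover $A^{\otimes m}=A^{\otimes(n-1)}$ for every $m\ge n-1$, so the power stabilizes.

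Next I would compute $A^{\otimes 2^{\lceil \log_2 n\rceil}}$ by starting from $A$ and performing $\lceil\log_2 n\rceil$ successive min-plus squarings; by the stabilization just noted, the result equals $A^{\otimes(n-1)}$. It then remains to realize a single min-plus square of an $n\times n$ matrix within the stated bounds. For this I would assign a dedicated block of $n$ processors to each of the $n^2$ output entries, so $O(n^3)$ processors in total. The block for entry $(i,j)$ first computes the $n$ candidate values $B_{i,k}+B_{k,j}$ in parallel in constant time, and then reduces them with the associative operation $\min$ via a balanced binary tournament tree (equivalently, via the prefix-sum primitive of Proposition~\ref{prop: prefix-sum}), which runs in $O(\log n)$ time using $O(n)$ processors. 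Hence one squaring costs $O(\log n)$ time and $O(n^3)$ processors, and the $O(\log n)$ squarings (reusing the same processors across rounds) cost $O(\log^2 n)$ time and $O(n^3)\subseteq O(n^3\log n)$ processors altogether, which gives the claim.

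There is no substantial obstacle: the result is textbook. The only points that need a little care are (i) justifying the stabilization $A^{\otimes m}=A^{\otimes(n-1)}$ for $m\ge n-1$, so that rounding the exponent up to the next power of two is harmless; (ii) replacing $\infty$ by a finite sentinel bounded by $O(n)$ so that all arithmetic stays within a polynomial range; and (iii) checking that the per-entry minimum is computed within the $O(\log n)$-time, $O(n)$-processor budget, which is precisely what the parallel reduction (Proposition~\ref{prop: prefix-sum}) supplies. Assembling these three ingredients yields the claimed $O(\log^2 n)$-time, $O(n^3\log n)$-processor PRAM algorithm.
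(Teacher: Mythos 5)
Your proof is correct and is essentially the standard min-plus repeated-squaring argument from the reference \cite{JaJa:1992:IPA:133889} that the paper cites for this proposition without giving its own proof; your processor count of $\mathcal{O}(n^3)$ even improves slightly on the stated $\mathcal{O}(n^3\log n)$ bound by reusing processors across the $\mathcal{O}(\log n)$ squaring rounds.
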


\section{Triangular Grid}%

 We will start our study over the regular grid where each cell has three neighbors (see Figure \ref{fig: grid3}). In this topology, the sixteen FTCA are reduced to eight non-equivalent, considering the inactive state as a quiescent state. According to our classifications, the eight FTCAs in the triangular grid are grouped as follows:

\begin{itemize}
\item Simple rules: $\phi$, $123$ and $3$.
\item Topological rules: $2$ and $23$.
\item Algebraic rule: $12$.
\item Fractal growing rules: $1$ and $13$.
\end{itemize}

It is easy to check that Simple rules are in \NC. For rule $\phi$, we note that every configuration is a fixed point (then \stability for this rule is trivial). For rule $123$, no site is stable unless the configuration consists in every cell inactive.  We can check in time $\cO(\log n)$ and $\cO(n^2)$ processors whether a configuration contains an active cell using a prefix-sum algorithm (sum the states of all cells, and  then decide if the result is different than $0$).  Finally, for rule $3$ we notice that all dynamics reach a fixed point after one step. Therefore, we check if the initial neighborhood of site $u$ makes it active in the first step (this can be decided in $\cO(\log n)$ time in a sequential machine). 


\subsection{Topological Rules}

We say that rules $2$ and $23$ are \emph{topological} because, as we will see, we can characterize the stable sites according to some topological properties of the initial configurations.

As we mentioned before, rule $23$ is a particular case of the freezing majority vote CA (that we called $Maj$). In \cite{goles:hal-00914603} the authors show that \stability for Maj is in {\NC} over any graph with degree at most $4$. This result is based on a characterization of the set of stable cells, that can be verified by a fast-parallel algorithm.
Thus we can apply this result to solve $\stability$ for rule $23$, considering the triangular grid as a graph of degree $3$. 
Then we have the next theorem:

\begin{theorem}[\cite{goles:hal-00914603}]\label{thm: SyncStability 23}
  There is a fast-parallel algorithm that solves \stability for 23 in time $\cO(\log^2 n)$ and $\cO(n^4)$  processors. 
  Then \stability for 23 is in \NC. 
\end{theorem}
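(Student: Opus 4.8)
The plan is to follow the strategy of \cite{goles:hal-00914603}: first give a purely combinatorial (static) description of the stable cells of rule $23$, and then check that description with the fast-parallel subroutines of Section 2.

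\emph{Characterizing the stable cells.} Fix a finite configuration $x$, let $c=c(x)$, and regard the torus it lives on as a $3$-regular graph $G$ on $N=2n^2$ cells. Let $I=\{v:c_v=0\}$ be the set of initially inactive cells and let $G[I]$ be the induced subgraph. Since rule $23$ is freezing and its activation map is monotone, the set of cells that ever become active is the least fixed point of ``activate every inactive cell with at least $2$ active neighbors'', so the set $S$ of stable cells is the \emph{largest} $S\subseteq I$ such that every $v\in S$ has at most one neighbor outside $S$; by $3$-regularity this says exactly that every $v\in S$ has at least two neighbors inside $S$, i.e.\ $S$ is the $2$-core of $G[I]$ (the unique maximal subgraph of minimum degree $\ge 2$). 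I would prove this by the two obvious inclusions: if $v$ is stable it never sees two active neighbors, so in the limiting configuration it retains at least two inactive — hence stable — neighbors, giving $S\subseteq 2\text{-core}$; conversely any $H\subseteq I$ of minimum degree $\ge 2$ is stable, as seen by looking at the first time (if any) a vertex of $H$ would switch on and noting it would need an already-active neighbor inside $H$, a contradiction with minimality.

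\emph{Computing the $2$-core in \NC.} It remains to decide whether the queried cell $u$ lies in the $2$-core of $G[I]$. The naive ``repeatedly delete degree-$\le 1$ vertices'' is correct but may need $\Theta(N)$ rounds, so the real content is to recover the $2$-core in poly-logarithmic time. I would compute the connected components of $G[I]$ (Proposition~\ref{prop:conjaja}) together with a spanning forest $T$ of $G[I]$, discard every tree component (empty $2$-core), and in each remaining component mark the set $C$ of vertices that lie on the fundamental cycle of some non-tree edge (equivalently, on the $T$-path between the endpoints of such an edge). The key lemma is that the $2$-core of $G[I]$ equals, component by component, the minimal subtree of $T$ spanning $C$; hence a vertex is stable iff it lies on a $T$-path between two vertices of $C$ in its own component. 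Rooting $T$ and using level/ancestor/LCA information — provided by the vertex-level algorithm (Proposition~\ref{prop:vertex-level}) together with prefix sums (Proposition~\ref{prop: prefix-sum}) — this membership test is a routine tree computation.

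\emph{Resources and the main obstacle.} All subroutines run in time $\cO(\log^2 N)=\cO(\log^2 n)$, and the bottleneck is the connected-components / spanning-forest step, which uses $\cO(N^2)=\cO(n^4)$ processors; the tree subroutines use only $\cO(N)$. Thus the whole algorithm fits in the stated bounds, and since \NC\ is closed under composing a constant number of \NC\ subroutines, \stability for $23$ is in \NC. The part needing the most care is the lemma identifying the $2$-core with the minimal subtree of a spanning forest spanning the fundamental cycles: one must separately check vertices sitting on the $T$-path between two \emph{distinct} cycles, vertices inside a single ``thick'' cyclic piece, and pendant subtrees, and confirm that precisely the non-$2$-core vertices are discarded. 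Staying with connected components plus tree algorithms, rather than computing biconnected components as in Proposition~\ref{prop:bijaja}, is also what keeps the processor count at $\cO(n^4)$ instead of $\cO(n^6)$. Finally, since this reproves the cited result \cite{goles:hal-00914603} restricted to the (degree-$3$) triangular grid, the theorem follows.
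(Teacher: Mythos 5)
Your proposal is correct, and it reaches the theorem by a genuinely more self-contained route than the paper does. The paper proves this statement essentially by citation: it invokes Proposition~\ref{prop:StableMaj} from \cite{goles:hal-00914603}, which already packages both the characterization of stable cells (cells lying on a cycle of $G[0]$, or on a path whose endpoints lie on cycles of $G[0]$) and a fast-parallel verification procedure, and then merely observes that the triangular torus is a graph of degree $3\le 4$. You instead re-derive the characterization from scratch, phrased as ``the stable set is the $2$-core of $G[I]$'', which is equivalent to the cited cycles/paths-between-cycles description (a vertex of the $2$-core not on any cycle is a cut vertex separating two cycles, hence lies on a path joining them, and conversely such unions have minimum degree $\ge 2$), and your monotonicity/first-activation argument for the two inclusions is sound. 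Your algorithmic key lemma --- that per component the $2$-core coincides, as a vertex set, with the minimal subtree of a spanning tree $T$ spanning the set $C$ of vertices on fundamental cycles --- is also correct: $C$ is exactly the set of vertices lying on some cycle (every cycle is a symmetric difference of fundamental cycles), one inclusion follows because cycle--$T$-path--cycle unions have minimum degree $\ge 2$, and the other because a non-cycle vertex of the $2$-core separates two cycles in the whole component and therefore lies on the $T$-path between them. What your route buys is an explicit \NC algorithm built only from connected components plus tree computations (Euler-tour/LCA-type bookkeeping via Propositions~\ref{prop:vertex-level} and~\ref{prop: prefix-sum}), keeping the bottleneck at the $\cO(N^2)=\cO(n^4)$-processor connectivity step rather than relying on the heavier biconnectivity machinery; what the paper's route buys is brevity and generality, since Proposition~\ref{prop:StableMaj} applies to all graphs of degree at most $4$ and is reused later for rules $2$, $3$ and $34$. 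Two small housekeeping points if you write this up: Proposition~\ref{prop:conjaja} as quoted returns only the components, so you should note that the standard \NC connectivity algorithms also output a spanning forest within the same bounds, and the membership test in the Steiner subtree needs Euler-tour/subtree-count and LCA computations, which go slightly beyond the literal statement of the vertex-level proposition but are routine prefix-sum constructions.
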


\begin{figure}
  \centering
  \begin{subfigure}[t]{0.475\textwidth}
    \centering
      \includegraphics[width=.95\textwidth]{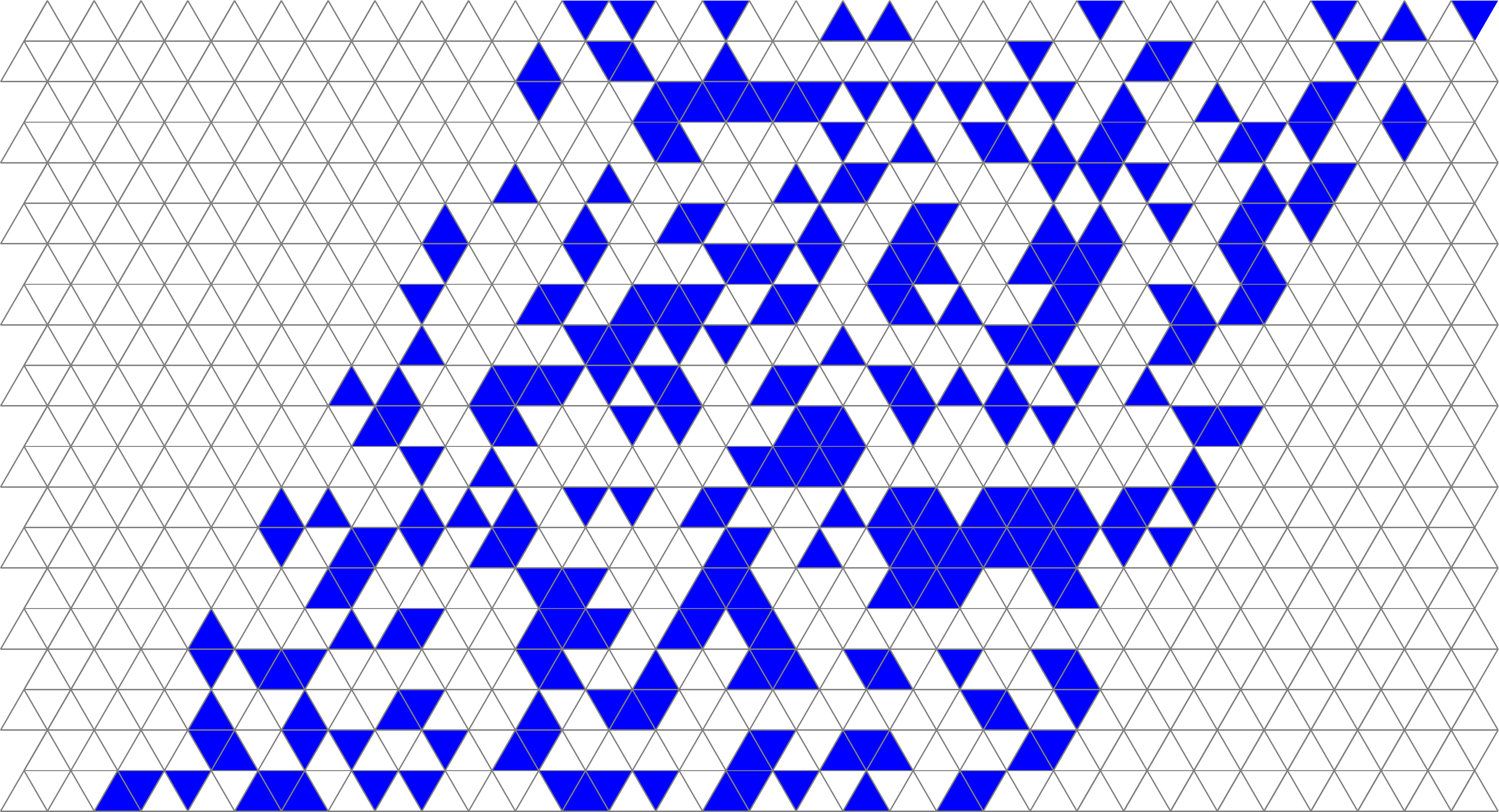}  
      \caption{Initial random configuration.}
  \end{subfigure}%
  \hfill  
  \begin{subfigure}[t]{0.475\textwidth}
    \centering
      \includegraphics[width=.95\textwidth]{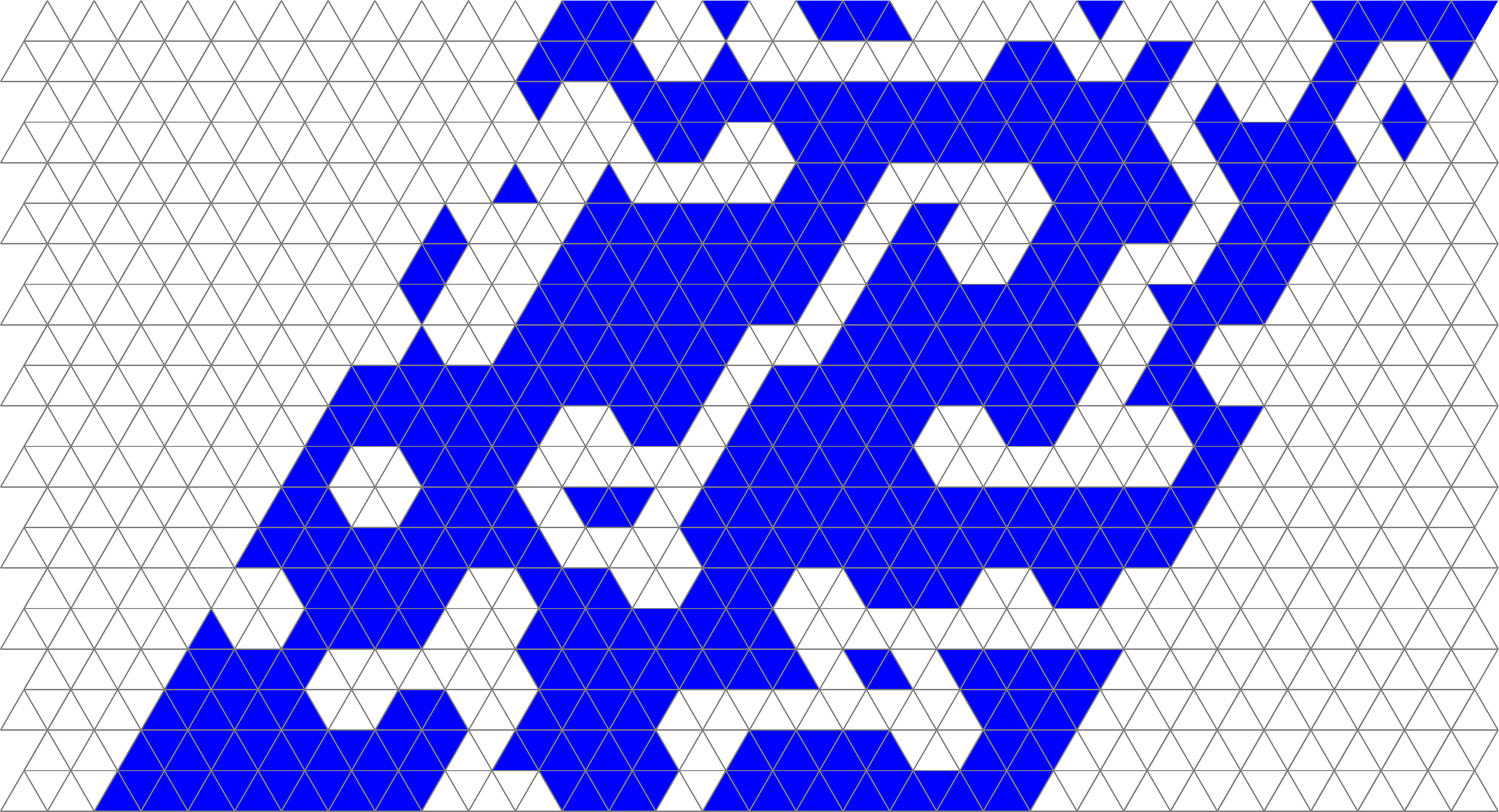}  
      \caption{Time 9 (fixed point).}
  \end{subfigure}%
  \caption{Example of fixed point for the rule $23$. The cells in state 0 in the fixed point are stable cells.}     
\end{figure} 

For the sake of completeness, we give the main ideas used to prove Theorem~\ref{thm: SyncStability 23}. The main idea is a characterization of the set of stable sites. 

\begin{proposition}[\cite{goles:hal-00914603}]\label{prop:StableMaj}
Let $Maj$ be the freezing majority vote CA defined over a graph $G$ of degree at most $4$. Let $c$ be a configuration of $G$, and let $G[0]$ be the subgraph of $G$ induced by the vertices (cells) which are inactive according to $c$. 

An inactive vertex $u$ is stable if and only if,
\begin{itemize}
\item[(i)] $u$ belongs to a cycle in $G[0]$, or 
\item[(ii)] $u$ belongs to a path $P$ in $G[0]$ where both endpoints of $P$ are contained in cycles in $G[0]$. 
\end{itemize}
Moreover, there is a fast-parallel algorithm that checks conditions (i) and (ii) in time $\cO(\log^2 n)$ using $\cO(4^2)$ processors.
\end{proposition}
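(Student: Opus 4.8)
I would reduce the dynamical question to a graph-theoretic property of $G[0]$, recognise that property as membership in the $2$-core of $G[0]$ (which is exactly what (i)--(ii) describe), and test that membership in parallel by reducing it to ``same biconnected component''.

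\emph{Step 1 (a closed-set characterisation of the stable cells).} Because $Maj$ is freezing, the set of inactive cells only shrinks along the orbit of $c$, so a cell is stable if and only if it is still inactive in the fixed point $c^{\ast}$ reached from $c$ (which exists, since the dynamics converge); write $S$ for this set. I would first prove that $S$ is the largest set $T\subseteq V(G[0])$ such that every $u\in T$ has at least $\lceil\deg_G(u)/2\rceil$ neighbours in $T$ (so at least two, on the triangular grid). That every such $T$ is contained in $S$ follows by a first-violation argument: if some cell of $T$ ever became active, take the first step $t$ at which this happens and a cell $u\in T$ becoming active at step $t$; at step $t-1$ all of $T$ is still inactive, so $u$ has at least $|N(u)\cap T|\ge\lceil\deg_G(u)/2\rceil$ inactive neighbours, hence at most $\lfloor\deg_G(u)/2\rfloor$ active ones, so the active cells do not strictly outnumber the inactive ones and $u$ cannot become active --- a contradiction. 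Conversely, $c^{\ast}$ has inactive set exactly $S$; since it is a fixed point, each $u\in S$ fails to become active under $c^{\ast}$, and as the inactive neighbours of $u$ under $c^{\ast}$ are precisely its neighbours in $S$, this forces $u$ to have at least $\lceil\deg_G(u)/2\rceil$ neighbours in $S$, so $S$ itself has the property. Since the union of two sets with the property again has it, the largest one exists and equals $S$.

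\emph{Step 2 (from the closed set to cycles and paths).} On the triangular grid every cell has degree $3$, so the condition of Step 1 becomes: every vertex of $T$ has at least two neighbours inside $T$, i.e.\ the subgraph of $G[0]$ induced by $T$ has minimum degree $\ge 2$. Hence $S$ is the $2$-core of $G[0]$, the maximal subgraph of minimum degree $\ge 2$, equivalently what remains after iteratively deleting vertices of degree $\le 1$. (The same holds on the square grid, where cells have degree $4$; for a general host graph of degree at most $4$ one uses the analogous $\lceil\deg/2\rceil$-core.) The heart of the proof is then the graph-theoretic fact that, for any graph $H$, a vertex lies in the $2$-core of $H$ if and only if it lies on a cycle of $H$, or on a path of $H$ whose two endpoints lie on cycles of $H$. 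The ``if'' part is immediate, since a cycle --- and the union of two cycles with a path joining them --- already has minimum degree $\ge 2$ and so survives in the core. For ``only if'', let $v$ be in the $2$-core $H'$ but on no cycle of $H$ (equivalently, of $H'$); then every edge of $H'$ at $v$ is a bridge, and $v$ is incident to at least two of them because $H'$ has minimum degree $\ge 2$. Following two distinct bridges out of $v$, finiteness together with the absence of degree-$\le 1$ vertices forces each walk eventually to enter a maximal $2$-edge-connected block, i.e.\ a cycle; the two cycles reached are distinct, for otherwise $v$ would lie on a cycle. Concatenating the two walks exhibits $v$ on a path between two cycles, yielding (i)--(ii).

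\emph{Step 3 (the fast-parallel algorithm).} To test (i)--(ii) in \NC without running the inherently sequential peeling, I would compute the biconnected components of $G[0]$ by Proposition~\ref{prop:bijaja}: a vertex lies on a cycle of $G[0]$ exactly when its biconnected component contains at least two edges, which identifies the set $K$ of cycle vertices. Next, form $G[0]^{+}$ from $G[0]$ by adding one new vertex $z$ adjacent to every vertex of $K$, and compute its biconnected components (again Proposition~\ref{prop:bijaja}). Then a cell $u$ is stable if and only if $u$ and $z$ lie in the same biconnected component of $G[0]^{+}$: a cycle through $u$ and $z$ meets $z$ only once, hence enters and leaves it through two distinct vertices $k_1,k_2\in K$, and the remainder of that cycle is a path of $G[0]$ from $k_1$ to $k_2$ through $u$, so (i) or (ii) holds; conversely, a cycle of $G[0]$ through $u$, or a path of $G[0]$ through $u$ between two vertices of $K$, closes up through $z$ into a cycle containing $u$ and $z$. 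The component computations run in $\cO(\log^2 n)$ time with a polynomial number of processors by Propositions~\ref{prop:conjaja} and~\ref{prop:bijaja}.

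\emph{Expected main obstacle.} Step 1 is routine once freezingness is exploited. The substantive work is Step 2 --- proving that the survivors of the peeling are precisely the ``cycles plus paths between cycles'' and carrying out the bridge/block bookkeeping carefully --- together with the observation, used in Step 3, that this structural description can be tested by biconnected-component computations, which is exactly what places \stability for rule $23$ in \NC.
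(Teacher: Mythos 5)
The paper itself does not prove Proposition~\ref{prop:StableMaj}: it is imported verbatim from \cite{goles:hal-00914603} and only invoked, so there is no in-paper argument to compare yours against. Judged on its own, your proof is essentially correct for the situations in which the paper actually uses the proposition (the $3$-regular triangular torus for rule $23$ and the $4$-regular square torus for rule $34$), and it is a reasonable reconstruction of the cited result. The three-step structure is sound: Step 1 (stable set $=$ largest set in which every vertex keeps at least $\lceil \deg/2\rceil$ inactive neighbours, via a first-violation argument and the fixed point $c^{\ast}$) is correct; Step 2's identification of this set with the $2$-core of $G[0]$ and the bridge/block argument showing that $2$-core vertices are exactly those on cycles or on paths between cycles is the real content and is right, modulo one piece of bookkeeping you should make explicit: besides the two cycles being distinct, you need the two walks out of $u$ to be internally disjoint (otherwise their concatenation is not a path); this follows from the same observation you already use, namely that a closed walk made of bridges would put a bridge on a cycle. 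Step 3's apex-vertex trick (add $z$ adjacent to the set $K$ of cycle vertices and test whether $u$ and $z$ share a biconnected component) is a nice, correct way to avoid the sequential peeling, and the cycle-through-$z$ argument in both directions checks out; two wording slips are that ``its biconnected component'' should read ``some biconnected component containing it'' (cut vertices lie in several blocks), and the paper's ``$\cO(4^2)$ processors'' is evidently a typo, so your ``polynomial number of processors via Proposition~\ref{prop:bijaja}'' is the right reading.

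One genuine caveat concerns the generality of the statement. For a host graph of maximum degree $4$ that is not $3$- or $4$-regular, the staying-inactive threshold at a vertex of degree $1$ or $2$ is $1$, not $2$, so the stable set is the $\lceil \deg/2\rceil$-core you mention and \emph{not} the $2$-core, and then conditions (i)--(ii) no longer characterize it (e.g.\ a pendant inactive vertex attached to an all-inactive triangle is stable but lies on no cycle and on no path between cycles). Your parenthetical remark about the $\lceil \deg/2\rceil$-core does not reconnect to (i)--(ii), so as written your Step 2 only establishes the proposition when every vertex has degree $3$ or $4$ --- which is all the paper needs, and arguably all the statement can mean, but it is worth saying so explicitly rather than suggesting the general case follows analogously.
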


Therefore, the proof of Theorem~\ref{thm: SyncStability 23} consists in (1) notice that a finite configuration on the triangular grid,  seen as a torus, is a graph of degree 3 (then in particular is a graph of degree at most $4$); (2) use the algorithm given in Proposition \ref{prop:StableMaj} to check whether the given site $u$ is stable.

 We will use the previous result to solve the stability problem for rule $2$.

\begin{theorem}\label{thm: SyncStability 2}
  \stability is in {\NC} for rule $2$.
\end{theorem}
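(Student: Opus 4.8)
The plan is to read off the rule-$2$ dynamics on the graph of inactive cells and show that the stable set is the $2$-core of that graph together with the centres of its tree components, both of which are computable in \NC.

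First I would describe one step of rule $2$ restricted to the inactive cells. Let $G[0]^{(t)}$ be the subgraph of the torus induced by the cells still inactive after $t$ steps. An inactive cell $v$ becomes active at step $t+1$ precisely when exactly $2$ of its $3$ neighbours are active, i.e. when it has exactly one inactive neighbour, i.e. $\deg_{G[0]^{(t)}}(v)=1$; cells of degree $0$, $2$ or $3$ stay inactive. So one step of rule $2$ deletes, in parallel, every vertex of degree \emph{exactly} $1$, and $u$ is stable iff $u\in\bigcap_{t\ge 0}G[0]^{(t)}$. I would contrast this with rule $23$, which is $Maj$ on the degree-$3$ grid: its step deletes every vertex of degree \emph{at most} $1$, so the rule-$23$ process is exactly the computation of the $2$-core of $H:=G[0]^{(0)}$, and by Proposition~\ref{prop:StableMaj} its output — the cells on a cycle of $H$, or on a path of $H$ joining two such cycles — is computable in \NC. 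Since for a fixed current graph ``degree exactly $1$'' is contained in ``degree at most $1$'', an easy induction gives $G[0]^{(t)}_{\,\mathrm{rule}\,2}\supseteq G[0]^{(t)}_{\,\mathrm{rule}\,23}$, so every cell stable for rule $23$ is stable for rule $2$; the point is to find the extra stable cells.

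Next I would prove the characterization, component by component in $H$. If the component $K$ of $u$ contains a cycle, I claim $\bigcap_t G[0]^{(t)}$ restricted to $K$ is exactly the $2$-core of $K$ (equivalently, in this case $u$ is stable for rule $2$ iff it is stable for rule $23$): the vertices of $K$ outside the $2$-core induce a forest, each of whose trees is attached to the $2$-core at a single vertex $w$ — an attachment at two distinct $2$-core vertices would yield a cycle lying outside the $2$-core — and since $w$ keeps at least two $2$-core neighbours (hence degree $\ge 2$) at every step, each pendant tree is stripped away one layer per step, no vertex of $K$ ever becomes isolated, and the $2$-core itself is never touched. If instead $K$ is a tree, the process is the classical parallel leaf-stripping, which reduces $K$ to its centre: either a single vertex, which becomes isolated and survives forever, or a pair of adjacent vertices, which form an edge at the last step and are both deleted. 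Hence inside $K$ the rule-$2$-stable set equals the centre of $K$ if $K$ has a unique centre, and is empty otherwise. Combining the two cases: $u$ is stable iff ($u$'s component contains a cycle and $u$ is stable for rule $23$) or ($u$'s component is a tree and $u$ is its unique centre).

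Finally I would assemble the \NC algorithm. On input $(x,u)$ build $H=G[0]$ on the torus ($N=\cO(n^2)$ cells); compute its connected components with Proposition~\ref{prop:conjaja}; using a prefix-sum (Proposition~\ref{prop: prefix-sum}) count the vertices and edges of the component $K\ni u$ to decide whether $K$ is a tree or contains a cycle. In the cyclic case, run the \NC procedure of Proposition~\ref{prop:StableMaj} / Theorem~\ref{thm: SyncStability 23} and report whether $u$ is stable for rule $23$. In the tree case, compute an all-pairs shortest-path matrix of $K$ with Proposition~\ref{prop:shortest-paths}, read off the eccentricity of each vertex of $K$ and the radius of $K$, and answer ``stable'' iff $u$ is the unique vertex attaining the radius. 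Each subroutine runs in poly-logarithmic time with polynomially many processors, so \stability for rule $2$ is in \NC. The main obstacle is the characterization step: one must argue carefully that on a component with a cycle the pendant trees of the $2$-core erode completely and without ever creating an isolated vertex (so the survivors are exactly the $2$-core), and that on a tree component parallel leaf-stripping terminates exactly at the centre with the stated one-versus-two dichotomy. Once this is in hand, the algorithm is just a composition of the already-cited \NC subroutines.
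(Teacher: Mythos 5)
Your proof is correct, and it reaches the paper's result by a route that is the same in skeleton but different in its key lemma. Like the paper, you view rule $2$ as deletion of the degree-exactly-$1$ vertices of the inactive subgraph, defer to the rule-$23$ algorithm (Theorem~\ref{thm: SyncStability 23}, Proposition~\ref{prop:StableMaj}) when a cycle is present, and analyse leaf-stripping on tree components. The paper, however, argues locally at $u$: it shows that if $u$ is $2$-stable but not $23$-stable then $G[0,u]$ is a tree, proves a one-layer-per-step peeling claim, and characterizes stability by the timing condition $d_1=d_2\geq d_3$ on the depths of the subtrees hanging from $u$, checked with the vertex-level subroutine. You instead prove a global description of the whole stable set: in a component containing a cycle it is exactly the $2$-core (so $2$-stability and $23$-stability coincide there), and in a tree component it is the unique centre if one exists and empty otherwise. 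The two criteria are equivalent, since the two deepest branches at $u$ have equal depth exactly when $u$ is the unique centre of its tree component; your version buys a cleaner fixed-point description and avoids the timing argument, at the price of the heavier all-pairs-shortest-paths subroutine (Proposition~\ref{prop:shortest-paths}) where the paper uses bi-connected components and vertex levels, and both stay comfortably in \NC. Two small points to tighten: when arguing that each pendant tree meets the $2$-core in a single attachment edge, the same cycle argument must also exclude two attachment edges into the \emph{same} core vertex; and the erosion of pendant trees without ever creating an isolated vertex (which you rightly flag as the delicate step) is completed by observing that the surviving pendant vertices always form an ancestor-closed subtree towards the core anchor, which is the analogue of the induction the paper carries out in its Claim 3.
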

 
\begin{proof}
When we compare rule $2$ and rule $23$, we noticed that they exhibit quite similar dynamics. Indeed, a cell $u$ which is stable for rule $23$ is also stable for rule $2$.  Therefore, to solve \stability for 2 on input configuration $x$ and cell $u$, we can first solve \stability for 23 on those inputs using the algorithm given by Theorem \ref{thm: SyncStability 23}. When the answer of \stability for 23 is \emph{Accept}, we know that \stability for 2 will have the same answer. In the following, we focus in the case where the answer of \stability for 23 is \emph{Reject}, i.e. $u$ is not stable on configuration $x$ in the dynamics of rule $23$.

Suppose that $u$ is stable for rule $2$, but it is not stable for rule $23$. Let $t$ be the first time-step where $u$ becomes active in the dynamics of rule $23$. Note that, since $u$ is stable for rule $2$, necessarily in step $t-1$ the three neighbors of $u$ are active.  Moreover, at least two of them simultaneously became active in time $t-1$. 

Let now $G$ be the graph representing the cells of the triangular grid covered by configuration $x$. Let $G[0]$ be the subgraph of $G$ induced by the initially inactive cells, and let $G[0,u]$ be the connected component of $G[0]$ containing cell $u$. We claim that $G[0,u]$, in the dynamics of rule $23$, every vertex (cell) in $G[0,u]$ must become active before $u$, i.e. in a time-step strictly smaller than $t$. \\

{\bf Claim 1:} Every vertex of $G[0,u]$ is active after $t$ applications of rule $23$.\\

Indeed, suppose that there exists a vertex (cell) $w$ in $G[0,u]$ that becomes active in a time-step greater than $t$. Call $P$ a shortest path in $G[0, u]$ that connects $u$ and $w$, and let $u^*$ be the neighbor of $u$ contained in $P$. Note that except the endpoints, all the vertices (cells) in $P$ have at least two neighbors in $P$, which are inactive. Moreover, both endpoints of $P$ are inactive at time $t$. Therefore, all the vertices in $P$ will be inactive in time $t$. This contradicts the fact that the three neighbors of $u$ become active before $u$.  \\

{\bf Claim 2:} $G[0,u]$ is a tree.\\

Indeed, $G[0,v]$ is connected, since it is defined as a connected component of $G[0]$ containing $u$. On the other hand, suppose that $G[0,u]$ contains a cycle $C$. From Proposition \ref{prop:StableMaj}, we know that all the cells in $C$ are stable, which contradicts Claim 1. \\

Call $T_u$ the tree $G[0,u]$ rooted on $u$. Let $d$ be the depth of $T_u$, i.e. longest path between $u$ and a leaf of $T_u$. \\

{\bf Claim 3:} Every vertex of $G[0,u]$, except $u$, is active after $d$ applications of rule $2$. \\

Notice that necessarily a leaf of $T_u$ has two active neighbors (because they are outside $G[0,u]$) and one inactive neighbor (its parent in $T_u$). Therefore, in one application of rule $2$, all the leafs will become active. We will reason by induction on $d$. Suppose that $d=1$. Then all vertices $w$ of $T_u$ except $u$ are leafs, so the claim is true. Suppose now that the claim is true for all trees of depth smaller or equal than $d$, but $T_u$ is a tree of depth $d+1$. We notice in one step the leafs are the only vertices of $T_u$ that become active (every other vertex has two inactive neighbors). Then, after one step, the inactive sites of $T_u$ induce a tree $T_u'$ of depth $d$. By induction hypothesis, all the cells in $T'_u$, except $u$, become active after $d$ applications of rule $2$.  We deduce the claim. 

Let $u_1, u_2, u_3$ be the three neighbors of $u$. For $i \in \{1,2,3\}$, call $T_{u_i}$ the subtree of $T_u$ rooted at $u_i$, obtained taking all the descendants of $u_i$ in $T_u$. Call $d_i$ the depth of $T_{u_i}$. Without loss of generality, $d_1 \geq d_2 \geq d_3$.\\

{\bf Claim 4:} $u$ is stable for the dynamics of rule $2$ but not for the dynamics of rule $23$, if and only if $d_1 = d_2 \geq d_3$.\\

Recall that $u$ is stable for the dynamics of rule $2$ but not for the dynamics of rule $23$ if and only if $u$ has three active neighbors at time-step $t$, and at least two of them become active at time $t-1$. The claim follows from the application of Claim 3 to trees $T_{u_1}, T_{u_2}$ and $T_{u_3}$.

We deduce the following fast-parallel algorithm solving \stability for 2: Let $x$ be the input configuration and $u$ the cell that we want to decide stability. First, use the fast-parallel algorithm given by Theorem~\ref{thm: SyncStability 23} to decide if $u$ is stable for the dynamics of rule $23$ on configuration $x$. If the answer is affirmative, then we decide that $(x,u)$ is a \emph{Accept}-instance of \stability for 2. If the answer is negative, the algorithm looks for cycles in $G[0,u]$. If there is a cycle, then the algorithm \emph{Rejects}, because Claim 2 implies that $u$ cannot be stable for rule $2$. If $G[0,u]$ is a tree, then the algorithm computes in parallel the depth $d_v$ of the subtrees $T_{v}$, for each $v \in N(u)$. Finally, the algorithm accepts if the conditions of Claim 4 are satisfied, and otherwise rejects.  The steps of the algorithm are represented in Algorithm \ref{alg: SyncStability 2}.

\begin{algorithm}[h]
\caption{Solving \stability 2}\label{alg: SyncStability 2}
\begin{algorithmic}[1]
\REQUIRE $x$ a finite configuration of dimensions $n \times n$ and $u$ a cell.

\IF{the answer of $\stability$ for rule $23$ is \emph{Accept} on input $(x,u)$}
  \RETURN \emph{Accept}
\ELSE
\STATE Compute $G[0,u]$
\STATE Compute $C$ the set of cycles of $G[0,u]$
\IF{ $C \neq \emptyset$}
\RETURN  \emph{Reject}
\ELSE
\FORALLP{ $v \in N(u)$}
  \STATE Compute $d_v$ the depth of $T_v$
\ENDFORALLP
\IF{$\exists a,b,c\in (N(u)): d_a = d_b \geq d_c$}
    \RETURN \emph{Accept}
 \ELSE 
 \RETURN \emph{Reject}
\ENDIF
\ENDIF
\ENDIF
\end{algorithmic}
\end{algorithm}

Let $N=n^{2}$ the size of the input. Algorithm \ref{alg: SyncStability 2} runs in time $\cO(\log^2 N)$ using $\mathcal{O}(N{^3}/\log N) $ processors. 
Indeed, the condition of line 1 can be checked in time $\cO(\log^2 N)$ using $\cO(N^2)$ processors according the algorithm of Theorem \ref{thm: SyncStability 23}. 
Step 4 can be done in time $\cO(\log^2 N)$ using $\cO(N^2)$ processors using a connected components algorithm given in \cite{JaJa:1992:IPA:133889}. 
Step 5 an be done in time $\cO(\log^2 N)$ using $\mathcal{O}(N{^3}/\log N) $ processors using a bi-connected components algorithm given in \cite{JaJa:1992:IPA:133889}. 
Step 10 can be solved in time $\mathcal{O}(\log N)$ using $\mathcal{O}(N) $  processors using a vertex level algorithm given in \cite{JaJa:1992:IPA:133889}. 
Finally, Step 12 can be done in $\cO(\log N)$ time in a sequential machine. 
\end{proof}

\subsection{Algebraic Rule}

We now continue with the study of rule $12$. We say that this rule is \emph{algebraic} because, as we will see, we can speed-up its dynamics using some of its algebraic properties. This speed-up will provide an algorithm that decides the stability of a cell much faster than simple simulation. In other words, we will show that $\stability$ for rule $12$ is in \NC. 

Let $x$ be a finite configuration on the triangular grid, $u$ a cell. Let $v$ be a neighbor of $u$. We define a \emph{semi-plane} $S_v$ as a partition of the triangular grid in two parts, cut by the edge of the triangle that share cell $u$ and $v$, as shown in Figure~\ref{fig: tri semi-plane}.

\begin{figure}[h]
        \centering
\includegraphics[width=.5\textwidth]{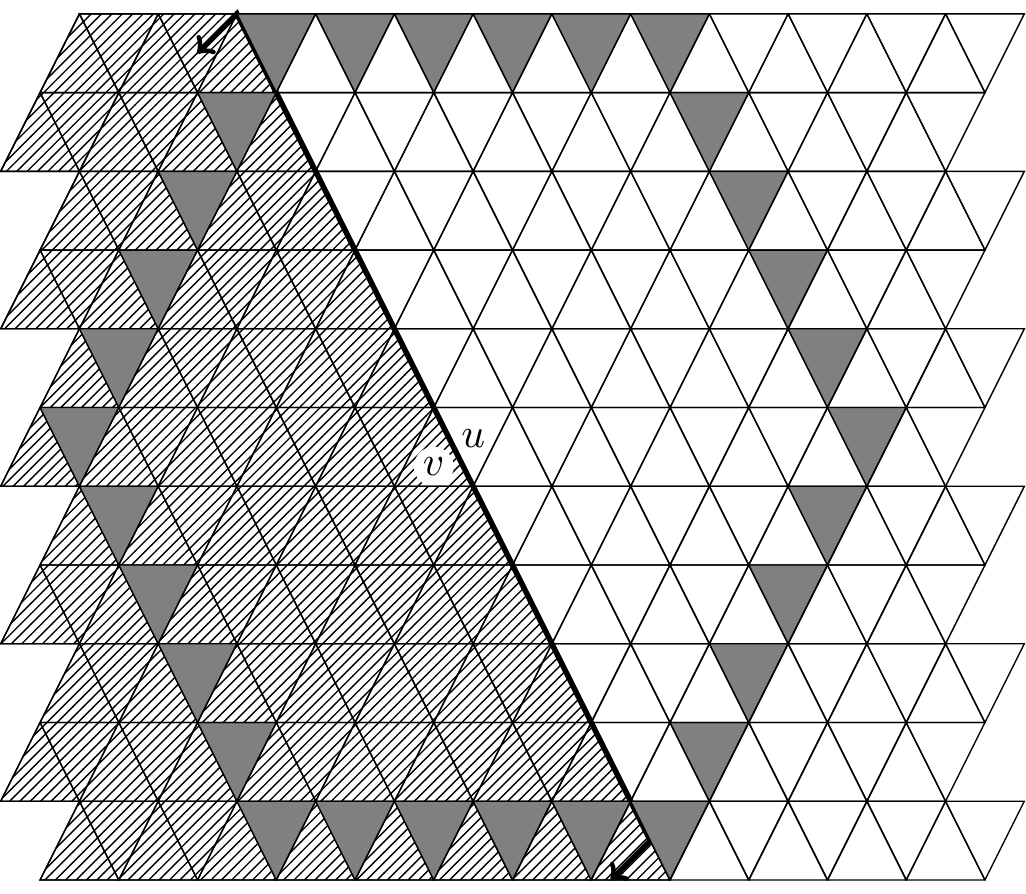}
\caption{Triangular grid divided in semi-planes according to $u$ and $v$. The hatched pattern represent the semi-plane $S_v$
Gray cells are at the same distance from $u$. }
\label{fig: tri semi-plane}
\end{figure}



\begin{lemma}\label{lem: spread}
Let $d\geq 2$ be the distance from $u$ to the nearest active cell. Then the distance to the nearest cell to $u$ in $F(c)$ is $d-1$
\end{lemma}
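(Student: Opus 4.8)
The plan is to analyze rule $12$ on the triangular grid and track how the "wave" of active cells propagates toward $u$. Recall that for rule $12$, an inactive cell becomes active exactly when the sum of its active neighbors is $1$ or $2$ (but not $0$ or $3$). The key observation is that an inactive cell with at least one active neighbor becomes active in the next step \emph{unless} all three of its neighbors are active, in which case it stays inactive. So the only way for an inactive cell at distance $d$ from the nearest active cell to remain inactive after one step is if it is "surrounded" on all sides — but that would force some neighbor to be at distance $d-1 < d$, contradicting minimality when $d \geq 2$. Hence I would argue that every cell at distance exactly $d-1$ from $u$ becomes active in $F(c)$, and no cell strictly closer than $d-1$ becomes active.

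First I would set up notation: let $d \geq 2$ be the distance from $u$ to the nearest active cell in $c$, so $B_{d-1}(u)$ consists entirely of inactive cells and $D_d(u)$ contains at least one active cell. I would then prove two directions. For the \textbf{upper bound} (some cell at distance $d-1$ becomes active): take any active cell $w \in D_d(u)$ and a shortest $u$--$w$ path; the cell $w'$ on this path adjacent to $w$ is at distance $d-1$ from $u$, it is inactive in $c$ (since $d-1 \geq 1$ and everything in $B_{d-1}$ is inactive), and it has $w$ as an active neighbor. I must check that $w'$ does not have all three neighbors active: but any neighbor of $w'$ is at distance $\geq d-2$ from $u$, and in fact the other two neighbors of $w'$ — I need the geometry here — at least one of them lies at distance $d-2$ from $u$ (closer to $u$), hence is inactive. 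So $w'$ has at least one active and at least one inactive neighbor, giving a neighbor-sum in $\{1,2\}$, so $w'$ becomes active. For the \textbf{lower bound} (no cell at distance $< d-1$ becomes active): any cell at distance $r \leq d-2$ has all its neighbors at distance $\leq d-1$, hence all inactive in $c$, so its neighbor-sum is $0$ and it stays inactive.

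The main obstacle I expect is the local geometric claim that a cell $w'$ at distance $d-1$ on a shortest path to $u$ necessarily has a neighbor at distance $d-2$ (ensuring an inactive neighbor so the sum is not $3$). This is where the triangular-grid structure and the three-neighbor adjacency must be used carefully: on the triangular grid each cell has three neighbors, and one must verify that moving toward $u$ along a geodesic, the predecessor cell is at distance $d-2$, which is immediate from the definition of shortest path. The subtler point is ensuring this predecessor is genuinely a \emph{distinct} neighbor from $w$ — but since $w$ is at distance $d$ and the predecessor at distance $d-2$, they differ. So in fact the geometry is routine once phrased via shortest paths, and no case analysis on orientations of triangles is needed. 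I would close by noting the statement is symmetric in the sense that it holds for \emph{every} cell at distance $d-1$, not just one: each such cell, being inactive with a neighbor closer to $u$ (inactive) and — one checks — a neighbor at distance $d$ (which may or may not be active), still becomes active as long as it has at least one active neighbor; but the lemma as stated only asserts the nearest-active-cell distance drops to $d-1$, so exhibiting one such cell suffices together with the lower bound.
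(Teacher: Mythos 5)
Your proof is correct and follows essentially the same route as the paper: pick an active cell $w$ at distance $d$, take a shortest $u,w$-path, and observe that the cell at distance $d-1$ on that path has the active neighbor $w$ and an inactive predecessor at distance $d-2$, so its neighbor sum lies in $\{1,2\}$ and it activates, while cells at distance at most $d-2$ have all neighbors inactive and stay inactive. The geometric worry you flag is resolved exactly as in the paper, via the predecessor on the geodesic, so no further case analysis is needed.
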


\begin{proof}
Let $w$ be an active cell at distance $d$ of $u$ in configuration $x$, and call $P$ a shortest $u,w$-path. Call $w_1$ the neighbor of $w$ contained in $P$, and let $w_2$ be the neighbor of $w_1$ in $P$ different than $w$ (this cell exists since $d\geq 2$). Note that $w_2$ might be equal to $u$. Since $P$ is a shortest path, $w_2$ is at distance $d-2$ from $u$. Then all the neighbors of $w_2$ are inactive, so $w_2$ it is necessarily inactive in $F(c)$. Moreover, $w_1$ has more than one active neighbor, and less than three active neighbors, so $w_1$ is active in $F(c)$. Then the distance from $u$ to the nearest active cell in $F(c)$ is $d-1$.
\end{proof}


Recall that $D_r(u)$ is the set of cells at distance $r$ from $u$. We deduce the following lemma.

\begin{lemma}\label{lem: or tec tri}
  Let $d\geq 2$ be the distance from $u$ to the nearest active cell, and let $v\in N(u)$. 
 Then $v$ is active after $d-1$ applications of rule $12$ (i.e. $F^{d-1}(c)_v = 1$) if  and only there exists an active cell in $S_v \cap D_d(u)$
\end{lemma}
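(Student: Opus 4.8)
The plan is to combine Lemma~\ref{lem: spread} with a careful analysis of how the "frontier" of active cells at distance $d$ from $u$ evolves, localized to the semi-plane $S_v$. The key observation is that for rule $12$ an inactive cell becomes active precisely when it has exactly one or exactly two active neighbors, so a cell with all three neighbors inactive, or all three active, stays fixed (well, the latter is already active). By Lemma~\ref{lem: spread}, after one step the nearest active cell to $u$ is at distance exactly $d-1$, and iterating, after $d-1$ steps the nearest active cell is at distance $1$, i.e.\ at least one neighbor of $u$ becomes active at step $d-1$ (and none earlier, since $u$ itself stays inactive up to that point as its neighbors are inactive). So the question is \emph{which} neighbors $v \in N(u)$ are active at time $d-1$.

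First I would set up the geometry: the edge shared by $u$ and $v$ splits the grid into the two semi-planes, and I would argue that the portion of the dynamics relevant to whether $v$ is active at time $d-1$ depends only on the initial states of cells in $\bar S_v$, the closed semi-plane containing $v$ — more precisely only on the active cells of $\bar S_v$ within the ball $B_d(u)$. The reason is a cone-of-influence argument: the state of $v$ at time $d-1$ can only be affected by cells at distance $\le (d-1)$ from $v$, hence at distance $\le d$ from $u$; and I would check, using the shape of $S_v$ in the triangular grid (Figure~\ref{fig: tri semi-plane}), that the shortest paths from such cells to $v$ that matter stay on the $v$-side, so cells strictly across the cut do not influence $v$ before time $d-1$. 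Then I would show the two directions. For the "if" direction: suppose there is an active cell $z \in S_v \cap D_d(u)$. Then $z$ is also at distance $d-1$ from $v$ (since the $u,z$-path of length $d$ through $v$ is shortest), so applying Lemma~\ref{lem: spread} iteratively with $v$ playing the role of the center — or rather, applying the spreading mechanism of the lemma's proof along the shortest $v,z$-path — the active front reaches a neighbor of $v$ by time $d-2$, and then $v$ itself becomes active at time $d-1$ (it has at least one, and by the quiescence/Lemma~\ref{lem: spread} bookkeeping at most two, active neighbors at that step). For the "only if" direction: if there is no active cell in $S_v \cap D_d(u)$, then all active cells influencing $v$ lie at distance $\ge d$ from $u$ but in the $\bar S_v$ side, hence at distance $\ge d-1$ from $v$ with the distance-$d-1$ ones sitting in the "wrong" relative position, and I would argue that the front within $S_v$ never reaches $v$ by step $d-1$; combined with the fact that the cut cells on the opposite side are too far, $v$ stays inactive through step $d-1$.

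The main obstacle I expect is making the cone-of-influence/semi-plane separation argument rigorous in the triangular geometry: I need to verify that, for the purpose of deciding $F^{d-1}(c)_v$, the "interference" from active cells on the far side of the cut edge cannot matter, and that within $S_v$ the relevant front-propagation really is governed by distance-from-$u$ layers exactly as in Lemma~\ref{lem: spread}. This requires checking that the two-active-or-one-active condition of rule $12$ does not produce spurious activations near the cut (e.g.\ a cell on the boundary between $S_v$ and its complement could in principle get activated by neighbors on both sides), and that such boundary effects do not propagate to $v$ faster than the genuine front from $D_d(u) \cap S_v$. I would handle this by exploiting that, by Lemma~\ref{lem: spread} applied globally, at every time $t \le d-2$ every cell at distance $\le d-1-t$ from $u$ has \emph{all} neighbors inactive, so no activation can "turn the corner" around $u$ before time $d-1$; this pins down the front layer-by-layer and forces the activation of $v$ to come exclusively from the $S_v$ side. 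Once this localization is established, the two implications follow by a short induction on $d$, essentially rerunning the proof of Lemma~\ref{lem: spread} with the roles relative to $v$ instead of $u$.
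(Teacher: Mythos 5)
Your proposal is correct in substance but follows a genuinely different route from the paper. The paper proves the lemma by induction on $d$: writing $c'=F(c)$, it shows (Claims~2 and~3 of its proof) that $S_v\cap D_{d+1}(u)$ contains an active cell of $c$ if and only if $S_v\cap D_d(u)$ contains an active cell of $c'$, and then combines Lemma~\ref{lem: spread} with the induction hypothesis. You instead re-center the distance function at $v$, and once this is done most of your machinery becomes unnecessary: the hypothesis ``there is an active cell in $S_v\cap D_d(u)$'' is equivalent to ``the nearest active cell to $v$ is at distance exactly $d-1$'' (and its negation to ``at distance at least $d$''), because $d(v,z)=d(u,z)-1$ for $z\in S_v$ and $d(v,z)=d(u,z)+1$ otherwise, while every active cell is at distance at least $d$ from $u$. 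Iterating Lemma~\ref{lem: spread} centred at $v$ then settles both directions at once, with no induction and no cone-of-influence or localization argument; in particular your worries about ``interference'' across the cut and about active cells at distance $d-1$ from $v$ ``in the wrong relative position'' evaporate, since in the ``only if'' case no active cell lies at distance $d-1$ from $v$ at all (such a cell would be in $D_{d-2}(u)$, hence inactive, or in $S_v\cap D_d(u)$, which is excluded). The one fact you assert but must actually prove is the displacement identity $d(v,z)=d(u,z)\mp 1$ according to the side of the cut; it follows from the observation that adjacent triangles have opposite orientations, so $d(u,z)$ and $d(v,z)$ differ by exactly one by parity, together with the fact that any $v,z$-path with $z\notin S_v$ must use a crossing edge of the cut. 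With that identity in hand your argument closes and is, if anything, shorter than the paper's; the paper's three-claim analysis of a single step of $F$ buys the same conclusion without ever comparing $u$-distances to $v$-distances.
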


\begin{proof}
We reason by induction on $d$. In the base case, $d=2$, suppose that $S_v$ does not contain an active site at distance $2$. Then every neighbor of $v$ is inactive in the initial configuration, so $v$ is inactive after one application of rule $12$ (i.e. $F(c)_v  = 0$). Conversely, if $F(c)_v = 0$, then every neighbor of $v$ is initially inactive, in particular all the sites in $S_v$ at distance $2$ from $u$. 

Suppose now that the statement of the lemma is true on configurations where the distance is $d$, and let $c$ be a configuration where the distance from $u$ to the nearest active cell is $d+1$. Let $c'$ be the configuration obtained after one application on $c$ of rule $12$ (i.e. $c' = F(c)$). \\

\noindent{\bf Claim 1:} $F^{d-1}(c')_v  = 1$ if and only if in $c'$ there exists an active cell in $S_v \cap D_d(u)$.\\

From Lemma \ref{lem: spread}, the distance from $u$ to the nearest active cell in $c'$ is $d$. The claim follows from the induction hypothesis. \\

\noindent{\bf Claim 2:} Suppose that $F^{d-1}(c')_v = 0$. Then in $c$, all the cells in $S_v \cap D_{d+1}(u)$ are inactive. \\

Notice that, from Claim 1, the fact that $F^{d-1}(c')_v=0$ implies that in $c'$ all the cells in $D_d(u) \cap S_v$ must be inactive. Suppose, by contradiction, that there is a cell $w$ in $S_v \cap D_{d+1}(u)$ that is active in $c$. Let $w'$ be a neighbor of $w$ contained in $S_v \cap D_d(u)$, and let $w''$ be a neighbor of $w'$ not contained in $D_{d+1}(u)$ (then $w'$ belongs to $D_d(u) \cup D_{d-1}(u)$). Note that $w'$ has an active neighbor in $c$, but must be inactive in $c'$. The only option is that all the neighbors of $w'$ are active in $c$, in particular $w''$ is active in $c$. This contradicts the fact the nearest active cell is at distance $d+1$ in $c$.\\

\noindent{\bf Claim 3:} Suppose that $F^{d-1}(c')_v = 1$. Then there is a cell in $ S_v \cap D_{d+1}(u)$  that is active in $c$. \\

From Claim 1, the fact that $F^{d-1}(c')_v=0$ implies that there is a cell $w \in S_v \cap D_d$ that is active in $c'$. Suppose by contradiction that all the cells in $S_v \cap D_{d+1}(u)$ are inactive in $c$. Since $w$ is active in $c'$, necessarily $w$ has at least one neighbor $w'$ that is active in $c$. Since $w'$ is not contained in $S_v \cap D_{d+1}(u)$ (because we are supposing that all those cells are inactive in $c$), we deduce that $w'$ belongs to $D_{d}(u) \cup D_{d-1}(u)$. This contradicts the fact the nearest active cell is at distance $d+1$ in $c$.\\

We deduce that $F^{d-1}(c')_v = 1$ if and only if there is a cell in $ S_v \cap D_{d+1}(u)$ that is active in $c$.  Since $c' = F(c)$, we obtain that $F^{d}(c)_v = 1$ if and only if there is a cell in $ S_v \cap D_{d+1}(u)$ that is active in $c$.
\end{proof}

\begin{theorem}\label{thm: SyncStability or}
  \stability is in {\NC} for rule $12$.
\end{theorem}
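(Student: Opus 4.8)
The plan is to bypass the simulation of rule $12$ entirely and to read the fate of $u$ off the two structural lemmas above. Write $\mathcal{I}_F = \{1,2\}$ for rule $12$; let $x$ be the input configuration, $u$ the distinguished cell with $x_u = 0$, and $c = c(x)$. First I would dispose of the degenerate case: if $x$ has no active cell then $c$ is a fixed point and $u$ is vacuously stable, which is detected in $\cO(\log n)$ time by a prefix-sum (Proposition~\ref{prop: prefix-sum}). Otherwise let $d \geq 1$ be the distance from $u$ to the nearest active cell of $c$. Since $x$ tiles the plane, $d = \cO(n)$, so $d$ together with the graph distance from $u$ to every cell of an $\cO(n) \times \cO(n)$ window of $c$ that contains the ball $B_d(u)$ can be computed in \NC: that window is a grid graph on $\cO(n^2)$ vertices, and Proposition~\ref{prop:shortest-paths} yields all these distances, hence also the ring $D_d(u)$, in $\cO(\log^2 n)$ time with a polynomial number of processors.

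Next I would determine the configuration around $u$ at time $d-1$. Iterating Lemma~\ref{lem: spread} from the initial distance $d$ shows that for every $t$ with $0 \leq t \leq d-1$ the nearest active cell to $u$ at time $t$ is at distance $d - t \geq 1$; hence $u$ stays inactive throughout times $0, \dots, d-1$, and at time $d-1$ at least one neighbour of $u$ is active. Let $u_1, u_2, u_3$ be the neighbours of $u$. If $d = 1$, I simply count how many of the $u_i$ are active in $x$ (a constant-size test). If $d \geq 2$, Lemma~\ref{lem: or tec tri} says that $u_i$ is active at time $d-1$ if and only if $S_{u_i} \cap D_d(u)$ contains an active cell; since membership of a cell in the semi-plane $S_{u_i}$ is a fixed predicate on its coordinates, each of the three tests is an OR over $\cO(n^2)$ cells and is evaluated by a prefix-sum in $\cO(\log n)$ time. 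Either way I obtain the number $k \in \{1,2,3\}$ of neighbours of $u$ that are active at time $d-1$.

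Finally the algorithm accepts (i.e.\ declares $u$ stable) if $k = 3$ and rejects if $k \in \{1,2\}$. Correctness follows at once: if all three neighbours of $u$ are active at time $d-1$ then, by the freezing property, they are active at every later time, so from time $d-1$ onward $u$ always sees exactly $3 \notin \mathcal{I}_F$ active neighbours and, having been inactive up to time $d-1$, never becomes active; conversely, if $k \in \{1,2\} \subseteq \mathcal{I}_F$ then $u$ is inactive with $k$ active neighbours at time $d-1$, so $F^{d}(c)_u = 1$ and $u$ is not stable. Composing the prefix-sum and shortest-paths subroutines, the whole procedure runs in $\cO(\log^2 n)$ time with $n^{\cO(1)}$ processors, placing \stability for rule $12$ in \NC.

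The only genuinely delicate point is the bookkeeping that converts Lemma~\ref{lem: spread} into the claim ``$u$ is inactive at times $0, \dots, d-1$ and has at least one active neighbour at time $d-1$'', including the boundary case $d = 2$ (where the lemma is applied exactly once) and the separate treatment of $d = 1$ (where Lemma~\ref{lem: spread} does not apply at all); once this is settled, the algorithm is a routine composition of the \NC primitives of Section~2, with no further subtleties.
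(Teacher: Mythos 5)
Your proof is correct and follows essentially the same route as the paper: compute the distance $d$ from $u$ to the nearest active cell, use Lemma~\ref{lem: spread} to conclude that $u$ stays inactive through time $d-1$ with at least one active neighbour at that time, use Lemma~\ref{lem: or tec tri} to decide which of the three neighbours are active at time $d-1$ via an OR over $S_v \cap D_d(u)$, and accept exactly when all three are. Your explicit treatment of the $d=1$ case and your accept/reject convention (which agrees with the paper's Algorithm~\ref{alg: SyncStability 12}, rather than with the slightly garbled prose sentence preceding it) are both right, and the heavier all-pairs-shortest-paths subroutine still keeps everything in \NC.
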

\begin{proof}

In our algorithm solving \stability for 12, we first compute the distance $d$ to the nearest active cell from $u$ (if every cell is inactive, our algorithm trivially accepts). Then, for each $v \in N(u)$, the algorithm computes the set of cells $S_v \cap D_d(u)$, and checks if that set contains an active cell. If it does, we mark $v$ as \emph{active}, and otherwise we mark $v$ as \emph{inactive}. Finally, the algorithm rejects if the three neighbors of $u$ are active, and accepts otherwise. The steps of this algorithm are described in Algorithm \ref{alg: SyncStability 12}.

From Lemma \ref{lem: or tec tri}, we know that $v$ becomes active at time $d-1$ if and only if $S_v \cap D_d(u)$ contains an active cell in the initial configuration. Since the nearest active cell from $u$ is at distance $d$, necessarily after $d-1$ steps at least one of the three neighbors of $u$ will become active. If the three neighbors of $u$ satisfy the condition of Lemma \ref{lem: or tec tri}, then the three of them will become active in time $d-1$, so $u$ will remain inactive forever. Otherwise, $u$ will have more than one and less than three active neighbors at time-step $d-1$, so it will become active at time $d$.

  \begin{algorithm}[h]
\caption{Solving \stability 12}\label{alg: SyncStability 12}
\begin{algorithmic}[1]
\REQUIRE $x$ a finite configuration of dimensions $n \times n$ and $u$ a cell.
\IF {For all cell $w$, $x_w = 0$}
\RETURN \emph{accept}
\ELSE
\STATE Compute a matrix $M = (m_{ij})$ of dimensions $2n^2 \times  2n^2$ such that 

$m_{ij}$ is the distance from cell $i$ to cell $j$.
\STATE Compute the distance $d$  to the nearest active cell from $u$.
\FORALLP{ $v \in N(u)$}
  \STATE Compute the set of cells $S_v \cap D_d(u)$
  \IF {there exists $w \in S_v \cap D_d(u)$ such that $x_w =1$}
  \STATE Mark $v$ as \emph{active}
  \ELSE
  \STATE Mark $v$ as \emph{inactive}
  \ENDIF
\ENDFORALLP
\IF{ there exists $ v$ in $N(u)$ that is marked inactive}
  \RETURN \emph{Reject}
  \ELSE
  \RETURN \emph{Accept}
\ENDIF
\ENDIF
\end{algorithmic}
\end{algorithm}

Let $N=n^{2}$ the size of the input.
This algorithm runs in time $\cO(\log N)$ using $\cO(N)$ processors. Indeed, the verifications on lines 1-3 and 8-10 can be done in time $\cO(\log N)$ using $\cO(N)$ processors using a prefix-sum algorithm. 
Finally, step $7$ can be done in time $\cO(\log N)$ using $\cO(N)$ processors, assigning one processor per cell and solving three inequations of kind $ax+by < c$.
\end{proof}

\section{Square Grid}

We now continue our study,  considering the square grid. As we said in the preliminaries section, we can define $32$ different FTCAs over this topology. Again, considering the inactive state as a quiescent state, the set of non-equivalent FTCAs is reduced to $16$.  According to our classifications, this list of FTCAs is grouped as follows:

\begin{itemize}
\item Simple rules: $\phi$, $1234$ and $4$.
\item Topological rules: $234$, $3$ and $34$.
\item Algebraic rules: $12$, $123$, and $124$.
\item Turing Universal rules: $2$, $24$.
\item Fractal growing rules: $1$, $13$, $14$ and $134$.
\end{itemize}

In complete analogy to the triangular topology, we verify that the \stability problem in Simple rules is \NC. We will directly continue then with the Topological Rules. 

\subsection{Topological Rules}

In this subsection, we study rules whose fixed points can be characterized according to some topological properties of the graph induced by initially inactive sites. More precisely, we are going to be interested in characterize the sets of stable cells, that we call \emph{stable sets}. Naturally, the structure of stable-sets depends on the rule.

\subsubsection{Rules 34 and 3.}
First, notice that the rule $34$ corresponds to freezing version of the majority automaton (Maj) over the square grid.  We remark that a finite configuration over the square grid, seen as a torus, is a regular graph of degree four.  Proposition \ref{prop:StableMaj}  that in this case the stable sets are cycles or paths between cycles in the graph induced by initially inactive sites. Moreover, Therefore, we can use the Algorithm given in  Proposition \ref{prop:StableMaj}   to check whether a given site is stable for rule $34$. We deduce the following theorem (also given in \cite{goles:hal-00914603}).

\begin{theorem}[\cite{{goles:hal-00914603}}]\label{thm:rule 34}
  \stability is in \NC for rule $34$.
\end{theorem}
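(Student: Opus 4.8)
The plan is to reduce the claim to Proposition~\ref{prop:StableMaj}, exactly as the preceding paragraph suggests, so the ``proof'' is really a short verification that the hypotheses of that proposition are met together with a bookkeeping of the resources used. First I would observe that rule $34$ on the square grid is, by definition of $\mathcal{I}_F = \{3,4\}$, precisely the freezing majority CA $Maj$: an inactive cell with $|N(0,0)| = 4$ neighbors becomes active iff at least $3$ of them are active, i.e.\ iff a strict majority of its neighbors is active. Next I would record the purely combinatorial fact already noted in the preliminaries: a finite configuration $x$ of dimensions $n\times n$, read as a periodic configuration $c(x)$ on the torus, induces a graph $G$ in which every vertex has degree exactly $4$; in particular $G$ has maximum degree $\le 4$, which is the standing hypothesis of Proposition~\ref{prop:StableMaj}.

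With these two observations in place, the argument is immediate: given an instance $(x,u)$ of \stability for rule $34$, build $G$ and the subgraph $G[0]$ induced by the initially inactive cells, and apply the fast-parallel algorithm of Proposition~\ref{prop:StableMaj} to decide whether $u$ satisfies condition (i) (lies on a cycle of $G[0]$) or condition (ii) (lies on a path of $G[0]$ whose two endpoints lie on cycles of $G[0]$). By that proposition this correctly decides whether $u$ is stable, so the algorithm is correct. For the complexity, constructing $G$ and $G[0]$ from $x$ is trivially in \NC (one processor per cell checks its $O(1)$ neighbors), and the stability test runs in time $\cO(\log^2 N)$ with a polynomial number of processors, where $N = n^2$ is the input size; combining the two stages keeps us within $\cO(\log^2 N)$ time and $N^{\cO(1)}$ processors. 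Hence \stability for rule $34$ is in \NC, which is the statement of the theorem, and the companion claim for rule $3$ will be handled separately by an analogous characterization (rule $3$ differs only in that a cell with all four neighbors active stays inactive, so the relevant induced subgraph and stable-set description must be adjusted accordingly).

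I do not expect a genuine obstacle here, since the hard work is entirely contained in Proposition~\ref{prop:StableMaj}, which is quoted from \cite{goles:hal-00914603} and may be assumed. The only point requiring a sentence of care is the claim that the torus graph has degree exactly $4$ and not merely at most $4$ for small $n$: when $n=1$ or $n=2$ the periodic wrap-around can create multi-edges or self-loops, so strictly speaking one should either restrict to $n \ge 3$, or interpret the adjacency as the simple graph in which $u$ and $v$ are adjacent whenever some pair of their periodic copies are neighbors (which still has degree $\le 4$), and check that Proposition~\ref{prop:StableMaj}'s hypothesis ``degree at most $4$'' is what is actually used. Either reading makes the reduction go through unchanged, so this is a matter of phrasing rather than a mathematical difficulty.
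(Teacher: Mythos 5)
Your proposal is correct and follows exactly the route the paper takes: identify rule $34$ with the freezing majority automaton on the degree-$4$ torus graph and invoke the characterization and fast-parallel algorithm of Proposition~\ref{prop:StableMaj} from \cite{goles:hal-00914603}, with the same resource accounting. The extra caveat about multi-edges for very small $n$ is a reasonable point of phrasing but does not change the argument, which the paper states in the same one-paragraph form.
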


Likewise, in analogy of the behavior of rule $2$ with respect to rule $23$ in the triangular grid, we can use the algorithm solving \stability for the rule $34$ to solve \stability for the rule $3$. Let $(x,u)$ be an instance of the $\stability$ problem. Clearly, if $u$ is stable for rule $34$ we have that $u$ is stable for rule $3$. Suppose now that $u$ is not stable for rule $34$ but it is stable for rule $3$. Let $G[0,u]$ be connected component of $G[0]$ containing $u$. Using the exact same proof used for rule $2$ on the triangular grid, we can deduce that $G[0,u]$ is a tree, and we call $T_u$ this tree rooted in $u$. Moreover, let $u_1, u_2, u_3, u_4$ be the four neighbors of $u$, and let $T_{u_i}$ be the subtree of $T_u$ obtained taking all the descendants of $u_i$, $i \in \{1,2,3,4\}$. Call $d_i$ the depth of $T_{u_i}$, which without loss of generality we assume that $d_1 \geq d_2 \geq d_3 \geq d_4$. We have that $u$ is stable for rule $3$ but not for rule $34$ if and only if  $d_1 = d_2 \geq d_3 \geq d_4$. We deduce that, with very slight modifications, Algorithm \ref{alg: SyncStability 2} solves \stability for rule $3$. We deduce the following theorem.

\begin{theorem}\label{thm:rule 3}
  \stability is in \NC for rule $3$.
\end{theorem}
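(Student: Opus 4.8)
The plan is to mirror, essentially verbatim, the argument developed for rule $2$ on the triangular grid (Theorem \ref{thm: SyncStability 2}), using rule $34$ in place of rule $23$ as the "majority" rule whose stable sets are already characterized by Proposition \ref{prop:StableMaj}. First I would observe the easy containment: if a cell $u$ is stable for rule $34$ then it is stable for rule $3$, since rule $3$ activates an inactive cell on strictly fewer neighbor-sums ($\{3\}$ versus $\{3,4\}$), so the dynamics of rule $3$ is pointwise below that of rule $34$. Hence the algorithm first runs the rule-$34$ stability test of Theorem \ref{thm:rule 34}; if it accepts, we accept. The work is in the case where $u$ is stable for rule $3$ but not for rule $34$.

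In that case, let $t$ be the first time $u$ becomes active under rule $34$; since $u$ survives under rule $3$, all four neighbors of $u$ must be active at time $t-1$ under rule $34$, and at least two of them turned active exactly at time $t-1$ (otherwise $u$ would already have three active neighbors at time $t-1$ and rule $3$ would activate it too). Let $G[0,u]$ be the connected component of the initially-inactive subgraph containing $u$. I would then repeat the four claims of the proof of Theorem \ref{thm: SyncStability 2}, now with degree $4$: (Claim 1) every cell of $G[0,u]$ becomes active under rule $34$ no later than $u$ — by a shortest-path argument, a cell of $G[0,u]$ activating after $u$ would force an entire inactive path through $u$ to persist at time $t$, contradicting that all neighbors of $u$ are active at time $t-1$; (Claim 2) $G[0,u]$ is a tree — a cycle in $G[0,u]$ would be entirely stable by Proposition \ref{prop:StableMaj}, contradicting Claim 1. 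Root $T_u := G[0,u]$ at $u$; (Claim 3) every cell of $T_u$ except $u$ is active after $d$ steps of rule $3$, where $d$ is the depth of $T_u$ — induction on depth, since at each step precisely the current leaves (which have at least two active neighbors outside the tree and hence, having degree $4$, at least three active neighbors) get activated by rule $3$ while internal cells keep at least two inactive neighbors. Writing $u_1,\dots,u_4$ for the neighbors of $u$, $T_{u_i}$ for the subtree hanging from $u_i$, and $d_1\ge d_2\ge d_3\ge d_4$ for their depths, Claim 3 gives: the neighbor $u_i$ is activated under rule $3$ at exactly time $d_i+1$, so (Claim 4) $u$ is stable for rule $3$ but not for rule $34$ iff at the first time some neighbor is active all four are, i.e. iff $d_1 = d_2 \ge d_3 \ge d_4$.

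Finally I would assemble the fast-parallel algorithm exactly as in Algorithm \ref{alg: SyncStability 2}, with the only modification that $N(u)$ now has four cells and the arithmetic condition checked at the end is $d_1 = d_2 \ge d_3 \ge d_4$: run the rule-$34$ stability test (Theorem \ref{thm:rule 34}), and if it rejects, compute $G[0,u]$ and its biconnected components to detect a cycle (reject if one exists, by Claim 2), then compute the depths $d_v$ of the subtrees $T_v$, $v\in N(u)$, via the vertex-level algorithm (Proposition \ref{prop:vertex-level}), and accept iff $d_1=d_2\ge d_3\ge d_4$. The running time and processor bounds are dominated by the connectivity subroutines, i.e. $\cO(\log^2 N)$ time and $\cO(N^3/\log N)$ processors with $N = n^2$, placing \stability for rule $3$ in \NC. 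I expect no genuine obstacle beyond checking that each of the four claims survives the passage from degree $3$ to degree $4$ — the only place the degree matters structurally is Claim 3, where a leaf of $T_u$ must acquire enough active neighbors to be activated by rule $3$ (needs sum $=3$), and this holds because a leaf has all but one of its neighbors outside $T_u$ and therefore active, giving it the three active neighbors required.
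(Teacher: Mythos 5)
Your proposal is correct and follows essentially the same route as the paper, which itself only sketches this case by reducing it to the rule-$34$ test of Theorem~\ref{thm:rule 34} and then reusing the tree/subtree-depth argument of Theorem~\ref{thm: SyncStability 2} with the criterion $d_1 = d_2 \geq d_3 \geq d_4$. The only blemish is the informal gloss in your Claim~4 (``at the first time some neighbor is active all four are''), which literally reads as $d_1=d_4$; it should say that at the first time at least three neighbors are active all four are, which is exactly the condition $d_1=d_2$ that your formula (and the paper) correctly states.
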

\subsubsection{Rule 234.}

Notice that rule $234$ is the freezing version of the non-strict majority automaton, 
the CA where the cells take the state of the majority of its neighbors, and in tie case they decide to become active. In the following, we will show that the stability problem for this rule is also in \NC, characterizing the set of stable sets. This time, the topological conditions of the stable sets will be the property of being tri-connected. 

\begin{theorem}\label{thm: 234 square NC}
  \stability is in {\NC} for rule $234$.
\end{theorem}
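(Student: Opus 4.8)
The plan is to follow the same strategy that worked for rule $23$ (Theorem~\ref{thm: SyncStability 23}) and rules $34$, $3$: first give a purely topological characterization of the stable sites of rule $234$ in terms of the graph $G[0]$ induced by the initially inactive cells, and then observe that each condition in the characterization can be tested by one of the fast-parallel subroutines from Section~2.3 (connected / bi-connected / tri-connected components, vertex levels, prefix-sums). As announced in the paragraph preceding the statement, the relevant topological notion here is tri-connectivity: on the square grid an inactive cell has four neighbors, and it stays inactive only if it can never see three active neighbors. Intuitively, a cell survives if it lies in a tri-connected component of $G[0]$ (each of its cells has $\geq 3$ inactive neighbors inside the component, so at no time can three of its neighbors be active), or if it lies on a structure (a path, or a tree-like appendage) all of whose ``escape routes'' are blocked by tri-connected components. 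So I would first prove:

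\textbf{Claim (characterization).} An inactive cell $u$ is stable for rule $234$ if and only if $u$ belongs to a tri-connected component of $G[0]$, or $u$ lies on a path $P$ in $G[0]$ all of whose relevant branching endpoints are absorbed into tri-connected components of $G[0]$ (made precise analogously to conditions (i)--(ii) of Proposition~\ref{prop:StableMaj}, but with ``cycle'' replaced by ``tri-connected component''). The forward direction is the easy one: if $u$ is in a tri-connected component $H$, then by Menger's theorem every cell of $H$ has three internally disjoint paths to any other cell of $H$, which forces every cell of $H$ to have at least three neighbors inside $H$; since those neighbors are inactive and rule $234$ is freezing, induction on time shows no cell of $H$ ever has three active neighbors, so $H$ stays inactive forever. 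The path/appendage case is then an easy propagation argument: if both ``sides'' of the appendage are anchored in permanently-inactive tri-connected components, the appendage cannot be eaten from either end.

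For the converse I would argue the contrapositive: if $u$ is not covered by the characterization, exhibit a finite time at which $u$ sees three active neighbors. This is the part I expect to be the main obstacle. The difficulty is that rule $234$ is the non-strict majority rule, so a cell flips as soon as it has three active neighbors (not a strict majority of four); the ``erosion'' of $G[0]$ therefore behaves somewhat like a degree-$\leq 2$ shrinking process, but the interaction between the erosion fronts coming from different directions, and the precise bookkeeping of which cells of a tree-like piece get eaten and in what order (as in Claims 1--3 of Theorem~\ref{thm: SyncStability 2}), requires care. Concretely I would decompose $G[0,u]$ (the connected component of $u$) by peeling: repeatedly delete cells with at most two inactive neighbors; a cell of $G[0,u]$ is stable iff it survives this peeling, and the surviving set is exactly the union of the tri-connected cores together with the paths linking them — this is the structural heart of the argument and must be shown to coincide with the time-dynamics of the automaton (the peeling order can be realized as the time order in which cells become active, up to the usual subtlety that a vertex with exactly two inactive neighbors only dies once one of them dies, which is where tri-connected, rather than merely $2$-connected, components are the right invariant).

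Once the characterization is established, the algorithm and its complexity are routine: compute $G[0]$ from $(x,u)$; run the tri-connected components algorithm of Proposition~\ref{prop:trijaja} in time $\cO(\log^2 N)$ with $\cO(N^4)$ processors (where $N=n^2$); contract each tri-connected component to a single vertex, obtaining an auxiliary graph, and in that graph check — using connected components (Proposition~\ref{prop:conjaja}) and the vertex-level / shortest-path routines (Propositions~\ref{prop:vertex-level}, \ref{prop:shortest-paths}) — whether $u$'s component is a contracted tri-connected core or a path/appendage whose endpoints are all contracted cores. All of these run within $\cO(\log^2 N)$ time and $\cO(N^4)$ processors, so the dominant cost is the tri-connected components computation, giving \stability for rule $234$ in $\cO(\log^2 N)$ time with $\cO(N^4)$ processors, hence in \NC.
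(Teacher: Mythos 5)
There is a genuine gap, and it starts with the characterization itself. You have transplanted the ``path between cycles'' clause of Proposition~\ref{prop:StableMaj} into the rule-$234$ setting, but that clause is specific to the \emph{strict} majority rules ($23$, $34$), where a cell survives with only two inactive neighbors, so the two path-neighbors of an internal path vertex suffice to protect it. Rule $234$ activates a cell already at two active neighbors, so an inactive cell needs \emph{three} inactive neighbors to survive; an internal vertex of a bare path joining two tri-connected cores has only two neighbors in that structure and gets activated as soon as its two off-path neighbors do. There is no path/appendage case for rule $234$: the correct elementary characterization (the paper's Lemma~\ref{pro: estabilidad general}) is that $u$ is stable iff it lies in a set of inactive cells inducing a subgraph of minimum degree $3$ --- which is exactly the fixed point of the peeling process you describe, but that fixed point is \emph{not} ``tri-connected cores plus linking paths''; the linking paths are peeled away.

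The second, deeper gap is that ``contained in a set of minimum degree $3$'' and ``contained in a tri-connected component of $G[0]$'' are not equivalent for general graphs (minimum degree $3$ does not imply $3$-connectivity), so even after dropping the path clause your algorithm --- tri-connected components of $G[0]$ directly --- is not justified. This is precisely where the paper does real work: it embeds $x$ into a padded configuration $D(x)$ surrounded by a wide inactive, tri-connected border (Lemma~\ref{lem:candDc} shows stability is preserved), then constructs, for every stable cell, three disjoint paths of stable cells to that border by routing monotonically through three quadrants (Lemma~\ref{lem:3caminos}), and concatenates them along the border to get three disjoint stable paths between any two stable cells (Lemma~\ref{teo: TRI}). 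Only then does the entire stable set coincide with a tri-connected component of the padded graph $G[Z]$ (Lemma~\ref{col: TRI}), and the algorithm reduces to one call to Proposition~\ref{prop:trijaja} plus a membership test. Your proposal explicitly defers the converse direction as ``the main obstacle,'' and indeed without the padding construction and the three-disjoint-paths argument the reduction to tri-connectivity is not established; the complexity accounting at the end is fine but rests on a characterization that is both unproved and, as stated, false.
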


\begin{lemma} \label{pro: estabilidad general}
  Let  $x\in \{0,1\}^{[n]\times [n]}$ be a finite configuration and $u \in [n]\times [n]$ a site. Then,  $u$ is stable for $c =c(x)$ if and only if there exist a set $S \subseteq [n]\times [n]$ such that: 
  \begin{itemize}
  \item $u \in S$, 
  \item $c_u = 0$ for every $u \in S$, and 
  \item $G[S]$  is a graph of minimum degree $3$. 
  \end{itemize}
\end{lemma}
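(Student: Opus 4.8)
The plan is to prove the two implications separately. Both are elementary once one records the behaviour of rule $234$: an inactive cell stays inactive under one application of $F$ exactly when at most one of its four neighbors is active (since $\mathcal{I}_F = \{2,3,4\}$), and one keeps in mind that the freezing property makes ``stable'' equivalent to ``inactive in the fixed point reached from $c$''.

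For the implication ($\Leftarrow$) I would assume such a set $S$ is given and prove by induction on $t\geq 0$ that $F^t(c)_v = 0$ for every $v \in S$; specializing to $v = u$ then shows that $u$ is stable. The base case $t=0$ is the hypothesis that $c_v = 0$ on $S$. For the inductive step, fix $v \in S$: since $G[S]$ has minimum degree $3$, the cell $v$ has at least three neighbors in $S$ and hence at most one neighbor outside $S$; by the inductive hypothesis the neighbors of $v$ lying in $S$ are inactive at time $t$, so $v$ has at most one active neighbor at time $t$, and therefore, by the rule, $F^{t+1}(c)_v = 0$.

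For the implication ($\Rightarrow$) I would assume $u$ is stable, let $c^{\star}$ be the fixed point reached from $c$ (which exists in $\cO(n^2)$ steps by the freezing property), and set $S = \{\, v : c^{\star}_v = 0 \,\}$. Since active cells never turn inactive, the cells that are inactive in $c^{\star}$ are precisely the stable cells, so $u \in S$ and $c_v = 0$ for every $v \in S$. To see that $G[S]$ has minimum degree $3$, take $v \in S$: because $c^{\star}$ is a fixed point and $c^{\star}_v = 0$, the local rule of $234$ forces at most one neighbor of $v$ to be active in $c^{\star}$, i.e. at least three neighbors of $v$ lie in $S$. Hence every vertex of $G[S]$ has degree between $3$ and $4$, which is exactly the minimum degree $3$ condition.

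I do not expect a genuine obstacle here; the statement is a short double implication. The two points that need a little care are the identification of stable cells with the zero-cells of the fixed point (which relies on the freezing hypothesis), and the fact that $c = c(x)$ is interpreted as a torus, so that $G[S]$ and the neighborhoods $N(v)$ must be read in the $4$-regular torus graph rather than in the infinite plane — a distinction that does not affect any of the degree counts above. The harder work of the section, namely upgrading ``minimum degree $3$'' to the tri-connectivity statement that feeds the \NC algorithm, comes afterwards and is not part of this lemma.
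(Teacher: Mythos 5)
Your proposal is correct and follows essentially the same route as the paper: for the forward direction both take $S$ to be the set of stable cells (equivalently the zero-cells of the fixed point) and read the minimum-degree condition off the local rule, and for the converse both show that no cell of $S$ can ever acquire two active neighbors — you phrase this as an induction on $t$ where the paper uses a first-time-of-activation (minimal counterexample) argument, which is the same idea. No gaps.
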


\begin{proof} 

Suppose that $u$ is stable and  let $S$ be the subset of $[n]\times [n]$ containing all the sites that are stable for $c$. We claim that $S$ satisfy the desired properties. Indeed, since $S$ contains all the sites stable for $c$, then $u$ is contained in $S$. On the other hand,  since the automaton is freezing, all the sites in $S$ must be inactive on the configuration $c$. Finally, if $G[S]$ contains a vertex $v$ of degree less than $3$, it means that necessarily the corresponding site $v$ has two non-stable neighbors  that become $1$ in the fixed point reached from $c$, contradicting the fact that $v$ is stable. 

On the other direction suppose that $S$ contains a site that is not stable and let $t>0$ be the minimum step such that a site $v$ in $S$ changes to state $1$, i.e., $v \in S$ and $t$ are such $F^{t-1}(c)_w = 0$ for every $w \in S$, and $F^t(c)_v=1$. This implies that $v$ has at least two active neighbors in the configuration $F^{t-1}(c)$. This contradicts the fact that $v$ has three neighbors in $S$. We conclude that all the sites contained in $S$ are stable, in particular $u$.\end{proof}


For a finite configuration  $x\in \{0,1\}^{[n]\times[n]}$, let $D(x) \!\in \!\{0,1\}^{\{-n^2 -n, \dots, n^2 + 2n\} ^2}$ be the finite configuration of dimensions $m \times m$, where $m= 2n^2+3n$, constructed with repetitions of configuration $x$ in a  rectangular shape, as is depicted in Figure~\ref{fig:tildex}, and inactive sites  elsewhere.  We also call $D(c)$ the periodic configuration~$c(D(x))$.

\begin{figure}
\centering
\begin{tikzpicture}[scale=.65]



\draw (4,1) grid (11,8);

\node[scale = 2](x1) at (5.5, 4.5){$x$};
\node[scale = 2](x1) at (6.5, 4.5){$x$};
\node[scale = 2](x1) at (7.5, 4.5){$x$};
\node[scale = 2](x1) at (8.5, 4.5){$x$};
\node[scale = 2](x1) at (9.5, 4.5){$x$};

\node[scale = 2](x1) at (5.5, 3.5){$x$};
\node[scale = 2](x1) at (6.5, 3.5){$x$};
\node[scale = 2](x1) at (7.5, 3.5){$x$};
\node[scale = 2](x1) at (8.5, 3.5){$x$};
\node[scale = 2](x1) at (9.5, 3.5){$x$};

\node[scale = 2](x1) at (5.5, 5.5){$x$};
\node[scale = 2](x1) at (6.5, 5.5){$x$};
\node[scale = 2](x1) at (7.5, 5.5){$x$};
\node[scale = 2](x1) at (8.5, 5.5){$x$};
\node[scale = 2](x1) at (9.5, 5.5){$x$};

\node[scale = 2](x1) at (5.5, 2.5){$x$};
\node[scale = 2](x1) at (6.5, 2.5){$x$};
\node[scale = 2](x1) at (7.5, 2.5){$x$};
\node[scale = 2](x1) at (8.5, 2.5){$x$};
\node[scale = 2](x1) at (9.5, 2.5){$x$};

\node[scale = 2](x1) at (5.5, 6.5){$x$};
\node[scale = 2](x1) at (6.5, 6.5){$x$};
\node[scale = 2](x1) at (7.5, 6.5){$x$};
\node[scale = 2](x1) at (8.5, 6.5){$x$};
\node[scale = 2](x1) at (9.5, 6.5){$x$};

\node[scale = 1] at (7,0.5) {$0$};
\node[scale = 1] at (8,0.5) {$n$};
\draw[very thick,<->] (7,8) -- (7,1);
\draw[very thick,<->] (4,4) -- (11,4);
\node[scale = 1] at (10,0.5) {$n^2+n$};
\node[scale = 1] at (5,0.5) {$-n^2$};
\draw[densely dashed,ultra thick]  (7,5) rectangle (8,4);
\end{tikzpicture}

\caption{Construction of the finite configuration $D(x)$ obtained from a finite configuration $x$ of dimension $n \times n = 2 \times 2$. 
Note that $D(x)$ is of dimensions $7 \times 7$.}\label{fig:tildex}
\end{figure}
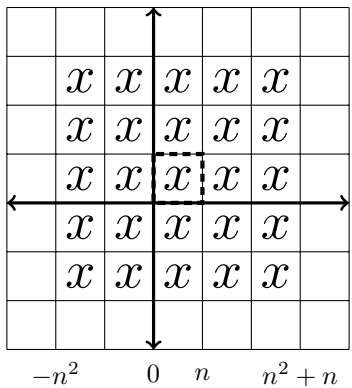

\begin{lemma}\label{lem:candDc}

Let $x \in \{0,1\}^{[n]^2}$ be a finite configuration, and let $u$ be a site in $[n] \times [n]$ such that $x_u = 0$. Then $u$ is stable for $c = c(x)$ if and only if it is stable for $D(c)$. 

\end{lemma}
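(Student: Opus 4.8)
The plan is to prove Lemma~\ref{lem:candDc} by combining the structural characterization of stability from Lemma~\ref{pro: estabilidad general} with an observation about how the periodic configuration $c=c(x)$ ``contains'' the configuration $D(c)=c(D(x))$. The key point is that Lemma~\ref{pro: estabilidad general} gives an \emph{intrinsic}, local witness for stability: $u$ is stable if and only if there is a set $S$ of inactive cells, containing $u$, such that $G[S]$ has minimum degree $3$. Such a witness is purely combinatorial and does not reference the whole infinite grid, so it can be transplanted between $c$ and $D(c)$ as long as the relevant finite patch of inactive cells is the same in both configurations.

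First I would set up coordinates carefully. The configuration $c=c(x)$ tiles $\mathbb{Z}^2$ with copies of $x$ in every direction; the configuration $D(x)$ places a $(2n^2+2n+1)\times(2n^2+2n+1)$ block of copies of $x$ (roughly) inside an otherwise all-inactive $m\times m$ board, with $m=2n^2+3n$, and then $D(c)=c(D(x))$ tiles the plane with that. I would fix an embedding of the reference copy of $x$ (the dashed square in Figure~\ref{fig:tildex}) so that $u$ sits at the same position in both $c$ and $D(c)$. The crucial quantitative fact is that the relevant witness set $S$ in Lemma~\ref{pro: estabilidad general} can always be taken inside a bounded neighborhood of $u$: indeed, by Lemma~\ref{pro: estabilidad general} applied to $c$, the natural choice is $S=$ the set of all stable cells, but a cleaner choice is to restrict $S$ to a single connected component through $u$; and since a finite configuration of dimension $n\times n$ reaches its fixed point in $\cO(n^2)$ steps, any minimum-degree-$3$ witness can be located within distance $\cO(n^2)$ of $u$. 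This is exactly why $D(x)$ is padded with $\Theta(n^2)$ copies of $x$: the block of genuine copies around the reference copy is large enough that every cell within the reach of the dynamics from $u$ sees, in $D(c)$, exactly the same local neighborhood of inactive cells as in $c$.

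With that in place the proof splits into two implications. For the forward direction, suppose $u$ is stable for $c$. Take the witness set $S$ given by Lemma~\ref{pro: estabilidad general}, intersected with the ball $B_r(u)$ for a suitable $r=\cO(n^2)$ so that $S\subseteq B_r(u)$; I would need to argue that one can truncate $S$ to this ball while preserving the min-degree-$3$ property — this is where I would either invoke that the stable set is itself min-degree $3$ and that stability is ``witnessed locally'', or, more robustly, argue that if $u$ is stable then it lies in a min-degree-$3$ inactive subgraph that is contained in a bounded region (e.g. the union of cycles and connecting structure reachable in the fixed-point computation). Because the inactive cells inside $B_r(u)$ agree in $c$ and $D(c)$ (the padding guarantees $B_r(u)$ falls entirely within the block of true copies of $x$), the same set $S$ is a valid witness in $D(c)$, so $u$ is stable for $D(c)$. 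Conversely, if $u$ is stable for $D(c)$, take its witness $S'$ from Lemma~\ref{pro: estabilidad general}; again $S'$ can be taken within $B_r(u)$, and within that ball $D(c)$ and $c$ have the same inactive cells, so $S'$ is also a valid witness in $c$ and $u$ is stable for $c$.

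The main obstacle I anticipate is making rigorous the claim that a stability witness can always be confined to a ball of radius $\cO(n^2)$ around $u$ — in other words, showing that the padding in $D(x)$ is genuinely large enough. One clean way: if $u$ is stable for $c$, then $u$ lies in some $k$-connected (here min-degree-$3$) component of $G[0]$; but $c$ is periodic with period $n$ in each direction, so any such component, if it exists at all, can be taken to live within a single fundamental domain's worth of translates, hence within $\cO(n)$ of $u$ — actually this needs care because the component could wind around the torus, so the honest bound uses the $\cO(n^2)$ fixed-point time: a cell becomes stable ``because'' of a min-degree-$3$ configuration that it can certify, and certification propagates at speed $1$, so the certificate lies within the set of cells that could influence $u$ in $\cO(n^2)$ steps, i.e. within $B_{\cO(n^2)}(u)$. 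I would spell this out as a short sub-claim, then check the arithmetic that $m=2n^2+3n$ and the stated offsets in $D(x)$ indeed place $B_{\cO(n^2)}(u)$ strictly inside the region tiled by true copies of $x$, with the surrounding inactive border far enough away not to interfere. Everything else is bookkeeping with coordinates.
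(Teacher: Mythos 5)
Your approach has a genuine gap, and it sits exactly at the point you flag as ``the main obstacle'': a stability witness in the sense of Lemma~\ref{pro: estabilidad general} can \emph{never} be confined to a ball $B_r(u)$, for any $r$. Take any finite nonempty set $S$ of cells of the plane and consider the leftmost cell $v$ of its topmost row: the north and west neighbors of $v$ lie outside $S$, so $v$ has degree at most $2$ in $G[S]$. Hence no finite subset of the grid induces a subgraph of minimum degree $3$, and every stability witness is necessarily unbounded --- in $c$ it is a periodic set wrapping around the $n\times n$ torus, and in $D(c)$ it either wraps around the $m\times m$ torus or anchors to the all-inactive border $B$ of $D(x)$. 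This breaks both of your implications as written. For the direction ``stable for $D(c)$ implies stable for $c$'' the problem is worse than a truncation issue: the natural witness in $D(c)$ contains cells of the padding border $B$, and those positions are in general \emph{not} inactive in $c$, so the witness cannot be transplanted at all. Your appeal to ``certification propagates at speed $1$'' conflates two different things: \emph{activation} propagates at speed $1$ (a statement about the event $F^t(c)_u=1$), whereas a stability certificate is a co-event with no bounded support.

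The paper avoids witness transfer entirely. For ``stable for $c$ $\Rightarrow$ stable for $D(c)$'' it observes that $D(c)\leq c$ coordinatewise and that rule $234$ is monotone (its activation set $\{2,3,4\}$ is upward closed), so the fixed point reached from $D(c)$ is dominated by the one reached from $c$. For the converse it argues in contrapositive: if $u$ is not stable for $c$ it becomes active within $n^2$ steps, the state of $u$ at time $t$ depends only on cells at distance at most $t$ from $u$, and $c$ and $D(c)$ agree on the ball of radius $n^2$ around $u$ --- this is the one place where your ``the padding is large enough'' observation is exactly right, but it must be applied to the activation event, not to a stability witness. If you insist on a witness-based argument for the forward direction, the correct move is not to truncate to a ball but to take the periodic witness restricted to the whole block of copies of $x$ inside $D(x)$ and adjoin the inactive border $B$; verifying minimum degree $3$ for that union is essentially the content of Lemmas~\ref{lem:3caminos} and~\ref{teo: TRI}, which the paper develops separately for the final characterization in Lemma~\ref{col: TRI}.
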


\begin{proof}
Suppose first that $u$ is stable for $c$, i.e. in the fixed point $c'$ reached from $c$, $c'_u = 0$. Call $c''$ the fixed point reached from $D(c)$. Note that $D(c) \leq c$ (where $\leq$ represent the inequalities coordinate by coordinate). Since the {\FNSM} automata is monotonic, we have that $c'' \leq c'$, so $c''_u = 0$. Then $u$ is stable for $D(c)$. 

Conversely, suppose that  $u \in [n]\times [n]$ is not stable for $c $, and let $S$ be the set of all  sites at distance at most $n^2$ from $u$.  We know that in each step on the dynamics of $c$, at least one site in the periodic configuration changes its state, then in at most $n^2$ steps the site $u$ will be activated. In other words, the state of $u$ depends only on the states of the sites at distance at most $n^2$ from $u$. Note that for every $v \in S$, $c_v = D(c)_v$. Therefore, $u$ is not stable in $D(c)$. \end{proof}

Note that the perimeter of width $n$ of  $D(x)$ contains only inactive sites. We call this perimeter the \emph{border} of $D(x)$, and $D(x) - B$ the \emph{interior} of $D(x)$. Note that $B$ is tri-connected and forms a set of sites stable for $D(c)$ thanks to Lemma~\ref{pro: estabilidad general}. We call $Z$ the set of sites $w$ in $[m]\times [m]$ such that $D(x)_w = 0$.

\begin{lemma}\label{lem:3caminos}

Let $u$ be a site in $[n] \times [n]$ stable for $D(c)$.  
Then, there exist three disjoint paths on $G[Z]$ connecting $u$ with sites of the border $B$. 
Moreover, the paths contain only sites that are stable for $B(c)$.
 
\end{lemma}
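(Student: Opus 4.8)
The plan is to build the three disjoint paths from the stable set guaranteed by Lemma~\ref{pro: estabilidad general}, pushing each path outward until it reaches the border $B$. Since $u$ is stable for $D(c)$, Lemma~\ref{pro: estabilidad general} gives a set $S \subseteq [m]\times[m]$ with $u \in S$, $D(c)_w = 0$ for all $w \in S$, and $G[S]$ of minimum degree $3$; in particular $S \subseteq Z$ and every site of $S$ is stable for $D(c)$. We may take $S$ to be the (unique) maximal such set, namely the set of all sites stable for $D(c)$, so that $B \subseteq S$ as noted just before the statement. The goal is then reduced to a purely graph-theoretic claim: in the graph $H = G[S]$, which has minimum degree $3$, the vertex $u$ is connected to the vertex subset $B \subseteq V(H)$ by three internally disjoint paths.

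First I would dispose of the easy structural reductions. If $u \in B$ there is nothing to prove, so assume $u$ lies in the interior. Consider the connected component $K$ of $H$ that contains $u$. I claim $K$ must meet $B$: if not, $K$ is a finite subgraph of the grid with minimum degree $3$ lying entirely in the interior region $D(x) - B$, and one shows such a configuration forces an infinite or border-touching structure — more carefully, I would argue directly that a minimum-degree-$3$ subgraph of the square grid confined to a bounded interior region and disjoint from $B$ cannot be stable in isolation because the $n$-wide inactive perimeter is precisely what feeds stability inward; alternatively, and more cleanly, note that by Lemma~\ref{lem:candDc} stability of $u$ for $D(c)$ is equivalent to stability for $c$, whose periodic nature together with the enlargement in $D(x)$ forces the stable set containing $u$ to propagate out to $B$. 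I would pick whichever of these is cleanest; the component $K$ therefore contains at least one border vertex.

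The core step is then Menger's theorem. Form an auxiliary graph $H'$ from $H[K]$ by contracting all of $B \cap K$ to a single vertex $b^*$ (or, equivalently, adding a new apex vertex $b^*$ adjacent to every vertex of $B \cap K$). It suffices to show that $u$ and $b^*$ cannot be separated by removing fewer than $3$ vertices; by Menger this yields three internally disjoint $u$–$b^*$ paths, which translate back to three disjoint paths in $G[Z]$ from $u$ to $B$. Suppose for contradiction that $\{a_1, a_2\}$ separates $u$ from $b^*$ in $H'$ (a single-vertex or empty cut is handled the same way, being even more restrictive). Let $W$ be the vertex set of the component of $H' - \{a_1,a_2\}$ containing $u$; then $W \subseteq S$, $u \in W$, $W \cap B = \emptyset$, and every edge of $H$ leaving $W$ goes to $\{a_1,a_2\}$. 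Now examine $G[W \cup \{a_1,a_2\}]$: every vertex of $W$ has all its $H$-neighbours inside $W \cup \{a_1,a_2\}$ and hence degree $\geq 3$ there, while $a_1$ and $a_2$ have degree $\geq 1$. This makes $W \cup \{a_1, a_2\}$ "almost" a min-degree-$3$ set, and I would repair the two low-degree vertices by deleting them and iterating: show that the vertices of $W$ can be shelled down so that at the end one is left with a nonempty subset $S' \subseteq S$, containing a vertex that was stable, whose induced subgraph still has minimum degree $3$ — but the shelling removes only $a_1, a_2$ and vertices that lose too many neighbours, and a counting/peeling argument (the grid is planar, so a min-degree-$3$ subgraph on $k$ vertices has $\geq 3k/2$ edges, bounded by $2k - O(\sqrt{k})$, forcing $k$ large and in particular forcing the process not to collapse to $\emptyset$ when it started from a genuine min-degree-$3$ set minus two vertices) gives the contradiction with the separator having size $2$.

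The main obstacle I anticipate is exactly this last peeling argument: turning "$u$ is in a min-degree-$3$ stable set but is $2$-separated from the border" into an outright contradiction. The cleanest route is probably to avoid re-deriving it from scratch and instead invoke Lemma~\ref{pro: estabilidad general} in reverse — if removing $\{a_1,a_2\}$ disconnects $u$ from $B$, then running the $234$-dynamics restricted to the "inner" side with $a_1, a_2$ treated as active shows $u$ activates, yet $a_1,a_2$ each have $\geq 3$ stable neighbours so were themselves stable, so they never actually became active, a contradiction with $u$ stable. I would formalize this dynamical argument carefully, since it sidesteps all the planar edge-counting, and present Menger + this dynamical non-separation claim as the two pillars of the proof.
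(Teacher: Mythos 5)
Your overall strategy (reduce to the min-degree-$3$ stable set $S$ from Lemma~\ref{pro: estabilidad general}, contract $B$ to an apex $b^*$, and apply Menger) is a legitimate alternative to the paper's argument, but the pillar it rests on --- that no set of at most two vertices separates $u$ from $B$ inside $G[S]$ --- is exactly the content of the lemma, and neither of your two proposed justifications actually establishes it. The edge-counting step fails numerically: a min-degree-$3$ set $W$ fed by a $2$-cut gives $2e(W)\geq 3|W|-8$, while the planar/grid isoperimetric bound gives $e(W)\leq 2|W|-2\sqrt{|W|}$; combining these yields $4\sqrt{|W|}\leq |W|+8$, which holds for \emph{every} $|W|\geq 1$, so "forcing $k$ large" produces no contradiction at all. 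The dynamical fallback is equally unproved: declaring $a_1,a_2$ active and running the $234$ dynamics on the inner side $W$ does not obviously activate $u$, because every cell of $W$ still has at least one inactive neighbour in $W$ and, a priori, $W$ could contain its own internally min-degree-$3$ core that survives --- ruling that out is precisely the same difficulty you started with, so the argument is circular. (The same vagueness affects your preliminary claim that the component of $u$ in $G[S]$ must meet $B$ at all; you offer two sketches and prove neither.)

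What actually closes the gap is a geometric, not purely graph-theoretic, fact about $\mathbb{Z}^2$: any finite set of cells confined to the interior has "extremal corner" cells (e.g.\ a cell minimizing $x+y$) with at most two neighbours inside the set, and these deficient cells sit in four spread-out positions that two cut vertices cannot all cover. The paper avoids having to formalize this by constructing the three paths explicitly: from three stable neighbours of $u$ it grows three \emph{monotone} paths (always stepping, say, east or else north), using only the local fact that a stable cell has at most one non-stable neighbour; monotonicity confines each path to its own quadrant (hence disjointness) and guarantees it exits the finite interior into $B$. I would recommend either adopting that construction or supplying a complete extremal-corner argument; as written, your proof has a genuine hole at its central step. (A secondary remark: the explicit quadrant structure of the paper's paths is reused in the proof of the following lemma on tri-connectivity of pairs $u,v$, so the unstructured paths Menger would hand you are also less convenient downstream.)
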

\begin{proof} 

Suppose that $ u$ is stable. From Lemma \ref{pro: estabilidad general} this implies that $u$ has three stable neighbors. Let $0 \leq i,j \leq n$ be such that $u = (i, j)$. 
We divide the interior of $D(c)$ in four quadrants: 
\begin{itemize}
\item The first quadrant contains all the sites in $D(x)$ with coordinates at the north-east of $u$, i.e., all the sites $v = (k,l)$ such that $k \geq i$ and $l \geq j$. 
\item The second quadrant contains all the sites in $D(x)$ with coordinates at the north-west of $u$, i.e., all the sites $v = (k,l)$ such that $k \leq i$ and $l \geq j$. 
\item The third quadrant contains all the sites in $D(x)$ with coordinates at the south-west of $u$, i.e., all the sites $v = (k,l)$ such that $k \leq i$ and $l \leq j$. 
\item The fourth quadrant contains all the sites in $D(x)$ with coordinates at the south-east of $u$, i.e., all the sites $v = (k,l)$ such that $k \geq i$ and $l \leq j$. 
\end{itemize}

We will construct three disjoint  paths in $G[Z]$ connecting $u$ with the border, each one passing through a different quadrant. The idea is to first choose three quadrants, and then extend three paths starting from $u$ iteratively picking different stable sites in the chosen quadrants, until the paths reach the border. 

Suppose without loss of generality that we choose the first, second and third quadrants, and  let  $u_1, u_2$ and $u_3$ be three stable neighbors of $u$, named according to Figure \ref{fig:u1u2u3}. 

\begin{figure}
        \begin{centering}
        \hfill
        \begin{subfigure}[b]{0.23\textwidth}
          \centering
          \begin{tikzpicture}
            \node (v2) at (-0.5,0.5) {$u$};
            \node (v1) at (-1.5,0.5) {$u_3$};
            \node (v3) at (0.5,0.5)  {$u_1$};
            \node (v4) at (-0.5,1.5) {$u_2$};
            \node (v5) at (-0.5,-0.5) {$?$};
            \draw (v1) -- (v2);
            \draw (v3) -- (v2) -- (v4) -- (v2) -- (v5) -- (v2);
          \end{tikzpicture}
                \caption{Case 1}
                \label{fig: camino Norte a}
        \end{subfigure}%
        \hfill
        \begin{subfigure}[b]{0.23\textwidth}
          \centering
        \begin{tikzpicture}
            \node (v2) at (-0.5,0.5) {$u$};
            \node (v1) at (-1.5,0.5) {$?$};
            \node (v3) at (0.5,0.5)  {$u_1$};
            \node (v4) at (-0.5,1.5) {$u_2$};
            \node (v5) at (-0.5,-0.5) {$u_3$};
            \draw (v1) -- (v2);
            \draw (v3) -- (v2) -- (v4) -- (v2) -- (v5) -- (v2);
          \end{tikzpicture}      
                \caption{Case 2}\label{fig: camino Norte b}
        \end{subfigure}%
        \hfill
        \begin{subfigure}[b]{0.23\textwidth}
                \centering
        \begin{tikzpicture}
            \node (v2) at (-0.5,0.5) {$u$};
            \node (v1) at (-1.5,0.5) {$u_2$};
            \node (v3) at (0.5,0.5)  {$u_1$};
            \node (v4) at (-0.5,1.5) {$?$};
            \node (v5) at (-0.5,-0.5) {$u_3$};
            \draw (v1) -- (v2);
            \draw (v3) -- (v2) -- (v4) -- (v2) -- (v5) -- (v2);
          \end{tikzpicture}           
                \caption{Case 3}\label{fig: camino Norte c}
        \end{subfigure}%
        \hfill
        \begin{subfigure}[b]{0.23\textwidth}
                \centering
          \begin{tikzpicture}
            \node (v2) at (-0.5,0.5) {$u$};
            \node (v1) at (-1.5,0.5) {$u_2$};
            \node (v3) at (0.5,0.5)  {$?$};
            \node (v4) at (-0.5,1.5) {$u_1$};
            \node (v5) at (-0.5,-0.5) {$u_3$};
            \draw (v1) -- (v2);
            \draw (v3) -- (v2) -- (v4) -- (v2) -- (v5) -- (v2);
          \end{tikzpicture}
                \caption{Caso 4}\label{fig: camino Norte d}
        \end{subfigure}%
        \caption{Four possible cases for $u_1, u_2$ and $u_3$. Note that one of these four cases must exist, since $u$ has at least three stable neighbors.  From $u_1$ we will extend a path through the first quadrant, from $u_2$ a path through the second quadrant, and from $u_3$ a path through the third quadrant.}
        \label{fig:u1u2u3}
        \end{centering}
  \end{figure}
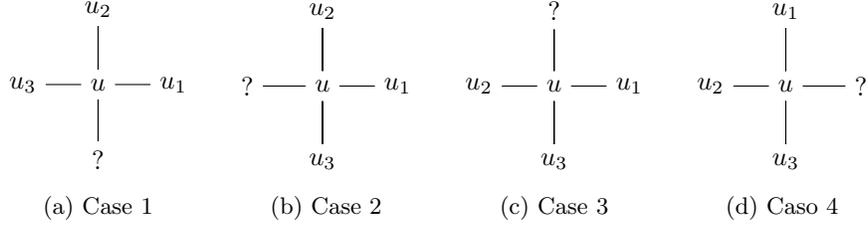
Starting from $u, u_1$, we extend the path $P_1$ through the endpoint different than $u$, picking iteratively a stable site at the east, or at the north if the site in the north is not stable. Such sites will always exist since by construction the current endpoint of the path will be a stable site, and stable sites must have three stable neighbors (so either one neighbor at east or one neighbor at north). The iterative process finishes when $P_1$ reaches the border. Note that necessarily $P_1$ is contained in the first quadrant. Analogously, we define paths $P_2$ and $P_3$, starting from $u_2$ and $u_3$, respectively, and extending the corresponding paths picking neighbors at the north-west or south-west, respectively. We obtain that $P_2$ and $P_3$ belong to the second and third quadrants, and are disjoint from $P_1$ and from each other. 

This argument is analogous for any choice of three quadrants. We conclude there exist three disjoint paths of stable sites from $u$ to the border~$B$. \end{proof}

\begin{lemma}\label{teo: TRI}
Let $u,v$ be two sites in $[n] \times [n]$ stable for $D(c)$.  Then, there exist three disjoint $u,v$-paths  in $G[Z]$ consisting only of  sites that are stable for $D(c)$.

\end{lemma}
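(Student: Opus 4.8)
The plan is to reduce the statement to a connectivity claim and then close it with Menger's theorem. Write $W$ for the set of all sites stable for $D(c)$ and work inside the induced subgraph $G[W]$; recall that $W\subseteq Z$, that $B\subseteq W$, and that by Lemma~\ref{pro: estabilidad general} every vertex of $G[W]$ has at least three neighbours in $W$. Since any path of $G[W]$ is automatically a path of $G[Z]$ using only stable sites, it is enough to exhibit three internally disjoint $u,v$-paths in $G[W]$, and by Menger's theorem this will follow once I show that no set $T\subseteq W\setminus\{u,v\}$ with $|T|\le 2$ separates $u$ from $v$ in $G[W]$.

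To prove that, I would argue by contradiction: assume such a $T$ exists and let $C_u,C_v$ be the distinct components of $G[W]-T$ containing $u$ and $v$. The first step is to invoke Lemma~\ref{lem:3caminos}, which gives three paths from $u$ to the border $B$, pairwise disjoint except at $u$ and using only stable sites; since they are internally disjoint and $u\notin T$, the $2$-set $T$ meets at most two of them, so at least one of them survives in $G[W]-T$ and places a vertex of $B$ inside $C_u$. The symmetric argument puts a vertex of $B$ inside $C_v$. The second step is to use that $B$ is tri-connected: then $G[B]-T$ is still connected, and being a connected subgraph of $G[W]-T$ that meets both $C_u$ and $C_v$ it forces $C_u=C_v$, a contradiction. (The same scheme with $T=\emptyset$ also shows that $u$ and $v$ lie in a single component of $G[W]$, so Menger applies; and if $u$ and $v$ happen to be adjacent I would take the edge $uv$ as one of the three paths and rerun the argument on $G[W]-uv$ to obtain the other two.)

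The step I expect to be the main obstacle is the claim that one of the three $u$-to-$B$ paths always survives the deletion of a $2$-set: this is exactly where internal disjointness of the paths produced by Lemma~\ref{lem:3caminos} is indispensable, and it is also where the hypothesis that $B$ is tri-connected does the real work, since it guarantees that after the deletion the border still forms a single connected ``bridge'' linking whatever is left of the $u$-side and the $v$-side. Everything else — translating ``no small separator'' into ``three disjoint paths'', and checking that separators can be assumed to avoid $u$ and $v$ — is a routine application of Menger's theorem.
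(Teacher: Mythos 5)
Your proof is correct, but it follows a genuinely different route from the paper. The paper proves the lemma constructively: it splits the interior of $D(x)$ into nine regions determined by $u$ and $v$, applies Lemma~\ref{lem:3caminos} to get three quadrant-paths from each of $u$ and $v$ to the border, and then splices them pairwise (either at their first intersection point or through the border $B$, using tri-connectivity of $B$ to keep the border detours disjoint), which requires a somewhat delicate case analysis of which quadrant-paths can meet. You instead reduce the statement to the absence of a separator of size at most $2$ between $u$ and $v$ in the graph induced by the stable sites and invoke Menger's theorem; the separator is excluded exactly by the two ingredients the paper also uses, namely the internally disjoint $u$-to-$B$ (and $v$-to-$B$) paths of Lemma~\ref{lem:3caminos} and the tri-connectivity of $B$, which guarantees $G[B]-T$ stays connected. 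Your route avoids the nine-region bookkeeping and the intersection case analysis entirely, at the cost of appealing to Menger (and of a small side case when $u$ and $v$ are adjacent, which you handle adequately; note that there you only need to exclude $1$-separators in $G[W]-uv$, which your survival argument gives with room to spare). Since the lemma is only used to certify correctness of the tri-connected-components algorithm, the non-constructive character of Menger costs nothing, and your argument is at the same level of rigor as the paper's regarding the asserted structure of the border (its tri-connectivity and stability), which both proofs take as given from the preceding discussion.
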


\begin{proof}
Let $ u, v $ be stable vertices. Without loss of generality, we can suppose that $u = (i,j)$, $v = (k,l)$ with $i \leq k$ and $j \leq l$ (otherwise we can rotate $x$ to obtain this property). In this case $u$ and $v$ divide the interior of $D(x)$ into nine regions (see Figure \ref{fig:nineregions}). Let $P_{u,2}, P_{u,3}, P_{u,4}$ be three disjoint paths that connect $u$ with the border through the second, third and fourth quadrants of $u$.  These paths exist according to the proof of Lemma  \ref{lem:3caminos}. Similarly, define $P_{v,1}, P_{v,2}, P_{v,3}$
three disjoint paths that connect $v$ to the border through the first, second and third quadrants of $v$.

\begin{figure}
  \centering
\usetikzlibrary{snakes}

 \begin{tikzpicture}[scale=0.8]
\draw[<->,thick,gray!50] (2.5,-1.5) -- (2.5,3.5);
\draw[<->,thick,gray!50] (-1,2) -- (4,2);
\draw [->,snake=zigzag](2.5,2) -- (3.5,4);  
\draw [->,snake=zigzag](2.5,2) -- (-1.5,3.5);  
\draw [->,snake=zigzag](2.5,2) -- (4.5,-1);  
\draw[<->,thick,gray!50] (0.5,-1.5) -- (0.5,3.5);
\draw[<->,thick,gray!50] (-1,0) -- (4,0);

\draw [->,snake=zigzag](0.5,0) -- (-1,4);   
\draw [->,snake=zigzag](0.5,0) -- (-1.5,-1);    
\draw [->,snake=zigzag](0.5,0) -- (3.5,-2);     
\node at (0.7815,0.3003) {$u$};
\node at (1.3051,2.8092) {$P_{v,2}$};
\node at (3.5,3.0524) {$P_{v,1}$};
\node at (3.4808,1.1287) {$P_{v,4}$};
\node at (2.2383,1.6429) {$v$};
\node at (-0.35,1.0002) {$P_{u,2}$};
\node at (-0.2134,-0.8) {$P_{u,3}$};
\node at (1.686,-0.3376) {$P_{u,4}$};
\draw[double]  (-1,3.5) rectangle (4,-1.5);
\node at (2.5,2) {$\bullet$};
\node at (0.5,0) {$\bullet$};

\node  at (-2,-2.5) {$0$};
\node  at (-2,3.5) {$0$};
\node  at (-2,1.5) {$0$};
\node  at (-2,2.5) {$0$};
\node  at (-2,-0.5) {$0$};
\node  at (-2,0.5) {$0$};
\node  at (-2,-1.5) {$0$};

\node  at (5,3.5) {$0$};
\node  at (5,1.5) {$0$};
\node  at (5,2.5) {$0$};
\node  at (5,-0.5) {$0$};
\node  at (5,0.5) {$0$};
\node  at (5,-1.5) {$0$};

\node  at (-1,4.5) {$0$};
\node  at (-2,4.5) {$0$};
\node  at (1,4.5) {$0$};
\node  at (0,4.5) {$0$};
\node  at (3,4.5) {$0$};
\node  at (2,4.5) {$0$};
\node  at (5,4.5) {$0$};
\node  at (4,4.5) {$0$};

\node  at (-1,-2.5) {$0$};
\node  at (1,-2.5) {$0$};
\node  at (0,-2.5) {$0$};
\node  at (3,-2.5) {$0$};
\node  at (2,-2.5) {$0$};
\node  at (5,-2.5) {$0$};
\node  at (4,-2.5) {$0$};
\node at (-0.6134,3.1901) {$\bullet$};
\draw[dashed]  plot[smooth, tension=.7] coordinates {(3.5,-2) (4.5,-2) (4.5,-1)};
\draw[dashed]  plot[smooth, tension=.7] coordinates {(-1.5,-1) (-2,3)  (-0.5,4.5) (3.5,4)};
\end{tikzpicture}

    \caption{Vertices $ u $ and $ v $ divides the interior of $D(x)$ into four regions each one.
Together they split the space into nine regions. According to Lemma \ref{lem:3caminos}, we can choose three disjoint paths connecting $u$ and $v$, in such a way that each of the nine regions intersect at most one path. We use the border of $D(x)$ to connect the paths that do not intersect in the interior of $D(x)$.}
\label{fig:nineregions}
    \end{figure}
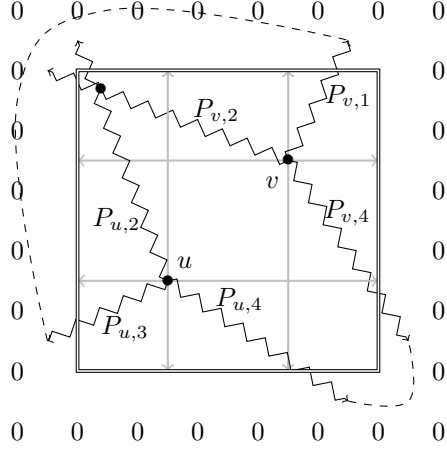

Observe fist that $P_{u,3}$  touch regions that are disjoint from the ones touched by $P_{v,1}, P_{v,2}$ and $P_{v,3}$. The same is true for $P_{v,1}$ with respect to $P_{u,2}, P_{u,3}, P_{u,4}$.
The first observation implies that paths $P_{u,3}$ and $P_{v,1}$ reach the border without intersecting any other path.  Let $w_1$ and $w_2$ be respectively the intersections of $P_{u,3}$ and $P_{v,1}$ with the border. Let now $P_{w_1, w_2}$ be any path in $G_B$ connecting $w_1$ and $w_2$. We call $P_{1,3}$ the path induced by $P_{u,3} \cup P_{w_1, w_2} \cup P_{v,1}$. 
 
Observe now that $P_{u,2}$ and $P_{v,4}$ must be disjoint, as well as $P_{u,4}$ and $P_{v,2}$. This observation implies that $P_{u,2}$ either intersects $P_{v,2}$ or it do not intersect any other path, and the same is true for  $P_{u,4}$ and $P_{v,4}$. If $P_{u,2}$ does not intersect $P_{v,2}$, then we define a path $P_{2,2}$ in a similar way than $P_{1,3}$, i.e., we connect the endpoints of $P_{u,2}$  and $P_{v,2}$ through a path in the border (we can choose this path disjoint from $P_{1,3}$ since the border is tri-connected).  Suppose now that  $P_{u,2}$ intersects $P_{v,2}$. Let $w$ the first site where $P_{u,2}$ and $P_{v,2}$ intersect, let $P_{u,w}$ be the $u,w$-path contained in $P_{u,2}$, and let $P_{w,v}$ be the $w,v$-path contained in $P_{v,2}$. We call in this case $P_{2,2}$ the path $P_{u,w} \cup P_{w,v}$. Note that also in this case $P_{2,2}$ is disjoint from $P_{1,3}$.  Finally, we define $P_{4,4}$ in a similar way using paths $P_{u,4}$ and $P_{v,4}$. We conclude that $P_{1,3}$, $P_{2,2,}$ and $P_{4,4}$ are three disjoint paths of stable sites connecting $u$ and $v$ in $G[Z]$. 

\end{proof}

We are now ready to show our characterization of stable sets of vertices.

\begin{lemma}\label{col: TRI}

Let $x \in \{0,1\}^{[n]\times[n]}$ be a finite configuration, and let $u$ be a site in $[n] \times [n]$. Then, $u$ is stable for $c = c(x)$ if and only if $u$ is contained in a tri-connected component of $G[Z]$. 


\end{lemma}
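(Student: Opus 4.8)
The plan is to combine the three previous lemmas into a single equivalence. Recall that Lemma~\ref{lem:candDc} tells us $u$ is stable for $c=c(x)$ if and only if $u$ is stable for $D(c)$, so it suffices to prove the characterization for $D(c)$: namely, that $u$ is stable for $D(c)$ if and only if $u$ lies in a tri-connected component of $G[Z]$, where $Z$ is the set of sites that are inactive in $D(x)$.

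For the forward direction, assume $u$ is stable for $D(c)$. I want to exhibit a tri-connected subgraph of $G[Z]$ containing $u$. Let $T$ be the set of all sites in $[n]\times[n]$ that are stable for $D(c)$, together with all sites of the border $B$ (which are stable by Lemma~\ref{pro: estabilidad general}, since $B$ has minimum degree $3$ in $G[Z]$). Every site of $T$ is inactive in $D(x)$, so $T\subseteq Z$. Now I invoke Lemma~\ref{teo: TRI}: for any two sites $u,v$ in $[n]\times[n]$ stable for $D(c)$ there are three disjoint $u,v$-paths in $G[Z]$ consisting only of stable sites; and by Lemma~\ref{lem:3caminos} every stable $u$ is connected to the border by three disjoint paths of stable sites, and the border is itself tri-connected (being a cycle-rich perimeter of width $n$). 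Hence any two vertices of $T$ — whether both are interior stable sites, one is interior and one on the border, or both on the border — are joined by three internally disjoint paths inside $G[T]\subseteq G[Z]$. Therefore $G[T]$ is tri-connected, $u\in T$, and $u$ lies in the (unique maximal) tri-connected component of $G[Z]$ containing $T$.

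For the converse, assume $u$ lies in a tri-connected component $H$ of $G[Z]$. Every vertex of $H$ is inactive in $D(x)$ by the definition of $Z$, and every vertex of a tri-connected graph has degree at least $3$ inside that graph, so $G[V(H)]$ has minimum degree $3$. Applying Lemma~\ref{pro: estabilidad general} with $S=V(H)$ (translated back to the finite configuration $D(x)$), we conclude that every site of $H$, in particular $u$, is stable for $D(c)$. Combined with Lemma~\ref{lem:candDc}, $u$ is stable for $c=c(x)$, which completes the equivalence.

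The main obstacle I anticipate is the bookkeeping in the forward direction: one must be careful that the ``three disjoint paths'' produced separately by Lemma~\ref{lem:3caminos} (from $u$ to $B$) and by Lemma~\ref{teo: TRI} (between pairs) can genuinely be assembled so that $G[T]$ is tri-connected as a single graph, rather than merely pairwise $3$-linked along different vertex sets — but since Lemma~\ref{teo: TRI} already delivers three internally disjoint paths for \emph{every} pair within $G[Z]$ using only stable sites and the tri-connected border as a ``hub'', $3$-connectivity of $G[T]$ follows directly from the definition. A secondary point to state cleanly is that a tri-connected \emph{component} is a maximal tri-connected induced subgraph, so $u$ being in \emph{some} tri-connected subgraph (namely $G[T]$) is equivalent to $u$ being in a tri-connected component; this is immediate from maximality. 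With these observations the proof is essentially a three-line synthesis of the preceding lemmas.
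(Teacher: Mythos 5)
Your proposal is correct and follows essentially the same route as the paper: reduce to $D(c)$ via Lemma~\ref{lem:candDc}, use Lemma~\ref{teo: TRI} (with the tri-connected border as a hub) to place the set of stable sites inside a tri-connected component for the forward direction, and apply Lemma~\ref{pro: estabilidad general} to the minimum-degree-$3$ component for the converse. The extra care you take with border vertices and path assembly is a harmless refinement of the same argument.
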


\begin{proof}

From Lemma \ref{lem:candDc}, we know that $u$ is stable for $c$  if and only if it is stable for $D(c)$. Let $S$ be the set of sites stable for $D(c)$. We claim that $S$ is a tri-connected component of $G[Z]$. From Lemma \ref{teo: TRI}, we know that for every pair of sites in $S$ there exist three disjoint paths in $G[S]$ connecting them, so the set $S$ must be contained in some tri-connected component $T$ of $G[Z]$. Since $G[T]$ is a graph of degree at least three, and the sites in $T$ are contained in $Z$, then Lemma \ref{pro: estabilidad general} implies that $T$ must form a stable set of vertices, then $T$ equals $S$. 

On the other direction,  Lemma \ref{pro: estabilidad general} implies that any tri-connected component of $G[Z]$ must form a stable set of vertices for $D(c)$, so $u$ is stable for $c$. \end{proof}

We are now ready to study the complexity of \stability for this rule.

\begin{proof}[Proof of Theorem \ref{thm: 234 square NC} ]
Let $(x, u)$ be an input of {\stability}, i.e. $x$ is a finite configuration of dimensions $n\times n$, and $u$ is a site in $[n]\times [n]$. Our algorithm for {\stability} first computes from $x$ the finite configuration $D(x)$. Then, the algorithm uses the algorithm of Proposition \ref{prop:trijaja} to compute the tri-connected components of $G[Z]$, where $Z$ is the set of sites $w$ such that $D(x)_w = 0$. Finally, the algorithm answers \emph{no} if $u$  belongs to some tri-connected component of $G[Z]$, and answer \emph{yes} otherwise.

\begin{algorithm}[h]
\caption{Solving \stability 234}\label{alg: PRED 234}
\begin{algorithmic}[1]
\REQUIRE $x$ a finite configuration of dimensions $n \times n$ and $u \in [n]\times [n]$ such that $x_u =0$.   
\STATE Compute the finite configuration $D(x)$ of dimensions $m\times m$ with $m= 2n^2+3n$
\STATE Compute the set $Z = \{ w \in [m]\times [m] ~:~ D(x)_w = 0\}$.
\STATE Compute the graph $G[Z]$.
\STATE Compute the set $\mathcal{T}$ of tri-connected components of $G[Z]$.
\FORALLP{ $T \in \mathcal{T}$}
  \IF{$u \in T$}
    \RETURN \emph{Accept}
  \ENDIF  
\ENDFORALLP
\RETURN  \emph{Reject}
\end{algorithmic}
\end{algorithm}

The correctness of Algorithm \ref{alg: PRED 234} is given by Lemma \ref{col: TRI}. Indeed, the algorithm answers \emph{Reject} on input $(x, u)$ only when $u$ does not belong to a tri-connected component of $G[Z]$. From Lemma \ref{col: TRI}, it means that $u$ is not stable, so there exists $t>0$ such that $F^t(c(x))_u = 1$. 

Let $N=n^{2}$ the size of the input.
Step {\bf 1} can be done in $\cO(\log N)$ time with $m^2  = \cO(N^2)$ processors: one processor for each site of $B(x)$ computes from $x$ the value of the corresponding site in $B(x)$. 
Step {\bf 2} can be done in time in $\cO(\log N)$ with $\cO(N^2)$ processors, representing $Z$ as a vector in $\{0,1\}^{m^2}$, each coordinate is computed by a processor. 
Step {\bf 3} can be done in time $\cO(\log N)$ and $\cO(N^2)$ processors: we give one processor to each site in $Z$, which fill the corresponding four coordinates of the adjacency matrix of $G[Z]$. Step {\bf 4} can be done in time $\mathcal{O}(\log^2N)$ with $\mathcal{O}((N^2)^4)$ processors using the algorithm of Proposition \ref{prop:trijaja}. Finally, steps {\bf 5} to {\bf 10} can be done in time $\cO(\log N)$ with $\cO(N^2)$ processors: the algorithm checks in parallel if $u$ is contained in each tri-connected components.
All together the algorithm runs in time $\mathcal{O}(\log^2N)$ with $\cO(N^8)$ processors. 
\end{proof}

\subsection{Algebraic Rules}
We will now study the family of FTCA where the cells become active with one or two neighbors. We consider there the rules $12$, $123$, $124$. Of course, rule $1234$ will fit in our analysis, but we already know that this rule is trivial. As we already mentioned, these rules are \emph{algebraic} in the sense that, in order to answer the \stability problem, we will \emph{accelerate} the dynamics using algebraic properties of these rules.

In the following, we assume that the cells are placed in the Cartesian coordinate system, where each cell is placed in a coordinate in $\mathbb{N}\times \mathbb{N}$. Moreover, without loss of generality,  our decision cell is $u=(0,0)$ and the configuration $c$ has at least one active cell. Let $\tau>1$ be the distance from $u$ to the first active cell. Remember that we called $D_{\tau}$ the set of cells at distance $\tau$ from $u$. We also call $d_I(\tau)$ the \emph{diagonal at distance $\tau$ of $u$ in the first quadrant}, defined as follows: 
$$d_I(\tau):=\{(i,j)\in \mathbb{N}^{2}: |i-j|=\tau \textrm{ and } i,j>0\}.$$
Then, we place ourselves in the case where all the cells in $D_{\tau-1}$ are inactive.

Let $c'$ be the configuration obtained after one step, i.e. $c' = F(c)$, where $F$ is one of the rules in $\{12, 123, 124\}$. Notice that all the cells in $D_{\tau-1}$ will remain inactive in $c'$. Moreover, the states of cells in $d_I(\tau-1)$ can be computed as follows (see Figure \ref{fig: or tec}):
$$\forall(i,j)\in d_I(\tau-1), ~c'_{i,j} = c_{i+1,j}\vee c_{i,j+1}.$$ Where $\vee$ is the OR operator (i.e. $c_{i,j}' = 1$ if $ c_{i+1,j}=1$ or  $c_{i,j+1}=1$).
If we inductively apply this formula, we deduce:
$$F^{\tau-2}(c)_{1,1} = \bigvee_{(i,j)\in d_I(\tau)}c_{i,j}.$$
Note that if the cell $(1,1)$ is inactive at time $\tau-1$, then necessarily all the cells in $d_I(\tau)$ are inactive at time $0$. Moreover, by we can apply the same ideas to every cell $(i,j) \in \bigcup_{1 \leq k \leq \tau} d_I(k)$ such that $i,j \geq 1$, obtaining:
  \begin{equation}\label{eqn:OR12}
F^{\tau-1-i}(c)_{i,1} = \bigvee_{\scalebox{0.75}{$\begin{array}{c}
      (k,j)\in D(\tau)\\
      k\geq i \\
    \end{array}$}}c_{k,j} \quad \textrm{and} \quad F^{\tau-1-j}(c)_{1,j} = \bigvee_{\scalebox{0.75}{$\begin{array}{c}
      (i,k)\in D(\tau)\\
      k \geq j \\
    \end{array}$}}c_{i,k}. 
    \end{equation}
Analogously, we can define $d_{II}(\tau)$ (resp. $d_{III}(\tau)$, $d_{IV}(\tau)$) the \emph{diagonals at distance $\tau$ of $u$ in the second (resp. third, fourth) quadrant}, and deduce similar formulas in the other three quadrants.  Concretely we can compute the states the states of cells $(\pm i,\pm 1)$, $i =1,...,\tau-1$ in time $\tau - i$ and the states of cells $(\pm 1,\pm j)$, $j =1,...,\tau-1$ in time $\tau - j$. These cells are represented as the hatched patterns in Figure \ref{fig: diamond}.
This way of computing cells we call it \emph{OR technique}.

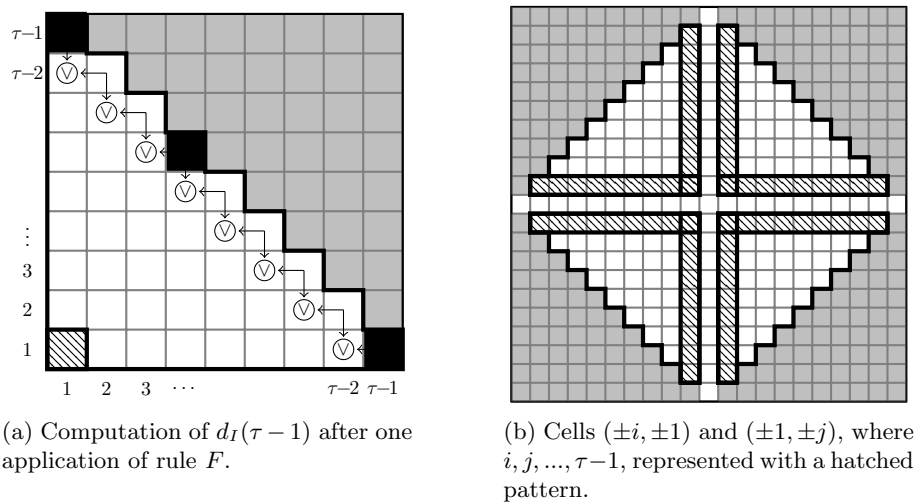
\begin{figure}
  \centering
    \begin{subfigure}[t]{0.45\textwidth}
    \centering
      \begin{tikzpicture}[scale=0.525, every node/.style={scale=0.75}]
       
\draw [](1.5,8.5)circle [radius=0.26] node (v4) {$\vee$};
\draw [](2.5,7.5)circle [radius=0.26] node (v5) {$\vee$};
\draw [](3.5,6.5)circle [radius=0.26] node (v6) {$\vee$};
\draw [](4.5,5.5)circle [radius=0.26] node (v7) {$\vee$};
\draw [](5.5,4.5)circle [radius=0.26] node (v8) {$\vee$};
\draw [](6.5,3.5)circle [radius=0.26] node (v9) {$\vee$};
\draw [](7.5,2.5)circle [radius=0.26] node (v10) {$\vee$};
\draw [](8.5,1.5)circle [radius=0.26] node (v11) {$\vee$};

\fill [color=gray,ultra thick,opacity=0.5](2,10) -- (2,9) node (v13) {} -- (3,9) node (v14) {} -- (3,8) -- (4,8) -- (4,7) --  (5,7) node (v1) {} -- (5,6) -- (6,6) -- (6,5) -- (7,5) -- (7,4) -- (8,4) -- (8,3) -- (9,3) -- (9,2) -- (10,2) node (v2) {} -- (10,1) -- (10,10) node (v3) {} -- (1,10)--(2,10) node (v12) {};

\fill[black] (1,9) rectangle (2,10); 
\fill[black] (4,6) rectangle (5,7); 
\fill[black] (9,1) rectangle (10,2); 

\draw[color=gray, thick] (1,1) grid (10,10);
\draw[color=black, thick] (1,1) rectangle (10,10);

\draw [black,ultra thick](2,10) -- (2,9) node (v13) {} -- (3,9) node (v14) {} -- (3,8) -- (4,8) -- (4,7) --  (5,7) node (v1) {} -- (5,6) -- (6,6) -- (6,5) -- (7,5) -- (7,4) -- (8,4) -- (8,3) -- (9,3) -- (9,2) -- (10,2) node (v2) {} -- (10,1) -- (1,1) node (v3) {} -- (1,10)--(2,10) node (v12) {};
\draw [black,ultra thick,pattern=north west lines, pattern color=black]  (2,2) rectangle (v3);

\draw [->] (1.5,9.5) -- (v4) {};
\draw [->] (2.5,8.5) -- (v4);

\draw [->] (2.5,8.5) -- (v5) {};
\draw [->] (3.5,7.5) -- (v5);

\draw [->] (3.5,7.5) -- (v6) {};
\draw [->] (4.5,6.5) -- (v6);

\draw [->] (4.5,6.5) -- (v7) {};
\draw [->] (5.5,5.5) -- (v7);

\draw [->] (5.5,5.5) -- (v8) {};
\draw [->] (6.5,4.5) -- (v8);

\draw [->] (6.5,4.5) -- (v9) {};
\draw [->] (7.5,3.5) -- (v9);

\draw [->] (7.5,3.5) -- (v10) {};
\draw [->] (8.5,2.5) -- (v10);

\draw [->] (8.5,2.5) -- (v11) {};
\draw [->] (9.5,1.5) -- (v11);

\draw (v12) -- (v13) -- (v14);

\node at (0.5,9.5) {$\tau\!\!-\!\!1$};
\node at (0.5,8.5) {$\tau\!\!-\!\!2$};
\node at (0.5,4.5) {$\vdots$};
\node at (0.5,3.5) {$3$};
\node at (0.5,2.5) {$2$};
\node at (0.5,1.5) {$1$};

\node at (9.5,0.5) {$\tau\!\!-\!\!1$};
\node at (8.5,0.5) {$\tau\!\!-\!\!2$};
\node at (4.5,0.5) {$\cdots$};
\node at (3.5,0.5) {$3$};
\node at (2.5,0.5) {$2$};
\node at (1.5,0.5) {$1$};
      \end{tikzpicture}
    \caption{Computation of $d_I(\tau-1)$ after one application of rule $F$.} \label{fig: or tec} 
  \end{subfigure}
  \hfill
  \begin{subfigure}[t]{0.45\textwidth}
    \centering
      \begin{tikzpicture}[scale=0.25]
      \draw[color=gray, thick] (-11,-11) grid (10,10);
      \draw[color=black, thick] (-11,-11) rectangle (10,10);

      \fill [color=gray,ultra thick,opacity=0.5](1,9) -- (1,8)-- (2,8) -- (2,7) -- (3,7) -- (3,6) --  (4,6) -- (4,5) -- (5,5) -- (5,4) -- (6,4) -- (6,3) -- (7,3) -- (7,2) -- (8,2) -- (8,1) -- (9,1) -- (9,0) -- (10,0) -- (10,10)  -- (0,10)-- (0,9)--(1,9);
      \fill [color=gray,ultra thick,opacity=0.5](-2,9) -- (-2,8)-- (-3,8) -- (-3,7) -- (-4,7) -- (-4,6) --  (-5,6) -- (-5,5) -- (-6,5) -- (-6,4) -- (-7,4) -- (-7,3) -- (-8,3) -- (-8,2) -- (-9,2) -- (-9,1) -- (-10,1) -- (-10,0) -- (-11,0) -- (-11,10)  -- (-1,10)-- (-1,9)--(-2,9);
      \fill [color=gray,ultra thick,opacity=0.5](1,-10) -- (1,-9)-- (2,-9) -- (2,-8) -- (3,-8) -- (3,-7) --  (4,-7) -- (4,-6) -- (5,-6) -- (5,-5) -- (6,-5) -- (6,-4) -- (7,-4) -- (7,-3) -- (8,-3) -- (8,-2) -- (9,-2) -- (9,-1) -- (10,-1) -- (10,-11)  -- (0,-11)  -- (0,-10)--(1,-10);
      \fill [color=gray,ultra thick,opacity=0.5](-2,-10) -- (-2,-9)-- (-3,-9) -- (-3,-8) -- (-4,-8) -- (-4,-7) --  (-5,-7) -- (-5,-6) -- (-6,-6) -- (-6,-5) -- (-7,-5) -- (-7,-4) -- (-8,-4) -- (-8,-3) -- (-9,-3) -- (-9,-2) -- (-10,-2) -- (-10,-1) --(-11,-1) -- (-11,-11)  -- (-1,-11)-- (-1,-10)--(-2,-10);

      \draw [black,ultra thick](1,9) -- (1,8)-- (2,8) -- (2,7) -- (3,7) -- (3,6) --  (4,6) -- (4,5) -- (5,5) -- (5,4) -- (6,4) -- (6,3) -- (7,3) -- (7,2) -- (8,2) -- (8,1) -- (9,1) -- (9,0) -- (0,0)  -- (0,9) node (v1) {}--(1,9);z
      \draw [black,ultra thick](-2,9) -- (-2,8)-- (-3,8) -- (-3,7) -- (-4,7) -- (-4,6) --  (-5,6) -- (-5,5) -- (-6,5) -- (-6,4) -- (-7,4) -- (-7,3) -- (-8,3) -- (-8,2) -- (-9,2) -- (-9,1) -- (-10,1) -- (-10,0) -- (-1,0)  -- (-1,9)--(-2,9);
      \draw [black,ultra thick](1,-10) -- (1,-9)-- (2,-9) -- (2,-8) -- (3,-8) -- (3,-7) --  (4,-7) -- (4,-6) -- (5,-6) -- (5,-5) -- (6,-5) -- (6,-4) -- (7,-4) -- (7,-3) -- (8,-3) -- (8,-2) -- (9,-2) -- (9,-1) node (v2) {} -- (0,-1)  -- (0,-10) node (v3) {}--(1,-10);
      \draw [black,ultra thick](-2,-10) -- (-2,-9)-- (-3,-9) -- (-3,-8) -- (-4,-8) -- (-4,-7) --  (-5,-7) -- (-5,-6) -- (-6,-6) -- (-6,-5) -- (-7,-5) -- (-7,-4) -- (-8,-4) -- (-8,-3) -- (-9,-3) -- (-9,-2) -- (-10,-2) node (v5) {} -- (-10,-1) -- (-1,-1)  -- (-1,-10)--(-2,-10) node (v4) {};

      \draw [black,ultra thick,pattern=north west lines, pattern color=black] (-2,9) rectangle  (-1,0);
      \draw [black,ultra thick,pattern=north west lines, pattern color=black] (-10,1) rectangle  (-1,0);
      \draw [black,ultra thick,pattern=north west lines, pattern color=black] (v1) rectangle (1,0);
      \draw [black,ultra thick,pattern=north west lines, pattern color=black] (0,1) rectangle (9,0);
      \draw [black,ultra thick,pattern=north west lines, pattern color=black] (0,-2) rectangle (v2);
      \draw [black,ultra thick,pattern=north west lines, pattern color=black] (v3) rectangle (1,-1);
      \draw [black,ultra thick,pattern=north west lines, pattern color=black] (v4) rectangle (-1,-1) node (v6) {};
      \draw [black,ultra thick,pattern=north west lines, pattern color=black] (v5) rectangle (v6);
      \end{tikzpicture}   
    \caption{Cells $(\pm i,\pm 1)$ and $(\pm 1, \pm j)$,  where $i,j,...,\tau-1$, represented with a hatched pattern. }
    \label{fig: diamond}
  \end{subfigure}
  \caption{Computation in a disc of radius $\tau$, where only the cells at distance $\tau$ from $(0,0)$ can be initially active.}
\end{figure}

We define the following sets of cells.
\begin{itemize}
  \item The {\bf north-east triangle} is set of cells in the first quadrant between the cells in the hatched pattern (including them) and the gray zone, 
  i.e. is the set $D_{\tau, I} = \{(i,j)\in \mathbb{N}^2: |i-j| \leq \tau \textrm{ and } i,j \geq 1 \}$. 
  Analogously we define north-west, south-west and south-east triangles, and denote them $D_{\tau, II}, D_{\tau, III}$ and $D_{\tau, IV}$, respectively.
  \item The {\bf north corridor} is the set of cells in the positive x-axis contained in the disc , i.e. is the set $\{(0,i)\in \mathbb{N}^2: 1 \leq i\leq \tau \}$.
  Analogously we define west, south and east corridors.
\end{itemize}

Consider now $B_2$, the ball of radius $2$ centered in $u = (0,0)$, and name the vertices of the ball as depicted in Figure  \ref{fig: von Neumann r2}. We can compute the states of cells $b,d,f$ and $h$ in time $\tau-1$ using the OR technique. In order to solve \stability, the use of this information will depend on the which rule that we are considering. In the following, we will show how to use this information to solve stability for rule $123$, then for rule $12$ and finally for rule $124$.

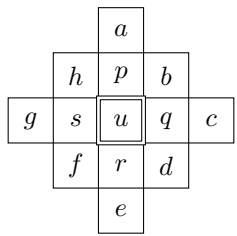
\begin{figure}
    \centering
    \begin{tikzpicture}[scale=0.3]

    \draw  (-8,11) rectangle (2,9);
    \draw  (-4,15) rectangle (-2,5);
    \draw  (-6,13) rectangle (0,7);

    \node at (-3,14) {$a$};
    \node at (-1,12) {$b$};
    \node at (1,10)  {$c$};
    \node at (-1,8)  {$d$};
    \node at (-3,6)  {$e$};
    \node at (-5,8)  {$f$};
    \node at (-7,10) {$g$};
    \node at (-5,12) {$h$};
    \node at (-3,12) {$p$};
    \node at (-1,10) {$q$};
    \node at (-3,8)  {$r$};
    \node at (-5,10) {$s$};
    \node at (-3,10) {$u$};

    \draw [double distance=1pt] (-4,11) rectangle (-2,9); 

    \end{tikzpicture}
    \caption{Names for the cells in the ball of $B_2$. Cells  $p,q,r$ and $s$ correspond to the neighbors of cell $u$.}
    \label{fig: von Neumann r2} 
 \end{figure}

\subsubsection{Solving \stability for rule $123$}

Rule $123$ is the simplest of the Algebraic rules. Its simplicity, follows mainly from the following claim. Remember that $\tau$ is the distance from $u$ to the nearest active cell.\\

{\bf Claim 1:} Either $u$ becomes active at time $\tau$ or $u$ is stable. \\

We know that at least one of the cells in $\{b, d,  f, h\}$ will be active at time $\tau -2$. Indeed the states of those cells depend only on the logical disjunction of the sites in the border of the disc $D_{\tau}$, and we are assuming that there is at least one active site in $D_{\tau} \setminus D_{\tau-1}$. Therefore, in time $\tau -1$, necessarily at least one neighbor $w \in N(u)$ will become active, since it will have more than one active neighbor, and less than four (because $u$ is inactive at time $\tau-1$). Suppose now that $u$ does not become active in time $\tau$. Since $u$ has one active neighbor in time $\tau-1$, the only possibility is that the four neighbors of $u$ are active $\tau-1$. Since the rule is freezing,  $u$ will remain stable in inactive state. 

At this point, we know how to compute the states of $b,d,f$ and $h$ in time $\tau -2$, and we know that the only possibility for $u$ to become active is on time $\tau$. Therefore, in order to decide {\stability} for rule $123$ we need to compute the states of cells $p, q, r$ and $s$ in time $\tau-1$. 
In the following, we show how to compute the state of site $p$ in time $\tau-1$. The arguments for computing cells $q,r$ and $s$ will be deduced by analogy (considering the same arguments in another quadrant). 

Call $x_{(i,j)}^t$ the state of cell $(i,j)$ in time $t$, with the convention of $x^0_{(i,j)}$ is the input state of $(i,j)$. 
First, note that, for all $i\in \{0, \dots, \tau-2\}$, the state of $(0, i)$ in time $1$ will be inactive. Moreover, the state of cell $(0, \tau-1)$ will be active if and only if at least one of its three neighbors $(-1, \tau-1)$, $(1, \tau-1)$ or $(0, \tau)$ is active at time $0$.  Then, we deduce the following formula for $x_{(0, \tau-1 )}^1$:
$$x^{1}_{(0,\tau-1)}=x^{0}_{(-1,\tau-1)} \vee x^{0}_{(0,\tau)} \vee x^{0}_{(1,\tau-1)}$$
For the same reasons, we notice that at time $j>0$ the nearest neighbor from $u$ is in the border of $D_{\tau-j}$. Therefore,
$$x^{j}_{(0,\tau-j)}=x^{j-1}_{(-1,\tau-j)} \vee x^{j-1}_{(0,\tau)} \vee x^{j-1}_{(1,\tau-j)}$$
In particular
$$x^{\tau-1}_p = x^{\tau-1}_{(0,1)}=x^{\tau-2}_{(-1,1)} \vee x^{\tau-2}_{(0,1)} \vee x^{\tau-2}_{(1,1)}.$$
Remember that we know how to compute $x^{\tau-2}_{(-1,1)}$ and  $x^{j-1}_{(1,1)}$ according to Equation \ref{eqn:OR12}. We deduce that we can compute ,$x_p^{\tau-1}$ as follows:

\begin{equation}\label{eqn:calculop12} x^{\tau-1}_{p}=\bigvee_{k=1}^{\tau} x^{0}_{(-k,\tau-k)}\vee x^{0}_{(0,\tau)} \vee \bigvee_{k=1}^{\tau} x^{0}_{(k,\tau-k)}
\end{equation}
In words, the state of $p$ at time $\tau-1$ can be computed as the OR of all the cells to the north of the $u$ contained in $D_{\tau} \setminus D_{\tau-1}$.
Analogously we can compute $x^{\tau-1}_{q}$, $x^{\tau-1}_{r}$ and $x^{\tau-1}_{s}$.


%

\subsubsection{Solving stability for rule $12$}

For rule $12$ the computation of $a$, $c$, $e$ and $g$ is not so simple as in the previous case. First of all, there is one case when cell $u$ remains active, though we can also assume {\bf Claim 1} for this rule. 

Indeed, remember that we know that at least one of cells in $\{b, d, f, h\}$ will be active at time $\tau -2$. Suppose that $u$ remains inactive at time $\tau$. There are two three possibilities: (1) none of the neighbors of $u$ will become active at time $\tau-1$, (2) three neighbors of $u$ become active at time $\tau-1$; and (3) the four neighbors of $u$ become active at time $\tau-1$. Note that in (2) and (3) we directly obtain that $u$ is stable, because the rule is freezing.

The case when the sum in its neighborhood is $0$ is slightly more complicated. As we said, we know that at least one of $\{b, d, f, h\}$ becomes active at time $\tau -2$. Suppose, without loss of generality, that $b$ satisfy this condition. On the other hand, we are assuming that $p$ and $q$ remain inactive at time $\tau-1$. Since this cells have one active neighbor at time $\tau-2$, the sole possibility is that cells $h, a, c$ and $d$ are active at time $\tau-1$. Applying the same arguments to cells $r$ and $s$, we deduce that all cells $a, b, c, d, e, f, g$ and $h$ will be active at time $\tau-2$. Since the rule is freezing, we deduce that cells $p, q, r$ and $s$ are stable, obtaining that also $u$ is stable. 

From Equation \ref{eqn:OR12}, we know how to compute the states of cells $b, d, f$ and $h$ in time $\tau-2$. To decide the stability of $u$, we need to compute the states of $p,q,r$ and $s$ in time $\tau-1$. In this case, however, the dynamics in the corridors is more complicated. In the following, we will show how to compute the east corridor (in order to compute $q$), depicted in Figure \ref{fig: comp1}. We will study only this case, since the other three corridors are analogous.

\begin{figure}
  \centering
  \includegraphics[width=.985\textwidth]{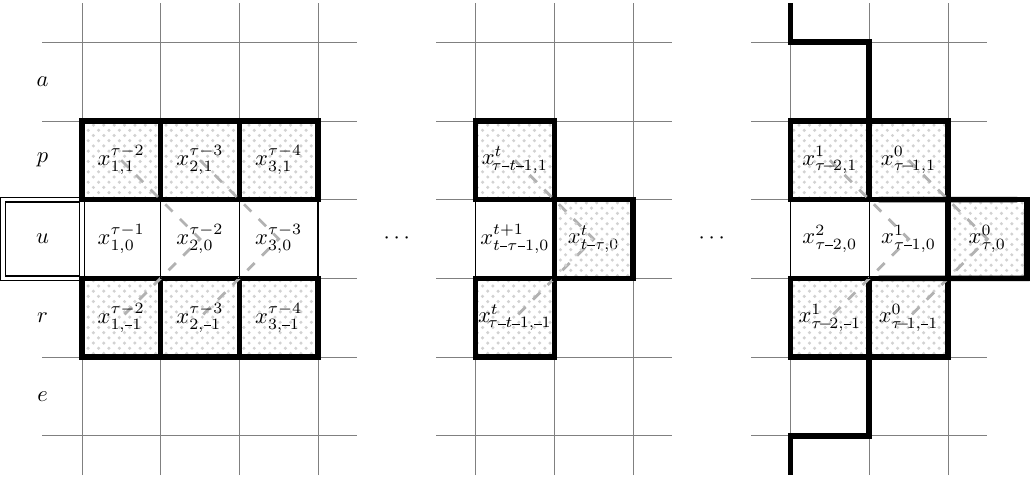}
  \caption[Computation of cells $(\pm t,\pm 1)$ and $(\pm 1,\pm t)$, $t\leq \tau$.]
  {Computation of east corridor. 
  Recall that $x_{(i,j)}^{t}$ is the state of cell $(i,j)$ in time-step $t$.  
  The gray cells were previously computed using the OR technique. 
  Dashed lines connect cells which potentially change states at the same time.}
  \label{fig: comp1}
\end{figure}

Remember that, using Equation \ref{eqn:OR12} we can compute the values of $x^{\tau -1 - i}_{(i,1)}$ and $x^{\tau -1 - i}_{(-i,1)}$, for every $i \in \{1, \dots, \tau-2\}$.
Notice first that, if $x^{\tau -1 - i}_{(i,1)} \neq x^{\tau -1 - i}_{(-i,1)}$, then necessarily $x^{\tau -i}_{(i,0)} = 1$. Indeed, we know that $x^{\tau -i-1}_{(i-1,0)} = 0$ (otherwise we contradict the definition of $\tau$). Then, $x^{\tau -1 - i}_{(i,1)} \neq x^{\tau -1 - i}_{(-i,1)}$ implies that in time $\tau-1-i$ the cell $(i,0)$ will have more than one and less than three active neighbors, so it will become active in time $\tau -i$. 

Let $i^*$ be the minimum value of $i \in \{1, \dots, \tau\}$ such that $x^{\tau -1 - i}_{(i,1)} \neq x^{\tau -1 - i}_{(-i,1)}$. If no such $i$ exists, then fix $i^*= \tau$. Call $I^*$ the set $\{1, \dots, i^*-1\}$. In other words,  we know that $x^{\tau -1 - i}_{(i,1)} =  x^{\tau -1 - i}_{(-i,1)}$ for every $i \in I^*$. Moreover, we also know that $x^{\tau - i^*}_{(i^*,0)} = 1$.  

We now identify two situations, concerning the values of   $x^{\tau -1 - i}_{(i,1)}$, for $i \in I^*$.

\begin{description}
  \item[If $x^{\tau-1-i}_{(i,1)}=x^{\tau-1-i}_{(-i,1)}=0$ then,]  the value of $x^{\tau-i}_{(i,0)}$ will equal the value of $x^{\tau-1-i}_{(i+1,0)}$. Indeed, in time $\tau-1-i$, the cell $(i,0)$ will have three inactive neighbors ( $(-i,1), (i,1), (i-1,0)$). Then it will take the same state than cell $(i+1,0)$ at time $\tau-i-1$. 

  \item[If  $x^{\tau-1-i}_{(i,1)}=x^{\tau-1-i}_{(-i,1)}=1$ then,]  the value of $x^{\tau-i}_{(i,0)}$ will be the opposite than value of $x^{\tau-1-i}_{(i+1,0)}$. Indeed, in time $\tau-1-i$, the cell $(i,0)$ will have two active neighbors ( $(-i,1), (i,1)$) and one inactive neighbor ($(i-1,0)$). Then cell $(i,0)$ is active at time $\tau-i$ if and only if cell $(i+1,0)$  is inactive at time $\tau-1-i$.

\end{description}
We imagine that a signal drive along the corridor. The signal starts at $(i^*,0)$ with value $x^{\tau-i^*}_{(i^*,1)}$. The movement of the signal satisfies that, each time it encounters an $i\in I^*$ such that $x^{\tau-1-i}_{(i,1)}=x^{\tau-1-i}_{(-i,1)}=1$, the state switches to the opposite value.
Let $z = |\{i \in I^* : x^{\tau-1-i}_{(i,1)}=x^{\tau-1-i}_{(-i,1)}=1 \}|$ (i.e. $z$ is the number of switches). From the two situations explained above, we deduce the following lemma.

\begin{lemma}\label{lem:corridor12}
 $x^{\tau-1}_{(1,0)}$ equals $x^{\tau-i^*}_{(i^*,1)}$ if $z$ is even, and $x^{\tau-1}_{(1,0)}$ is different than $x^{\tau-i^*}_{(i^*,1)}$ when $z$ is odd.
\end{lemma}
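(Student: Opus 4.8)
\textbf{Proof plan for Lemma~\ref{lem:corridor12}.}
The plan is to track the state of the "signal" cell $(i,0)$ along the east corridor, from the seed position $i^*$ down to position $1$, as a function of time. Concretely, I will prove by downward induction on $i$ (from $i = i^*$ to $i = 1$) the following invariant: at time $\tau - i$, the state $x^{\tau-i}_{(i,0)}$ equals $x^{\tau-i^*}_{(i^*,1)}$ if the number of indices $k \in \{i, i+1, \dots, i^*-1\}$ with $x^{\tau-1-k}_{(k,1)} = x^{\tau-1-k}_{(-k,1)} = 1$ is even, and equals its complement if that number is odd. Taking $i = 1$ in the invariant yields exactly the statement of the lemma, since the total count over $\{1, \dots, i^*-1\}$ is $z$ by definition.

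For the base case $i = i^*$, the invariant holds trivially: the empty set of indices has even cardinality $0$, and we need $x^{\tau-i^*}_{(i^*,0)} = x^{\tau-i^*}_{(i^*,1)}$. This is precisely where the signal is seeded: recall that $x^{\tau-i^*}_{(i^*,0)} = 1$ was established, and the seed value referenced in the text is $x^{\tau-i^*}_{(i^*,1)}$; one checks from the dynamics at time $\tau - 1 - i^*$ (using $x^{\tau-1-i^*}_{(i^*,1)} \neq x^{\tau-1-i^*}_{(-i^*,1)}$ and the fact that $(i^*-1,0)$ is still inactive) that the activated value propagated into $(i^*,0)$ is indeed determined by these two disagreeing diagonal cells — I would verify the exact bookkeeping of which of the two is the $1$, but this is a finite case check, not a genuine difficulty.

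For the inductive step, assume the invariant at index $i+1$ (for some $1 \leq i \leq i^*-1$), i.e. we know $x^{\tau-1-i}_{(i+1,0)}$ up to the parity of the count over $\{i+1, \dots, i^*-1\}$. We want the invariant at index $i$. Here I invoke the two displayed case analyses immediately preceding the lemma: since $i \in I^*$ we have $x^{\tau-1-i}_{(i,1)} = x^{\tau-1-i}_{(-i,1)}$, and this common value is either $0$ or $1$. If it is $0$, the first bulleted case gives $x^{\tau-i}_{(i,0)} = x^{\tau-1-i}_{(i+1,0)}$, so the count of "switching" indices over $\{i, \dots, i^*-1\}$ equals that over $\{i+1, \dots, i^*-1\}$ (index $i$ is not a switch), and the parity — hence the value — is unchanged; the invariant carries over. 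If it is $1$, the second bulleted case gives $x^{\tau-i}_{(i,0)} = \neg\, x^{\tau-1-i}_{(i+1,0)}$, index $i$ adds one to the count, the parity flips, and the complement of a complement restores agreement with the asserted formula. One subtlety: the case analyses describe the relationship between $x^{\tau-i}_{(i,0)}$ and $x^{\tau-1-i}_{(i+1,0)}$, so I must be careful that at time $\tau - 1 - i$ the value of the neighbor $(i+1,0)$ is indeed the one given by the induction hypothesis — i.e. that cell $(i+1,0)$ has already "received" the signal by time $\tau - i - 1$ and does not change afterward in a way that breaks the chain. This synchronization of times along the corridor (cell $(i+1,0)$ settles at time $\tau - i - 1$, cell $(i,0)$ one step later) is the main obstacle: it requires checking that no cell further down the corridor interferes prematurely and that the inactive cells in $D_{\tau-1}$ and below stay inactive throughout, which follows from the definition of $\tau$ and Lemma~\ref{lem: spread}-type reasoning adapted to rule $12$. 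Once the timing is pinned down, the parity argument is a routine induction, and setting $i=1$ finishes the proof.
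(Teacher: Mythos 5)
Your plan is correct and is essentially the paper's own argument made explicit: the paper offers no separate proof of Lemma~\ref{lem:corridor12} beyond the two bulleted case analyses and the ``signal'' description, and your downward induction on $i$ with the parity-of-switches invariant is exactly the intended formalization of those two cases. The only loose end you flag in the base case resolves by reading the seed value as $x^{\tau-i^*}_{(i^*,0)}$ (already shown to equal $1$ in the text); the subscript $(i^*,1)$ in the lemma statement appears to be a typo, since the state of the cell $(i^*,1)$ at time $\tau-i^*$ is not determined by the hypotheses, whereas with $(i^*,0)$ your induction closes with no further case check.
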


Therefore, to solve \stability for rule $12$, we compute the values of $x^{\tau-1}_p$, $x^{\tau-1}_q$, $x^{\tau-1}_r$, $x^{\tau-1}_s$  according to Lemma \ref{lem:corridor12}. 

\subsubsection{Solving \stability for rule $124$}

The analysis for the rule $124$ is more complicated than the one we did for rule $12$ and $123$. In fact, one great difference is that {\bf Claim 1} is no longer true for this rule. In other words, $u$ might not be stable but change after time-step $\tau$.

For this rule, the cases when the cell $u$ remain inactive are the cases when $u$ has zero or three active neighbors. The possible cases when the cell $u$ remain inactive at time $\tau$ are given in Figure \ref{fig: casos 124}.

  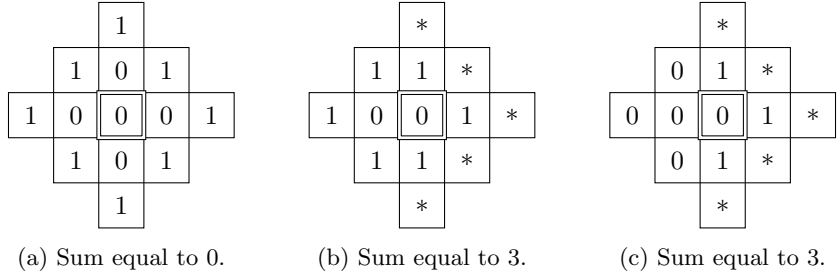
\begin{figure} 
    \centering
      \begin{subfigure}[t]{0.32\textwidth}
              \centering
                \begin{tikzpicture}[scale=0.3]
                \draw  (-8,11) rectangle (2,9);
                \draw  (-4,15) rectangle (-2,5);
                \draw  (-6,13) rectangle (0,7);
                \node at (-3,14) {$1$};
                \node at (-1,12) {$1$};
                \node at (1,10)  {$1$};
                \node at (-1,8)  {$1$};
                \node at (-3,6)  {$1$};
                \node at (-5,8)  {$1$};
                \node at (-7,10) {$1$};
                \node at (-5,12) {$1$};
                \node at (-3,12) {$0$};
                \node at (-1,10) {$0$};
                \node at (-3,8)  {$0$};
                \node at (-5,10) {$0$};
                \node at (-3,10) {$0$};
                \draw [double distance=1pt] (-4,11) rectangle (-2,9); 
                \end{tikzpicture}
              \caption{Sum equal to 0.}
              \label{fig: casos 124 0}
    \end{subfigure}                    
    \begin{subfigure}[t]{0.32\textwidth}
              \centering
                \begin{tikzpicture}[scale=0.3]
                \draw  (-8,11) rectangle (2,9);
                \draw  (-4,15) rectangle (-2,5);
                \draw  (-6,13) rectangle (0,7);
                \node at (-3,14) {$*$};
                \node at (-1,12) {$*$};
                \node at (1,10)  {$*$};
                \node at (-1,8)  {$*$};
                \node at (-3,6)  {$*$};
                \node at (-5,8)  {$1$};
                \node at (-7,10) {$1$};
                \node at (-5,12) {$1$};
                \node at (-3,12) {$1$};
                \node at (-1,10) {$1$};
                \node at (-3,8)  {$1$};
                \node at (-5,10) {$0$};
                \node at (-3,10) {$0$};
                \draw [double distance=1pt] (-4,11) rectangle (-2,9); 
                \end{tikzpicture}
              \caption{Sum equal to 3.}
              \label{fig: casos 124 33}
    \end{subfigure}
    \begin{subfigure}[t]{0.32\textwidth}
              \centering
                \begin{tikzpicture}[scale=0.3]
                \draw  (-8,11) rectangle (2,9);
                \draw  (-4,15) rectangle (-2,5);
                \draw  (-6,13) rectangle (0,7);
                \node at (-3,14) {$*$};
                \node at (-1,12) {$*$};
                \node at (1,10)  {$*$};
                \node at (-1,8)  {$*$};
                \node at (-3,6)  {$*$};
                \node at (-5,8)  {$0$};
                \node at (-7,10) {$0$};
                \node at (-5,12) {$0$};
                \node at (-3,12) {$1$};
                \node at (-1,10) {$1$};
                \node at (-3,8)  {$1$};
                \node at (-5,10) {$0$};
                \node at (-3,10) {$0$};
                \draw [double distance=1pt] (-4,11) rectangle (-2,9); 
                \end{tikzpicture}
              \caption{Sum equal to 3.}
              \label{fig: casos 124 30}
    \end{subfigure} 
  \caption{Possibles cases of rule $124$ at time $\tau$ such that $u$ remain inactive.}
  \label{fig: casos 124}
\end{figure}

The case when $u$ has four inactive neighbors at time $\tau-1$ is exactly the same that we explained for rule $12$ (see Figure \ref{fig: casos 124 0}). Suppose that $u$ has three active neighbors, and without loss of generality assume that $p, q, r$ are active and $s$ is inactive. Then there are two possibilities, either $s$ has three active neighbors ($h,g, f$), in which case $u$ and $s$ remain inactive (see Figure \ref{fig: casos 124 33}).
The difference with the rule $12$ is that we can not decide immediately if the cell $u$ remains inactive when the sum at time $\tau$ is 3. Indeed, in the case depicted in Figure \ref{fig: casos 124 33}, it is possible that $s$ becomes active in a time-step later than $\tau-1$. 


Thus we need study only the case when the sum at time $\tau-1$ of the states of neighbors of $s$ is 0 or equivalently $x_f^{\tau-2}=x_g^{\tau-2}=x_h^{\tau-2}=0$. Note that, by the OR technique, the fact that $x_f^{\tau-2}=x_g^{\tau-2}=x_h^{\tau-2}=0$ means that all the cells in the left side border of $D_\tau$ are initially inactive, as shown in Figure \ref{fig: comp2}.

\begin{figure}
  \centering
  \begin{subfigure}[t]{0.485\textwidth}
    \centering  
    \includegraphics[width=.95\textwidth]{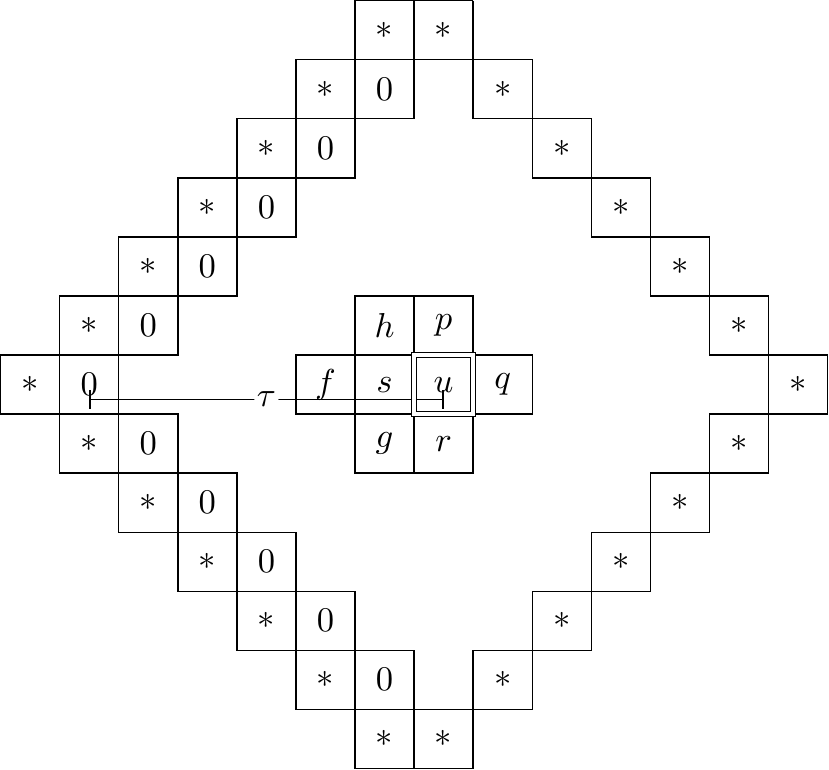}  
    \caption{Configuration at time $0$.}
  \end{subfigure}      
  \begin{subfigure}[t]{0.485\textwidth}
    \centering 
    \includegraphics[width=.95\textwidth]{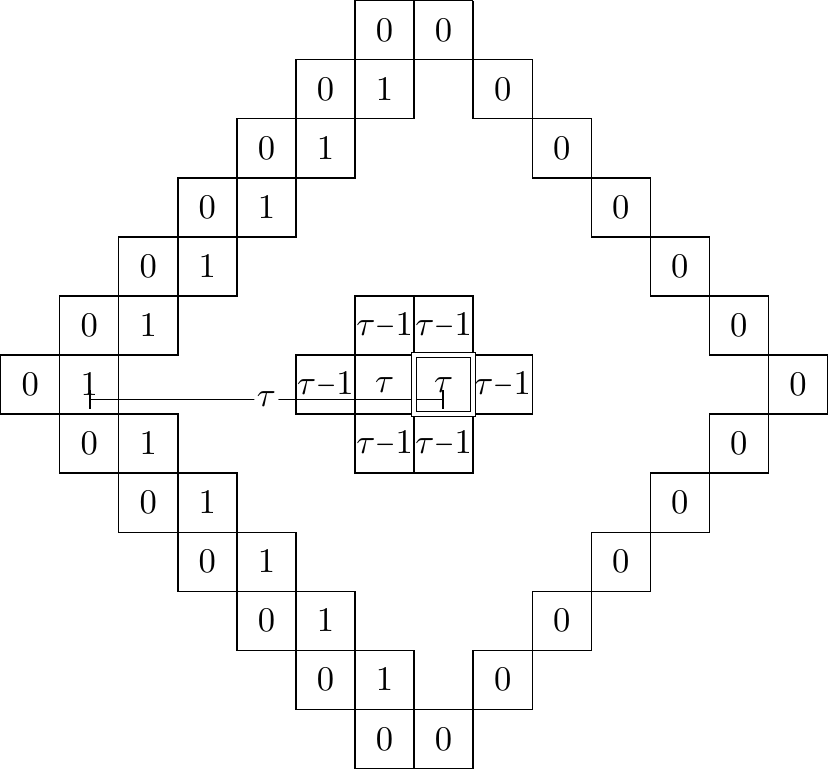}   
    \caption{Diagram with the times when a cell can be activated.}
  \end{subfigure}   
  \caption{Schedule for to compute $x_u^{\tau}$ if  $x_f^{\tau-2}=x_g^{\tau-2}=x_h^{\tau-2}=0$.}     
  \label{fig: comp2}
\end{figure} 

Knowing that $x_f^{\tau-2}=x_g^{\tau-2}=x_h^{\tau-2}=0$, we can compute their states at time-step $\tau-1$, considering the OR-techinique in the disc $D_{\tau+1}$. Using this information, we can compute the state of $s$ in time $\tau$, i.e. compute $x_s^{\tau}$.

Remember that we are in the case where $x_p^{\tau-1}=x_q^{\tau-1}=x_r^{\tau-1}=1$ and $x_s^{\tau-1}=0$. If the cell $s$ becomes  active at time $\tau$, then $u$ will have four active neighbors at time $\tau$, and it will become active. Now we suppose that $s$ also remain inactive at time $\tau$. Again, we have two possible cases:

  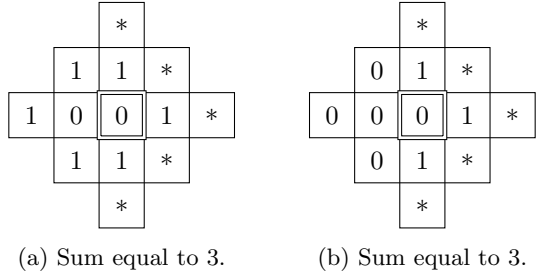
\begin{figure}
    \centering         
    \begin{subfigure}[t]{0.32\textwidth}
              \centering
                \begin{tikzpicture}[scale=0.3]
                \draw  (-8,11) rectangle (2,9);
                \draw  (-4,15) rectangle (-2,5);
                \draw  (-6,13) rectangle (0,7);
                \node at (-3,14) {$*$};
                \node at (-1,12) {$*$};
                \node at (1,10)  {$*$};
                \node at (-1,8)  {$*$};
                \node at (-3,6)  {$*$};
                \node at (-5,8)  {$1$};
                \node at (-7,10) {$1$};
                \node at (-5,12) {$1$};
                \node at (-3,12) {$1$};
                \node at (-1,10) {$1$};
                \node at (-3,8)  {$1$};
                \node at (-5,10) {$0$};
                \node at (-3,10) {$0$};
                \draw [double distance=1pt] (-4,11) rectangle (-2,9); 
                \end{tikzpicture}
              \caption{Sum equal to 3.}
              \label{fig: casos 124 303}
    \end{subfigure}
    \begin{subfigure}[t]{0.32\textwidth}
              \centering
                \begin{tikzpicture}[scale=0.3]
                \draw  (-8,11) rectangle (2,9);
                \draw  (-4,15) rectangle (-2,5);
                \draw  (-6,13) rectangle (0,7);
                \node at (-3,14) {$*$};
                \node at (-1,12) {$*$};
                \node at (1,10)  {$*$};
                \node at (-1,8)  {$*$};
                \node at (-3,6)  {$*$};
                \node at (-5,8)  {$0$};
                \node at (-7,10) {$0$};
                \node at (-5,12) {$0$};
                \node at (-3,12) {$1$};
                \node at (-1,10) {$1$};
                \node at (-3,8)  {$1$};
                \node at (-5,10) {$0$};
                \node at (-3,10) {$0$};
                \draw [double distance=1pt] (-4,11) rectangle (-2,9); 
                \end{tikzpicture}
              \caption{Sum equal to 3.}
              \label{fig: casos 124 300}
    \end{subfigure} 
  \caption{Possible cases of rule $124$ at time $\tau+1$ such that $u$ remain at state 0.}
  \label{fig: casos 124 N0}
\end{figure}

In the case in Figure \ref{fig: casos 124 303} (i.e., when $s$ remains inactive at time $\tau$ because $f,g$ and $h$ were active at time $\tau-1$) the cell $u$ is stable.

For the case shown in Figure \ref{fig: casos 124 300} (i.e. $s$ remains inactive at time $\tau$ because e $f,g$ and $h$ were inactive at time $\tau-1$) we must repeat the previous analysis.  Indeed, we know that $x_f^{\tau-1}=x_g^{\tau-1}=x_h^{\tau-1}=0$. The OR technique implies that every cell in the left  border on the disc $D_{\tau+1}$ (see Figure \ref{fig: comp2}) have to be initially inactive too. In this case, however we study the next disc $D_{\tau_+1}$, shifting it one cell to the left, as the Figure \ref{fig: comp3}.

\begin{figure}
  \centering
  \begin{subfigure}[t]{0.485\textwidth}
    \centering  
    \includegraphics[width=.95\textwidth]{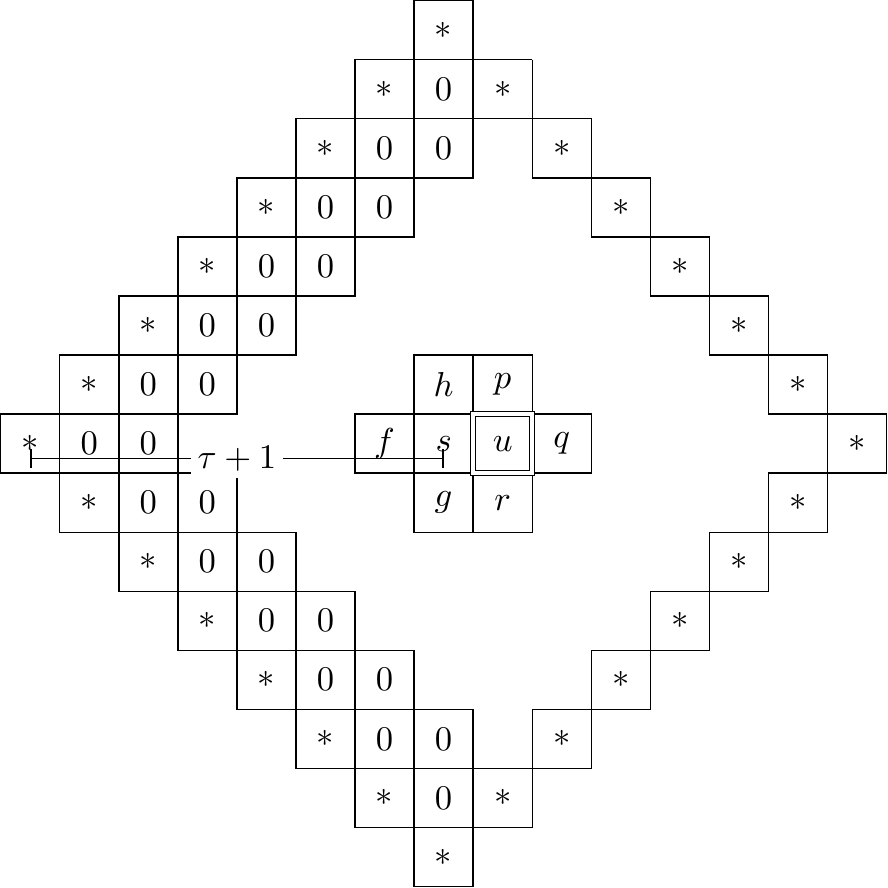}  
    \caption{Configuration at time $0$.}
  \end{subfigure}      
  \begin{subfigure}[t]{0.485\textwidth}
    \centering 
    \includegraphics[width=.95\textwidth]{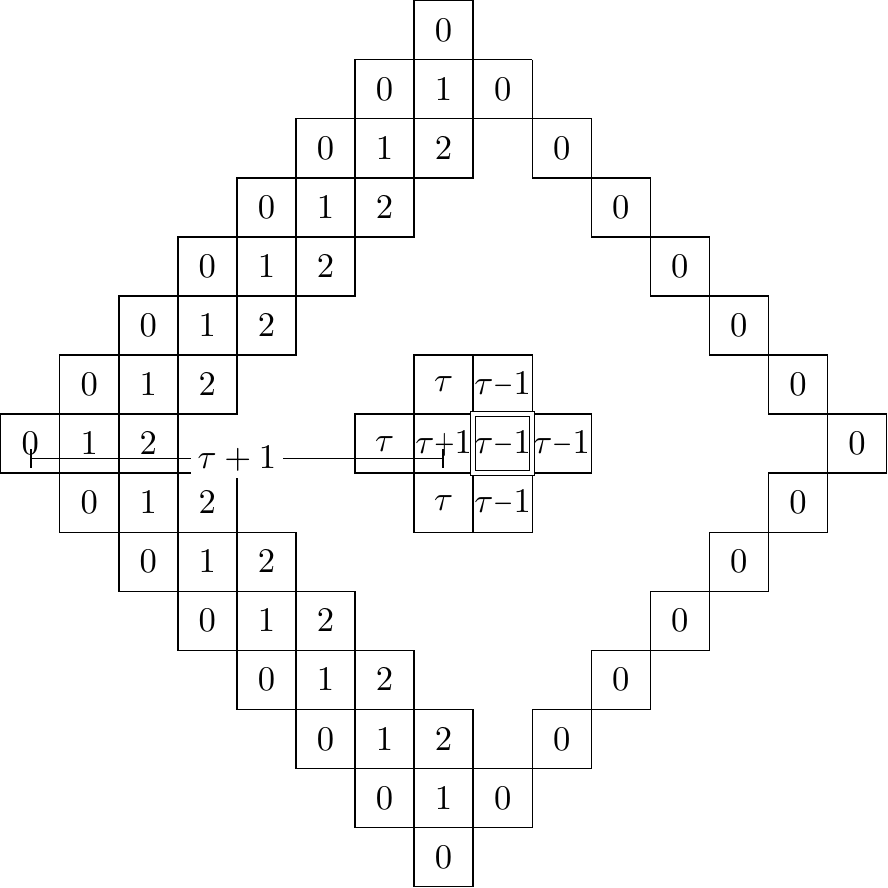}   
    \caption{Diagram with the times when a cell can be activated.}
  \end{subfigure}   
  \caption{Schedule for to compute $x^{\tau}_s$  if $x_f^{\tau-1}=x_g^{\tau-1}=x_h^{\tau-1}=0$.}     
  \label{fig: comp3}
\end{figure} 

This new disc is centered in the cell $s$ and (considering only the sides at the north-west and south-west), consists in the sites at distance $\tau+1$ from $s$. Again, using the OR technique, we can compute the states of cells  $f$, $g$ and $h$ at time $\tau$, and then the sate of cell $s$ at time $\tau+1$. 

Again, if the cell $s$ becomes active at time $\tau+1$, then the problem is solved, because $u$ becomes active at time $\tau+2$.  Further, suppose that $s$ is not active at time $\tau+1$. Notice that this means that $g, h$ and $f$ must have the same state at time $\tau$. Remember that $p$ and $r$ are active at time $\tau-1$, then cells $h$ and $f$ have at least one active neighbor at time $\tau-1$. If $h, f$ and $g$ are active at time $\tau$, then $s$ will be stable, as well as $u$. If $h, f$ and $g$ are inactive at time $\tau$, it means that $h$ and $f$ have three active neighbors at time $\tau-1$, including cells $(-2,1)$ and $(-2,-1)$. Since these cells are also neighbors of $g$, and $g$ remains inactive at time $\tau$, necessarily cell $(-3,0)$ must be active at time $\tau-1$. This means that $f,g$ and $h$ have three active neighbors, so they are stable. Implying also that $s$ and $u$ are stable.

We deduce that either $u$ becomes active at time $\tau, \tau+1$ or $\tau+2$,  or $u$ is stable. 


\begin{lemma}
Let $F$ be rule $124$. Given a finite configuration $x$, a cell $u$ and $\tau$ the distance from $u$ to the nearest active cell. Then $u$ becomes active at time $\tau, \tau +1$ or $\tau+2$, or $u$ is stable.
\end{lemma}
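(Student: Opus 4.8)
The plan is to assemble into a single statement the case analysis carried out through this subsection, organized according to the states of the four neighbors $p,q,r,s$ of $u$ around time $\tau$; I use the cell names of Figure~\ref{fig: von Neumann r2} with $u=(0,0)$, together with the OR technique. The starting point is that, since there is an active cell in $D_\tau$, at least one of $b,d,f,h$ is active at time $\tau-2$ (their states at that time are exactly the disjunctions over the outer ring supplied by the OR technique, cf.\ Equation~\ref{eqn:OR12}); hence at least one of $p,q,r,s$ is active at time $\tau-1$, and so $u$ can remain inactive at time $\tau$ only if the number of its active neighbors at time $\tau-1$ lies in $\{0,3\}$, the two values outside $\mathcal I_F=\{1,2,4\}$.

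First I would dispatch the case in which $p,q,r,s$ are all inactive at time $\tau-1$ (this is the analogue of the $0$-sum case for rule $12$): taking without loss of generality $b$ active at time $\tau-2$, the facts that $p$ (resp.\ $q$) is inactive at time $\tau-1$, has the active neighbor $b$ at time $\tau-2$, and has the permanently inactive neighbor $u$, force its three non-$u$ neighbors to be active at time $\tau-2$; propagating this around the octagon $a,b,\dots,h$ makes all of $a,b,c,d,e,f,g,h$ active at time $\tau-2$. Then each of $p,q,r,s$ has exactly three active neighbors for all subsequent times (its fourth being $u$), so none of them can ever activate unless $u$ does, and $u$ cannot activate unless all four do; hence $u$ is stable.

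The main work is the case in which exactly three neighbors, say $p,q,r$, are active at time $\tau-1$ while $s$ is inactive. If $s$ activates at time $\tau$, then $u$ has four active neighbors at time $\tau$ and activates at time $\tau+1$, so assume $s$ is still inactive at time $\tau$; this forces $x_f^{\tau-1}+x_g^{\tau-1}+x_h^{\tau-1}\in\{0,3\}$. If the sum is $3$, then $s$ has three active neighbors forever and $u$ is stable, again by circularity. If the sum is $0$, then $f,g,h$ are inactive at time $\tau-1$, hence (by freezing) at time $\tau-2$, so the west part of the ring $D_\tau$ is initially inactive; applying the OR technique to the disc of radius $\tau+1$ centered at $s$ (the disc shifted one cell to the left, cf.\ Figures~\ref{fig: comp2} and \ref{fig: comp3}) we compute $x_f^{\tau},x_g^{\tau},x_h^{\tau}$ and hence $x_s^{\tau+1}$. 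If $s$ activates at time $\tau+1$, then $u$ activates at time $\tau+2$. Otherwise $f,g,h$ share a common state at time $\tau$; if it is active, $s$ is stuck with three active neighbors and $u$ is stable; if it is inactive, then since $p$ and $r$ give $h$ and $f$ an active neighbor at time $\tau-1$ and these two cells remain inactive at time $\tau$, their neighbour-sums at time $\tau-1$ must equal $3$, forcing the cells $(-2,1)$ and $(-2,-1)$ to be active at time $\tau-1$; as these are also neighbors of $g$, which is inactive at time $\tau$, the cell $(-3,0)$ must then be active at time $\tau-1$ as well, so each of $f,g,h$ has three active neighbors at time $\tau-1$ and is permanently stuck, whence $s$ and $u$ are stable. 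Collecting all branches gives the claimed trichotomy.

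The step I expect to be the main obstacle is this last one: making precise the OR-technique computation on the \emph{shifted} disc centered at $s$ --- in particular, that the inactivity of $f,g,h$ up to time $\tau-1$ genuinely reduces the computation of their states at time $\tau$ to disjunctions over a ring one unit farther out --- and, crucially, showing that the recursion terminates after a single shift rather than cascading indefinitely; the geometric forcing argument on $(-2,\pm1)$ and $(-3,0)$ above is exactly what closes off that cascade, and it must be phrased carefully enough to be airtight.
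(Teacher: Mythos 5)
Your proposal is correct and follows essentially the same route as the paper's own argument: the same case analysis on the number of active neighbors of $u$ at time $\tau-1$ (only $0$ and $3$ need treatment), the same mutual-blocking/circularity arguments for stability, the OR technique on the disc of radius $\tau+1$ re-centered at $s$, and the same forcing of the cells $(-2,\pm 1)$ and $(-3,0)$ at time $\tau-1$ that stops the cascade after a single shift and yields stability of $f,g,h$, hence of $s$ and $u$. The only differences are cosmetic --- you merge the paper's two-stage computation (first $x_s^{\tau}$ via the disc centered at $u$, then $x_s^{\tau+1}$ via the shifted disc) into one split on the common state of $f,g,h$ at time $\tau-1$, and two side remarks (that at least one of $p,q,r,s$ is active at time $\tau-1$, and that $u$ cannot activate unless all four neighbors do) are stated imprecisely but are never actually used, since your explicit treatment of the zero-neighbor case covers them, exactly as in the paper.
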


Now we give an algorithm for to decide the \stability in \NC for the rule $124$. The algorithms for rules $12$ and $123$ can be deduced from this algorithm.

\begin{theorem}\label{thm: 12** square NC}
  \stability is in {\NC} for rules $12$, $123$ and $124$.
\end{theorem}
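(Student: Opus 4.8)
The strategy is to describe a single fast-parallel algorithm that solves \stability for rule $124$, and then observe that essentially the same scheme — with fewer cases to handle — works for rules $12$ and $123$. The whole analysis has already reduced the question to a local computation inside the ball $B_{\tau+1}$ around $u$, where $\tau$ is the distance from $u$ to the nearest active cell. So the first step of the algorithm is the one shared by all the algebraic rules: detect whether the configuration has any active cell (a prefix-sum, as in the trivial rules), and if so compute an All Pairs Shortest Paths matrix on the grid graph using Proposition~\ref{prop:shortest-paths}, and read off $\tau$ from the row indexed by $u$. Both of these run in $\cO(\log^2 N)$ time with a polynomial number of processors, where $N = n^2$.

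**Core of the algorithm.** Next, for each of the four neighbors $p,q,r,s$ of $u$, the algorithm computes its state at time $\tau-1$. For the ``easy'' direction (a neighbor with an active cell on its side), we use the OR technique, Equation~\eqref{eqn:OR12}: the state at time $\tau-1-i$ of each cell $(i,1)$ in the north-east triangle is just a disjunction of initial states along a diagonal of $D_\tau$, and a disjunction of $\cO(n^2)$ Boolean values is computed by a prefix-sum in $\cO(\log N)$ time. From these we also get the states of $b,d,f,h$ at time $\tau-2$. To finish a corridor we invoke Lemma~\ref{lem:corridor12}: locate the first index $i^*\in I^*$ where $x^{\tau-1-i}_{(i,1)}\neq x^{\tau-1-i}_{(-i,1)}$ (a prefix-minimum over the already-computed Boolean array), and count the parity $z$ of the switching indices (another prefix-sum). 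This yields $x^{\tau-1}_p,x^{\tau-1}_q,x^{\tau-1}_r,x^{\tau-1}_s$ for rule $12$. For rule $124$ we additionally, in the case where three of these are $1$ and the fourth (say $s$) is $0$, run the same OR-technique computation on the shifted discs $D_{\tau+1}$ described in Figures~\ref{fig: comp2} and~\ref{fig: comp3} to obtain $x^{\tau}_s$ and $x^{\tau+1}_s$; by the last Lemma this is enough, since $u$ becomes active by time $\tau+2$ or never. Finally the algorithm simulates the at most three remaining steps locally around $u$ — a constant amount of work per cell in $B_{\tau+1}$, done by $\cO(N^2)$ processors in $\cO(\log N)$ time — and accepts iff $u$ is still inactive. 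Correctness of every branch is exactly what was established in the three subsections above (Claim~1 for $12$ and $123$, the case analysis of Figures~\ref{fig: casos 124}--\ref{fig: casos 124 N0} for $124$), so the proof is a matter of assembling those facts and bookkeeping the resource bounds, the dominant term being $\cO(\log^2 N)$ time and $\cO(N^3\log N)$ processors from the shortest-paths subroutine (Proposition~\ref{prop:shortest-paths}).

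**Main obstacle.** The delicate point is not any single subroutine but making sure the case distinction for rule $124$ is genuinely exhaustive and that each case is decidable from information computable in \NC. In particular one must check that whenever the algorithm reaches the ``recurse on the shifted disc'' branch, the hypothesis $x_f^{\tau-1}=x_g^{\tau-1}=x_h^{\tau-1}=0$ really does force the entire relevant border of $D_{\tau+1}$ to be initially inactive (so that the OR technique still applies to the shifted disc), and that this recursion terminates after at most one shift — which is precisely the content of the final paragraph before the lemma. Writing this out so that the reader sees all of $\{$sum $0$, sum $3$ with the fourth neighbor blocked, sum $3$ with the fourth neighbor to be resolved$\}$ are covered, and that nothing can happen after time $\tau+2$, is the part that needs care; everything else is a routine combination of prefix-sums and shortest-path distances.

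\begin{proof}[Proof of Theorem~\ref{thm: 12** square NC}]
We describe a fast-parallel algorithm for rule $124$; the algorithms for $12$ and $123$ are obtained by removing the cases that do not apply (for those rules {\bf Claim 1} holds, so it suffices to compute the states of $p,q,r,s$ at time $\tau-1$). Let $N=n^2$ be the input size.

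On input $(x,u)$ the algorithm first checks with a prefix-sum whether every cell of $x$ is inactive; if so it accepts. Otherwise it computes an All Pairs Shortest Paths matrix of the grid graph $G$ using Proposition~\ref{prop:shortest-paths}, in time $\cO(\log^2 N)$ with $\cO(N^3\log N)$ processors, and reads off from the row of $u$ the distance $\tau$ to the nearest active cell. We may assume $\tau>1$, since the case $\tau=1$ is decided by inspecting $N(u)$.

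Using Equation~\eqref{eqn:OR12} (the OR technique), the algorithm computes in parallel, for every $i\in\{1,\dots,\tau-1\}$, the states $x^{\tau-1-i}_{(\pm i,1)}$ and $x^{\tau-1-i}_{(\pm 1,i)}$, together with the states of $b,d,f,h$ at time $\tau-2$. Each such value is a disjunction of at most $\cO(N)$ initial states along a diagonal of $D_\tau$, so it is computed by a prefix-sum in time $\cO(\log N)$ with $\cO(N)$ processors; doing this for all $\cO(\sqrt N)$ relevant cells uses $\cO(N^{3/2})$ processors. For each of the four corridors the algorithm then finds $i^*$ (a prefix-minimum over the Boolean array of indices where $x^{\tau-1-i}_{(i,1)}\neq x^{\tau-1-i}_{(-i,1)}$), computes the parity $z$ of the switching set (a prefix-sum), and applies Lemma~\ref{lem:corridor12} to obtain $x^{\tau-1}_p,x^{\tau-1}_q,x^{\tau-1}_r,x^{\tau-1}_s$.

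Now the algorithm branches according to $\sigma=x^{\tau-1}_p+x^{\tau-1}_q+x^{\tau-1}_r+x^{\tau-1}_s$. If $\sigma\notin\{0,3\}$, or if $\sigma\in\{0,4\}$, then by the analysis preceding Lemma~\ref{lem:corridor12} and by {\bf Claim 1}-type reasoning $u$ is stable iff it is still inactive at time $\tau$, which the algorithm checks by locally simulating one more step in $B_2$. If $\sigma=3$, say $s$ is the inactive neighbor, the algorithm computes $x^{\tau-1}_f,x^{\tau-1}_g,x^{\tau-1}_h$ by the OR technique on $D_{\tau+1}$ and thus $x^\tau_s$. If $x^\tau_s=1$ then $u$ becomes active at time $\tau+1$. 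If $x^\tau_s=0$ with $f,g,h$ active at time $\tau-1$ (Figure~\ref{fig: casos 124 303}), then $u$ is stable. In the remaining case $x^{\tau-1}_f=x^{\tau-1}_g=x^{\tau-1}_h=0$, so by the OR technique the whole north-west/south-west border of $D_{\tau+1}$ is initially inactive; the algorithm then repeats the computation on the disc of radius $\tau+1$ centered at $s$ (Figure~\ref{fig: comp3}), obtaining $x^\tau_f,x^\tau_g,x^\tau_h$ and hence $x^{\tau+1}_s$. If $x^{\tau+1}_s=1$, then $u$ becomes active at time $\tau+2$; otherwise, as shown in the paragraph preceding the previous lemma, $f,g,h$ all share their state at time $\tau$, and in either subcase ($f,g,h$ all active, or all inactive and hence forced to have three active neighbors at time $\tau-1$) the cells $f,g,h,s$ and therefore $u$ are stable. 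By the previous lemma these are all the cases: $u$ becomes active at one of the times $\tau,\tau+1,\tau+2$, or $u$ is stable. The algorithm accepts iff it has concluded that $u$ is stable.

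Correctness follows from Lemmas~\ref{lem:corridor12} and the preceding lemma and from the case analyses of Figures~\ref{fig: casos 124}--\ref{fig: casos 124 N0}. For the running time, every step other than the shortest-paths computation is a constant number of prefix-sums over arrays of size $\cO(N)$ plus a constant-depth local simulation, i.e. $\cO(\log N)$ time with $\cO(N)$ processors. The shortest-paths subroutine dominates, so the algorithm runs in time $\cO(\log^2 N)$ with $\cO(N^3\log N)$ processors, hence \stability for rules $12$, $123$ and $124$ is in \NC.
\end{proof}
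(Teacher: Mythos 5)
Your proposal is correct and follows essentially the same route as the paper's proof: compute $\tau$, use the OR technique (Equation~\eqref{eqn:OR12}) and the corridor analysis (Lemma~\ref{lem:corridor12}) to obtain the states of $p,q,r,s$ at time $\tau-1$, then apply the case analysis of Figures~\ref{fig: casos 124}--\ref{fig: casos 124 N0}, with one additional OR-technique round on the shifted disc for rule $124$, exactly as in the paper's Algorithm for \stability $124$. The only real deviation is that you obtain $\tau$ via the all-pairs-shortest-paths subroutine ($\cO(\log^2 N)$ time, $\cO(N^3\log N)$ processors), whereas the paper simply assigns one processor per cell to compute its grid distance to $u$ and takes a minimum ($\cO(\log N)$ time, $\cO(N)$ processors); your version is less efficient but still comfortably within \NC, so the theorem follows all the same.
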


\begin{proof}
Let $(x, u)$ be an input of {\stability}, $x$ is a finite configuration of dimensions $n\times n$ and $u$ is a site in $[n]\times [n]$. The following parallel algorithm is able to decide \stability using the fast computation of the first neighbors of $u$ by the OR technique. 
Let ${N^2(u)}$ be the set of cells at distance at most $2$ from $u$. For $t\geq 0$, we call $x_{N(u)}^t$ the set of states at time $t$ of all cells in $N(u)$.

\begin{algorithm}[h]
\caption{Solving \stability 124 }\label{alg: PRED 12** and 1L**}
\begin{algorithmic}[1]
\REQUIRE $x$ a finite configuration of dimensions $n \times n$ and $u \in [n]\times [n]$.
\STATE Compute $\tau$ the distance form $u$ to the nearest  active cell in $x$. 
\STATE Compute $x^{\tau-2}_{N^2(u)} $ using the OR technique and the corridors. 
\IF{$x^{\tau}_{u}=1 $}
  \RETURN \emph{Reject}
\ENDIF
\IF{$x^{\tau}_{u}=0$ and $x^{\tau-2}_{N^2(u)}$ is as Figure \ref{fig: casos 124 33}}
  \RETURN \emph{Accept}
\ENDIF
\STATE Compute $s$ the neighbor of  $u$ such that $x^{\tau-1}_{s}=0 $. 
\STATE Compute $x^{\tau-1}_{N^2(s)} $ using the OR technique and the corridors. 
\IF{$x^{\tau+1}_{u}=1 $}
  \RETURN \emph{Reject}
\ENDIF
\IF{$x^{\tau+1}_{u}=0$ and $x^{\tau-1}_{N^2(u)}$ is as Figure \ref{fig: casos 124 33}}
  \RETURN \emph{Accept}
\ENDIF
\RETURN \emph{Reject}
\end{algorithmic}
\end{algorithm}

Let $N=n^{2}$ the size of the input.
Step {\bf 1} can be done in $\cO(\log N)$ time with $ \cO(N)$ processors: one processor for each cell for to choose the actives cells and to compute its distances with $u$, then in  $\cO(\log N)$ compute the nears cell to $u$.
Steps {\bf 2} and {\bf 10} can be done in $\cO(\log N)$ time with $ \cO(N)$ processors: the OR technique and the corridors can be computed with prefix sum algorithm (see Proposition \ref{prop: prefix-sum}) for the computation of consecutive $\vee$ and parity of $z$ in the corridors.
The others steps can be computed in $\cO(\log n)$ time in using a sequential algorithm. 
\end{proof}

\subsection{ Turing Universal Rules}\label{sec: p-com square}

For the rules $2$ and $24$ the  \stability problem is \Pt-Complete by reducing a restricted version of the Circuit Value Problem \cite{Greenlaw:1995} to this problem. 
Instances of circuit value problem are encoded into a configuration of the CA $2$ and  $24$ using the idea in the proof of the {\Pt}-completeness of Planar Circuit Value Problem (PCV) \cite{Goldschlager1977}. Moreover, we use an aproach given in \cite{DBLP:journals/jcss/GolesMPT18}, were the authors show that a two-dimensional automaton capable of simulating \emph{wires}, \emph{OR gates}, \emph{AND gates} and \emph{crossing gadgets} is {\Pt}-Complete.

In Figure \ref{fig: gates} we will give the gadgets that simulate this structures for rules $2$ and $24$. We remark that both rules have the same structures, because the patterns with four active neighbors  never appear in the gadgets.

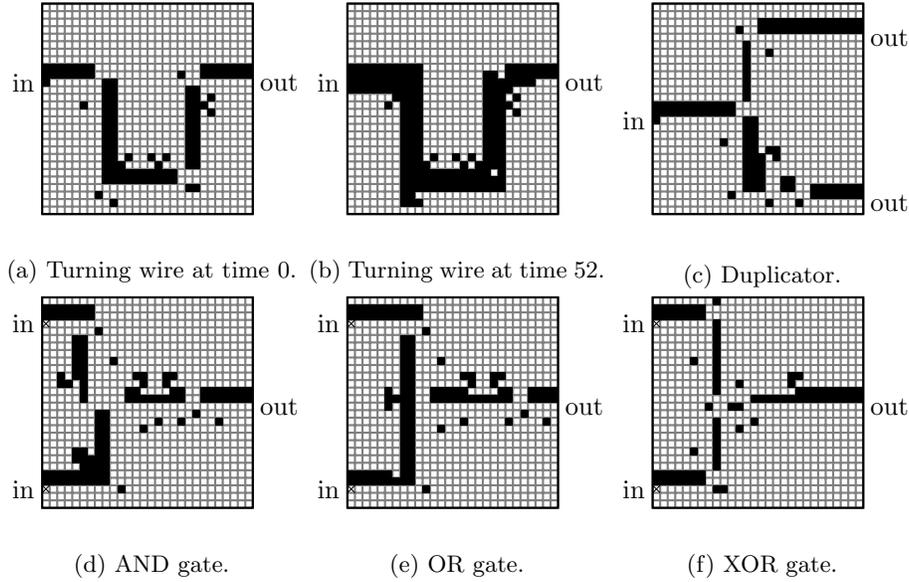
\begin{figure}
        \centering
        \begin{subfigure}[t]{0.325\textwidth}
            \begin{center}
                \begin{tikzpicture}[scale=0.1]
\draw[color=gray, thick] (1,1) grid (29,29);
\draw[color=black, thick] (1,1) rectangle (29,29);

\fill[black] (1,20) rectangle (2,21); 
\fill[black] (2,20) rectangle (3,21); 
\fill[black] (3,20) rectangle (4,21); 
\fill[black] (4,20) rectangle (5,21); 
\fill[black] (5,20) rectangle (6,21); 
\fill[black] (6,20) rectangle (7,21); 
\fill[black] (7,20) rectangle (8,21); 
\fill[black] (22,20) rectangle (23,21); 
\fill[black] (23,20) rectangle (24,21); 
\fill[black] (24,20) rectangle (25,21); 
\fill[black] (25,20) rectangle (26,21); 
\fill[black] (26,20) rectangle (27,21); 
\fill[black] (27,20) rectangle (28,21); 
\fill[black] (28,20) rectangle (29,21); 
\fill[black] (1,19) rectangle (2,20); 
\fill[black] (2,19) rectangle (3,20); 
\fill[black] (3,19) rectangle (4,20); 
\fill[black] (4,19) rectangle (5,20); 
\fill[black] (5,19) rectangle (6,20); 
\fill[black] (6,19) rectangle (7,20); 
\fill[black] (7,19) rectangle (8,20); 
\fill[black] (19,19) rectangle (20,20); 
\fill[black] (22,19) rectangle (23,20); 
\fill[black] (23,19) rectangle (24,20); 
\fill[black] (24,19) rectangle (25,20); 
\fill[black] (25,19) rectangle (26,20); 
\fill[black] (26,19) rectangle (27,20); 
\fill[black] (27,19) rectangle (28,20); 
\fill[black] (28,19) rectangle (29,20); 
\fill[black] (1,18) rectangle (2,19); 
\fill[black] (9,18) rectangle (10,19); 
\fill[black] (10,18) rectangle (11,19); 
\fill[black] (9,17) rectangle (10,18); 
\fill[black] (10,17) rectangle (11,18); 
\fill[black] (20,17) rectangle (21,18); 
\fill[black] (21,17) rectangle (22,18); 
\fill[black] (9,16) rectangle (10,17); 
\fill[black] (10,16) rectangle (11,17); 
\fill[black] (20,16) rectangle (21,17); 
\fill[black] (21,16) rectangle (22,17); 
\fill[black] (23,16) rectangle (24,17); 
\fill[black] (6,15) rectangle (7,16); 
\fill[black] (9,15) rectangle (10,16); 
\fill[black] (10,15) rectangle (11,16); 
\fill[black] (20,15) rectangle (21,16); 
\fill[black] (21,15) rectangle (22,16); 
\fill[black] (22,15) rectangle (23,16); 
\fill[black] (9,14) rectangle (10,15); 
\fill[black] (10,14) rectangle (11,15); 
\fill[black] (20,14) rectangle (21,15); 
\fill[black] (21,14) rectangle (22,15); 
\fill[black] (23,14) rectangle (24,15); 
\fill[black] (9,13) rectangle (10,14); 
\fill[black] (10,13) rectangle (11,14); 
\fill[black] (20,13) rectangle (21,14); 
\fill[black] (21,13) rectangle (22,14); 
\fill[black] (9,12) rectangle (10,13); 
\fill[black] (10,12) rectangle (11,13); 
\fill[black] (20,12) rectangle (21,13); 
\fill[black] (21,12) rectangle (22,13); 
\fill[black] (9,11) rectangle (10,12); 
\fill[black] (10,11) rectangle (11,12); 
\fill[black] (20,11) rectangle (21,12); 
\fill[black] (21,11) rectangle (22,12); 
\fill[black] (9,10) rectangle (10,11); 
\fill[black] (10,10) rectangle (11,11); 
\fill[black] (20,10) rectangle (21,11); 
\fill[black] (21,10) rectangle (22,11); 
\fill[black] (9,9) rectangle (10,10); 
\fill[black] (10,9) rectangle (11,10); 
\fill[black] (20,9) rectangle (21,10); 
\fill[black] (21,9) rectangle (22,10); 
\fill[black] (9,8) rectangle (10,9); 
\fill[black] (10,8) rectangle (11,9); 
\fill[black] (12,8) rectangle (13,9); 
\fill[black] (15,8) rectangle (16,9); 
\fill[black] (17,8) rectangle (18,9); 
\fill[black] (20,8) rectangle (21,9); 
\fill[black] (21,8) rectangle (22,9); 
\fill[black] (9,7) rectangle (10,8); 
\fill[black] (10,7) rectangle (11,8); 
\fill[black] (11,7) rectangle (12,8); 
\fill[black] (16,7) rectangle (17,8); 
\fill[black] (20,7) rectangle (21,8); 
\fill[black] (21,7) rectangle (22,8); 
\fill[black] (9,6) rectangle (10,7); 
\fill[black] (10,6) rectangle (11,7); 
\fill[black] (11,6) rectangle (12,7); 
\fill[black] (12,6) rectangle (13,7); 
\fill[black] (13,6) rectangle (14,7); 
\fill[black] (14,6) rectangle (15,7); 
\fill[black] (15,6) rectangle (16,7); 
\fill[black] (16,6) rectangle (17,7); 
\fill[black] (17,6) rectangle (18,7); 
\fill[black] (18,6) rectangle (19,7); 
\fill[black] (9,5) rectangle (10,6); 
\fill[black] (10,5) rectangle (11,6); 
\fill[black] (11,5) rectangle (12,6); 
\fill[black] (12,5) rectangle (13,6); 
\fill[black] (13,5) rectangle (14,6); 
\fill[black] (14,5) rectangle (15,6); 
\fill[black] (15,5) rectangle (16,6); 
\fill[black] (16,5) rectangle (17,6); 
\fill[black] (17,5) rectangle (18,6); 
\fill[black] (18,5) rectangle (19,6); 
\fill[black] (20,4) rectangle (21,5); 
\fill[black] (21,4) rectangle (22,5); 
\fill[black] (8,3) rectangle (9,4); 
\fill[black] (10,2) rectangle (11,3); 
\node at (-1.5,18.5) {in};

\node at (32.5,18.5) {out};

\end{tikzpicture} 
            \end{center}
            \caption{Turning wire at time 0.}
            \label{fig: wire1}
        \end{subfigure}            
        \begin{subfigure}[t]{0.325\textwidth}
            \begin{center}
             \begin{tikzpicture}[scale=0.1]
\draw[color=gray, thick] (1,1) grid (29,29);
\draw[color=black, thick] (1,1) rectangle (29,29);
\fill[black] (1,20) rectangle (2,21); 
\fill[black] (2,20) rectangle (3,21); 
\fill[black] (3,20) rectangle (4,21); 
\fill[black] (4,20) rectangle (5,21); 
\fill[black] (5,20) rectangle (6,21); 
\fill[black] (6,20) rectangle (7,21); 
\fill[black] (7,20) rectangle (8,21); 
\fill[black] (8,20) rectangle (9,21); 
\fill[black] (9,20) rectangle (10,21); 
\fill[black] (10,20) rectangle (11,21); 
\fill[black] (22,20) rectangle (23,21); 
\fill[black] (23,20) rectangle (24,21); 
\fill[black] (24,20) rectangle (25,21); 
\fill[black] (25,20) rectangle (26,21); 
\fill[black] (26,20) rectangle (27,21); 
\fill[black] (27,20) rectangle (28,21); 
\fill[black] (28,20) rectangle (29,21); 
\fill[black] (1,19) rectangle (2,20); 
\fill[black] (2,19) rectangle (3,20); 
\fill[black] (3,19) rectangle (4,20); 
\fill[black] (4,19) rectangle (5,20); 
\fill[black] (5,19) rectangle (6,20); 
\fill[black] (6,19) rectangle (7,20); 
\fill[black] (7,19) rectangle (8,20); 
\fill[black] (8,19) rectangle (9,20); 
\fill[black] (9,19) rectangle (10,20); 
\fill[black] (10,19) rectangle (11,20); 
\fill[black] (19,19) rectangle (20,20); 
\fill[black] (20,19) rectangle (21,20); 
\fill[black] (22,19) rectangle (23,20); 
\fill[black] (23,19) rectangle (24,20); 
\fill[black] (24,19) rectangle (25,20); 
\fill[black] (25,19) rectangle (26,20); 
\fill[black] (26,19) rectangle (27,20); 
\fill[black] (27,19) rectangle (28,20); 
\fill[black] (28,19) rectangle (29,20); 
\fill[black] (1,18) rectangle (2,19); 
\fill[black] (2,18) rectangle (3,19); 
\fill[black] (3,18) rectangle (4,19); 
\fill[black] (4,18) rectangle (5,19); 
\fill[black] (5,18) rectangle (6,19); 
\fill[black] (6,18) rectangle (7,19); 
\fill[black] (7,18) rectangle (8,19); 
\fill[black] (8,18) rectangle (9,19); 
\fill[black] (9,18) rectangle (10,19); 
\fill[black] (10,18) rectangle (11,19); 
\fill[black] (19,18) rectangle (20,19); 
\fill[black] (20,18) rectangle (21,19); 
\fill[black] (21,18) rectangle (22,19); 
\fill[black] (22,18) rectangle (23,19); 
\fill[black] (23,18) rectangle (24,19); 
\fill[black] (24,18) rectangle (25,19); 
\fill[black] (25,18) rectangle (26,19); 
\fill[black] (1,17) rectangle (2,18); 
\fill[black] (2,17) rectangle (3,18); 
\fill[black] (3,17) rectangle (4,18); 
\fill[black] (4,17) rectangle (5,18); 
\fill[black] (5,17) rectangle (6,18); 
\fill[black] (6,17) rectangle (7,18); 
\fill[black] (7,17) rectangle (8,18); 
\fill[black] (8,17) rectangle (9,18); 
\fill[black] (9,17) rectangle (10,18); 
\fill[black] (10,17) rectangle (11,18); 
\fill[black] (19,17) rectangle (20,18); 
\fill[black] (20,17) rectangle (21,18); 
\fill[black] (21,17) rectangle (22,18); 
\fill[black] (22,17) rectangle (23,18); 
\fill[black] (7,16) rectangle (8,17); 
\fill[black] (8,16) rectangle (9,17); 
\fill[black] (9,16) rectangle (10,17); 
\fill[black] (10,16) rectangle (11,17); 
\fill[black] (19,16) rectangle (20,17); 
\fill[black] (20,16) rectangle (21,17); 
\fill[black] (21,16) rectangle (22,17); 
\fill[black] (23,16) rectangle (24,17); 
\fill[black] (6,15) rectangle (7,16); 
\fill[black] (8,15) rectangle (9,16); 
\fill[black] (9,15) rectangle (10,16); 
\fill[black] (10,15) rectangle (11,16); 
\fill[black] (19,15) rectangle (20,16); 
\fill[black] (20,15) rectangle (21,16); 
\fill[black] (21,15) rectangle (22,16); 
\fill[black] (22,15) rectangle (23,16); 
\fill[black] (8,14) rectangle (9,15); 
\fill[black] (9,14) rectangle (10,15); 
\fill[black] (10,14) rectangle (11,15); 
\fill[black] (19,14) rectangle (20,15); 
\fill[black] (20,14) rectangle (21,15); 
\fill[black] (21,14) rectangle (22,15); 
\fill[black] (23,14) rectangle (24,15); 
\fill[black] (8,13) rectangle (9,14); 
\fill[black] (9,13) rectangle (10,14); 
\fill[black] (10,13) rectangle (11,14); 
\fill[black] (19,13) rectangle (20,14); 
\fill[black] (20,13) rectangle (21,14); 
\fill[black] (21,13) rectangle (22,14); 
\fill[black] (8,12) rectangle (9,13); 
\fill[black] (9,12) rectangle (10,13); 
\fill[black] (10,12) rectangle (11,13); 
\fill[black] (19,12) rectangle (20,13); 
\fill[black] (20,12) rectangle (21,13); 
\fill[black] (21,12) rectangle (22,13); 
\fill[black] (8,11) rectangle (9,12); 
\fill[black] (9,11) rectangle (10,12); 
\fill[black] (10,11) rectangle (11,12); 
\fill[black] (19,11) rectangle (20,12); 
\fill[black] (20,11) rectangle (21,12); 
\fill[black] (21,11) rectangle (22,12); 
\fill[black] (8,10) rectangle (9,11); 
\fill[black] (9,10) rectangle (10,11); 
\fill[black] (10,10) rectangle (11,11); 
\fill[black] (19,10) rectangle (20,11); 
\fill[black] (20,10) rectangle (21,11); 
\fill[black] (21,10) rectangle (22,11); 
\fill[black] (8,9) rectangle (9,10); 
\fill[black] (9,9) rectangle (10,10); 
\fill[black] (10,9) rectangle (11,10); 
\fill[black] (19,9) rectangle (20,10); 
\fill[black] (20,9) rectangle (21,10); 
\fill[black] (21,9) rectangle (22,10); 
\fill[black] (8,8) rectangle (9,9); 
\fill[black] (9,8) rectangle (10,9); 
\fill[black] (10,8) rectangle (11,9); 
\fill[black] (12,8) rectangle (13,9); 
\fill[black] (15,8) rectangle (16,9); 
\fill[black] (17,8) rectangle (18,9); 
\fill[black] (19,8) rectangle (20,9); 
\fill[black] (20,8) rectangle (21,9); 
\fill[black] (21,8) rectangle (22,9); 
\fill[black] (8,7) rectangle (9,8); 
\fill[black] (9,7) rectangle (10,8); 
\fill[black] (10,7) rectangle (11,8); 
\fill[black] (11,7) rectangle (12,8); 
\fill[black] (16,7) rectangle (17,8); 
\fill[black] (18,7) rectangle (19,8); 
\fill[black] (19,7) rectangle (20,8); 
\fill[black] (20,7) rectangle (21,8); 
\fill[black] (21,7) rectangle (22,8); 
\fill[black] (8,6) rectangle (9,7); 
\fill[black] (9,6) rectangle (10,7); 
\fill[black] (10,6) rectangle (11,7); 
\fill[black] (11,6) rectangle (12,7); 
\fill[black] (12,6) rectangle (13,7); 
\fill[black] (13,6) rectangle (14,7); 
\fill[black] (14,6) rectangle (15,7); 
\fill[black] (15,6) rectangle (16,7); 
\fill[black] (16,6) rectangle (17,7); 
\fill[black] (17,6) rectangle (18,7); 
\fill[black] (18,6) rectangle (19,7); 
\fill[black] (19,6) rectangle (20,7); 
\fill[black] (21,6) rectangle (22,7); 
\fill[black] (8,5) rectangle (9,6); 
\fill[black] (9,5) rectangle (10,6); 
\fill[black] (10,5) rectangle (11,6); 
\fill[black] (11,5) rectangle (12,6); 
\fill[black] (12,5) rectangle (13,6); 
\fill[black] (13,5) rectangle (14,6); 
\fill[black] (14,5) rectangle (15,6); 
\fill[black] (15,5) rectangle (16,6); 
\fill[black] (16,5) rectangle (17,6); 
\fill[black] (17,5) rectangle (18,6); 
\fill[black] (18,5) rectangle (19,6); 
\fill[black] (19,5) rectangle (20,6); 
\fill[black] (20,5) rectangle (21,6); 
\fill[black] (21,5) rectangle (22,6); 
\fill[black] (8,4) rectangle (9,5); 
\fill[black] (9,4) rectangle (10,5); 
\fill[black] (10,4) rectangle (11,5); 
\fill[black] (11,4) rectangle (12,5); 
\fill[black] (12,4) rectangle (13,5); 
\fill[black] (13,4) rectangle (14,5); 
\fill[black] (14,4) rectangle (15,5); 
\fill[black] (15,4) rectangle (16,5); 
\fill[black] (16,4) rectangle (17,5); 
\fill[black] (17,4) rectangle (18,5); 
\fill[black] (18,4) rectangle (19,5); 
\fill[black] (19,4) rectangle (20,5); 
\fill[black] (20,4) rectangle (21,5); 
\fill[black] (21,4) rectangle (22,5); 
\fill[black] (8,3) rectangle (9,4); 
\fill[black] (9,3) rectangle (10,4); 
\fill[black] (8,2) rectangle (9,3); 
\fill[black] (9,2) rectangle (10,3); 
\fill[black] (10,2) rectangle (11,3); 
\node at (-1.5,18.5) {in};

\node at (32.5,18.5) {out};

\end{tikzpicture} 
            \end{center}
            \caption{Turning wire at time 52.}
            \label{fig: wire2}
        \end{subfigure}            
        \begin{subfigure}[t]{0.325\textwidth}
            \begin{center}
             \begin{tikzpicture}[scale=0.1]
\draw[color=gray, thick] (1,1) grid (29,29);
\draw[color=black, thick] (1,1) rectangle (29,29);
\fill[black] (15,26) rectangle (16,27); 
\fill[black] (16,26) rectangle (17,27); 
\fill[black] (17,26) rectangle (18,27); 
\fill[black] (18,26) rectangle (19,27); 
\fill[black] (19,26) rectangle (20,27); 
\fill[black] (20,26) rectangle (21,27); 
\fill[black] (21,26) rectangle (22,27); 
\fill[black] (22,26) rectangle (23,27); 
\fill[black] (23,26) rectangle (24,27); 
\fill[black] (24,26) rectangle (25,27); 
\fill[black] (25,26) rectangle (26,27); 
\fill[black] (26,26) rectangle (27,27); 
\fill[black] (27,26) rectangle (28,27); 
\fill[black] (28,26) rectangle (29,27); 
\fill[black] (12,25) rectangle (13,26); 
\fill[black] (15,25) rectangle (16,26); 
\fill[black] (16,25) rectangle (17,26); 
\fill[black] (17,25) rectangle (18,26); 
\fill[black] (18,25) rectangle (19,26); 
\fill[black] (19,25) rectangle (20,26); 
\fill[black] (20,25) rectangle (21,26); 
\fill[black] (21,25) rectangle (22,26); 
\fill[black] (22,25) rectangle (23,26); 
\fill[black] (23,25) rectangle (24,26); 
\fill[black] (24,25) rectangle (25,26); 
\fill[black] (25,25) rectangle (26,26); 
\fill[black] (26,25) rectangle (27,26); 
\fill[black] (27,25) rectangle (28,26); 
\fill[black] (28,25) rectangle (29,26); 
\fill[black] (13,23) rectangle (14,24); 
\fill[black] (13,22) rectangle (14,23); 
\fill[black] (16,22) rectangle (17,23); 
\fill[black] (13,21) rectangle (14,22); 
\fill[black] (13,20) rectangle (14,21); 
\fill[black] (13,19) rectangle (14,20); 
\fill[black] (13,18) rectangle (14,19); 
\fill[black] (13,17) rectangle (14,18); 
\fill[black] (13,16) rectangle (14,17); 
\fill[black] (1,15) rectangle (2,16); 
\fill[black] (2,15) rectangle (3,16); 
\fill[black] (3,15) rectangle (4,16); 
\fill[black] (4,15) rectangle (5,16); 
\fill[black] (5,15) rectangle (6,16); 
\fill[black] (6,15) rectangle (7,16); 
\fill[black] (7,15) rectangle (8,16); 
\fill[black] (8,15) rectangle (9,16); 
\fill[black] (9,15) rectangle (10,16); 
\fill[black] (10,15) rectangle (11,16); 
\fill[black] (11,15) rectangle (12,16); 
\fill[black] (1,14) rectangle (2,15); 
\fill[black] (2,14) rectangle (3,15); 
\fill[black] (3,14) rectangle (4,15); 
\fill[black] (4,14) rectangle (5,15); 
\fill[black] (5,14) rectangle (6,15); 
\fill[black] (6,14) rectangle (7,15); 
\fill[black] (7,14) rectangle (8,15); 
\fill[black] (8,14) rectangle (9,15); 
\fill[black] (9,14) rectangle (10,15); 
\fill[black] (10,14) rectangle (11,15); 
\fill[black] (11,14) rectangle (12,15); 
\fill[black] (1,13) rectangle (2,14); 
\fill[black] (13,13) rectangle (14,14); 
\fill[black] (14,13) rectangle (15,14); 
\fill[black] (13,12) rectangle (14,13); 
\fill[black] (14,12) rectangle (15,13); 
\fill[black] (13,11) rectangle (14,12); 
\fill[black] (14,11) rectangle (15,12); 
\fill[black] (10,10) rectangle (11,11); 
\fill[black] (13,10) rectangle (14,11); 
\fill[black] (14,10) rectangle (15,11); 
\fill[black] (13,9) rectangle (14,10); 
\fill[black] (14,9) rectangle (15,10); 
\fill[black] (16,9) rectangle (17,10); 
\fill[black] (17,9) rectangle (18,10); 
\fill[black] (13,8) rectangle (14,9); 
\fill[black] (14,8) rectangle (15,9); 
\fill[black] (15,8) rectangle (16,9); 
\fill[black] (17,8) rectangle (18,9); 
\fill[black] (13,7) rectangle (14,8); 
\fill[black] (14,7) rectangle (15,8); 
\fill[black] (15,7) rectangle (16,8); 
\fill[black] (13,6) rectangle (14,7); 
\fill[black] (14,6) rectangle (15,7); 
\fill[black] (15,6) rectangle (16,7); 
\fill[black] (13,5) rectangle (14,6); 
\fill[black] (14,5) rectangle (15,6); 
\fill[black] (15,5) rectangle (16,6); 
\fill[black] (18,5) rectangle (19,6); 
\fill[black] (19,5) rectangle (20,6); 
\fill[black] (13,4) rectangle (14,5); 
\fill[black] (14,4) rectangle (15,5); 
\fill[black] (15,4) rectangle (16,5); 
\fill[black] (18,4) rectangle (19,5); 
\fill[black] (19,4) rectangle (20,5); 
\fill[black] (22,4) rectangle (23,5); 
\fill[black] (23,4) rectangle (24,5); 
\fill[black] (24,4) rectangle (25,5); 
\fill[black] (25,4) rectangle (26,5); 
\fill[black] (26,4) rectangle (27,5); 
\fill[black] (27,4) rectangle (28,5); 
\fill[black] (28,4) rectangle (29,5); 
\fill[black] (11,3) rectangle (12,4); 
\fill[black] (22,3) rectangle (23,4); 
\fill[black] (23,3) rectangle (24,4); 
\fill[black] (24,3) rectangle (25,4); 
\fill[black] (25,3) rectangle (26,4); 
\fill[black] (26,3) rectangle (27,4); 
\fill[black] (27,3) rectangle (28,4); 
\fill[black] (28,3) rectangle (29,4); 
\fill[black] (16,2) rectangle (17,3); 
\fill[black] (20,2) rectangle (21,3); 

\node at (-1.5,13.5) {in};

\node at (32.5,24.5) {out};
\node at (32.5,2.5) {out};
\end{tikzpicture}
            \end{center}
            \caption{Duplicator.}
            \label{fig: x2}
        \end{subfigure}            

        \begin{subfigure}[t]{0.325\textwidth}
            \begin{center}
              \begin{tikzpicture}[scale=0.1]
\draw[color=gray, thick] (1,1) grid (29,29);
\draw[color=black, thick] (1,1) rectangle (29,29);

\fill[black] (1,27) rectangle (2,28); 
\fill[black] (2,27) rectangle (3,28); 
\fill[black] (3,27) rectangle (4,28); 
\fill[black] (4,27) rectangle (5,28); 
\fill[black] (5,27) rectangle (6,28); 
\fill[black] (6,27) rectangle (7,28); 
\fill[black] (7,27) rectangle (8,28); 
\fill[black] (1,26) rectangle (2,27); 
\fill[black] (2,26) rectangle (3,27); 
\fill[black] (3,26) rectangle (4,27); 
\fill[black] (4,26) rectangle (5,27); 
\fill[black] (5,26) rectangle (6,27); 
\fill[black] (6,26) rectangle (7,27); 
\fill[black] (7,26) rectangle (8,27); 
\fill[black] (8,24) rectangle (9,25); 
\fill[black] (5,23) rectangle (6,24); 
\fill[black] (6,23) rectangle (7,24); 
\fill[black] (5,22) rectangle (6,23); 
\fill[black] (6,22) rectangle (7,23); 
\fill[black] (5,21) rectangle (6,22); 
\fill[black] (6,21) rectangle (7,22); 
\fill[black] (5,20) rectangle (6,21); 
\fill[black] (6,20) rectangle (7,21); 
\fill[black] (10,20) rectangle (11,21); 
\fill[black] (5,19) rectangle (6,20); 
\fill[black] (6,19) rectangle (7,20); 
\fill[black] (3,18) rectangle (4,19); 
\fill[black] (5,18) rectangle (6,19); 
\fill[black] (6,18) rectangle (7,19); 
\fill[black] (13,18) rectangle (14,19); 
\fill[black] (14,18) rectangle (15,19); 
\fill[black] (17,18) rectangle (18,19); 
\fill[black] (18,18) rectangle (19,19); 
\fill[black] (3,17) rectangle (4,18); 
\fill[black] (4,17) rectangle (5,18); 
\fill[black] (6,17) rectangle (7,18); 
\fill[black] (14,17) rectangle (15,18); 
\fill[black] (17,17) rectangle (18,18); 
\fill[black] (6,16) rectangle (7,17); 
\fill[black] (12,16) rectangle (13,17); 
\fill[black] (13,16) rectangle (14,17); 
\fill[black] (18,16) rectangle (19,17); 
\fill[black] (19,16) rectangle (20,17); 
\fill[black] (22,16) rectangle (23,17); 
\fill[black] (23,16) rectangle (24,17); 
\fill[black] (24,16) rectangle (25,17); 
\fill[black] (25,16) rectangle (26,17); 
\fill[black] (26,16) rectangle (27,17); 
\fill[black] (27,16) rectangle (28,17); 
\fill[black] (28,16) rectangle (29,17); 
\fill[black] (6,15) rectangle (7,16); 
\fill[black] (12,15) rectangle (13,16); 
\fill[black] (13,15) rectangle (14,16); 
\fill[black] (14,15) rectangle (15,16); 
\fill[black] (15,15) rectangle (16,16); 
\fill[black] (16,15) rectangle (17,16); 
\fill[black] (17,15) rectangle (18,16); 
\fill[black] (18,15) rectangle (19,16); 
\fill[black] (19,15) rectangle (20,16); 
\fill[black] (22,15) rectangle (23,16); 
\fill[black] (23,15) rectangle (24,16); 
\fill[black] (24,15) rectangle (25,16); 
\fill[black] (25,15) rectangle (26,16); 
\fill[black] (26,15) rectangle (27,16); 
\fill[black] (27,15) rectangle (28,16); 
\fill[black] (28,15) rectangle (29,16); 
\fill[black] (8,13) rectangle (9,14); 
\fill[black] (9,13) rectangle (10,14); 
\fill[black] (21,13) rectangle (22,14); 
\fill[black] (8,12) rectangle (9,13); 
\fill[black] (9,12) rectangle (10,13); 
\fill[black] (16,12) rectangle (17,13); 
\fill[black] (19,12) rectangle (20,13); 
\fill[black] (24,12) rectangle (25,13); 
\fill[black] (8,11) rectangle (9,12); 
\fill[black] (9,11) rectangle (10,12); 
\fill[black] (14,11) rectangle (15,12); 
\fill[black] (8,10) rectangle (9,11); 
\fill[black] (9,10) rectangle (10,11); 
\fill[black] (8,9) rectangle (9,10); 
\fill[black] (9,9) rectangle (10,10); 
\fill[black] (5,8) rectangle (6,9); 
\fill[black] (6,8) rectangle (7,9); 
\fill[black] (8,8) rectangle (9,9); 
\fill[black] (9,8) rectangle (10,9); 
\fill[black] (5,7) rectangle (6,8); 
\fill[black] (6,7) rectangle (7,8); 
\fill[black] (7,7) rectangle (8,8); 
\fill[black] (8,7) rectangle (9,8); 
\fill[black] (9,7) rectangle (10,8); 
\fill[black] (6,6) rectangle (7,7); 
\fill[black] (7,6) rectangle (8,7); 
\fill[black] (8,6) rectangle (9,7); 
\fill[black] (9,6) rectangle (10,7); 
\fill[black] (1,5) rectangle (2,6); 
\fill[black] (2,5) rectangle (3,6); 
\fill[black] (3,5) rectangle (4,6); 
\fill[black] (4,5) rectangle (5,6); 
\fill[black] (5,5) rectangle (6,6); 
\fill[black] (6,5) rectangle (7,6); 
\fill[black] (7,5) rectangle (8,6); 
\fill[black] (8,5) rectangle (9,6); 
\fill[black] (9,5) rectangle (10,6); 
\fill[black] (1,4) rectangle (2,5); 
\fill[black] (2,4) rectangle (3,5); 
\fill[black] (3,4) rectangle (4,5); 
\fill[black] (4,4) rectangle (5,5); 
\fill[black] (5,4) rectangle (6,5); 
\fill[black] (6,4) rectangle (7,5); 
\fill[black] (7,4) rectangle (8,5); 
\fill[black] (8,4) rectangle (9,5); 
\fill[black] (9,4) rectangle (10,5); 
\fill[black] (11,3) rectangle (12,4); 

\node at (-1.5,25.5) {in};
\node at (-1.5,3.5) {in};
\node at (32.5,14.5) {out};
\draw (2,25) -- (1,26);
\draw (1,25) -- (2,26);
\draw (2,3) -- (1,4);
\draw (1,3) -- (2,4);

\end{tikzpicture}
            \end{center}
            \caption{AND gate.}\label{fig: AND gate}
        \end{subfigure}
        \begin{subfigure}[t]{0.325\textwidth}
            \begin{center}
              \begin{tikzpicture}[scale=0.1]
\draw[color=gray, thick] (1,1) grid (29,29);
\draw[color=black, thick] (1,1) rectangle (29,29);

\fill[black] (1,27) rectangle (2,28); 
\fill[black] (2,27) rectangle (3,28); 
\fill[black] (3,27) rectangle (4,28); 
\fill[black] (4,27) rectangle (5,28); 
\fill[black] (5,27) rectangle (6,28); 
\fill[black] (6,27) rectangle (7,28); 
\fill[black] (7,27) rectangle (8,28); 
\fill[black] (8,27) rectangle (9,28); 
\fill[black] (9,27) rectangle (10,28); 
\fill[black] (10,27) rectangle (11,28); 
\fill[black] (1,26) rectangle (2,27); 
\fill[black] (2,26) rectangle (3,27); 
\fill[black] (3,26) rectangle (4,27); 
\fill[black] (4,26) rectangle (5,27); 
\fill[black] (5,26) rectangle (6,27); 
\fill[black] (6,26) rectangle (7,27); 
\fill[black] (7,26) rectangle (8,27); 
\fill[black] (8,26) rectangle (9,27); 
\fill[black] (9,26) rectangle (10,27); 
\fill[black] (10,26) rectangle (11,27); 
\fill[black] (11,24) rectangle (12,25); 
\fill[black] (8,23) rectangle (9,24); 
\fill[black] (9,23) rectangle (10,24); 
\fill[black] (8,22) rectangle (9,23); 
\fill[black] (9,22) rectangle (10,23); 
\fill[black] (8,21) rectangle (9,22); 
\fill[black] (9,21) rectangle (10,22); 
\fill[black] (8,20) rectangle (9,21); 
\fill[black] (9,20) rectangle (10,21); 
\fill[black] (13,20) rectangle (14,21); 
\fill[black] (8,19) rectangle (9,20); 
\fill[black] (9,19) rectangle (10,20); 
\fill[black] (8,18) rectangle (9,19); 
\fill[black] (9,18) rectangle (10,19); 
\fill[black] (16,18) rectangle (17,19); 
\fill[black] (17,18) rectangle (18,19); 
\fill[black] (20,18) rectangle (21,19); 
\fill[black] (21,18) rectangle (22,19); 
\fill[black] (8,17) rectangle (9,18); 
\fill[black] (9,17) rectangle (10,18); 
\fill[black] (17,17) rectangle (18,18); 
\fill[black] (20,17) rectangle (21,18); 
\fill[black] (6,16) rectangle (7,17); 
\fill[black] (8,16) rectangle (9,17); 
\fill[black] (9,16) rectangle (10,17); 
\fill[black] (12,16) rectangle (13,17); 
\fill[black] (13,16) rectangle (14,17); 
\fill[black] (14,16) rectangle (15,17); 
\fill[black] (15,16) rectangle (16,17); 
\fill[black] (16,16) rectangle (17,17); 
\fill[black] (21,16) rectangle (22,17); 
\fill[black] (22,16) rectangle (23,17); 
\fill[black] (25,16) rectangle (26,17); 
\fill[black] (26,16) rectangle (27,17); 
\fill[black] (27,16) rectangle (28,17); 
\fill[black] (28,16) rectangle (29,17); 
\fill[black] (6,15) rectangle (7,16); 
\fill[black] (7,15) rectangle (8,16); 
\fill[black] (8,15) rectangle (9,16); 
\fill[black] (9,15) rectangle (10,16); 
\fill[black] (12,15) rectangle (13,16); 
\fill[black] (13,15) rectangle (14,16); 
\fill[black] (14,15) rectangle (15,16); 
\fill[black] (15,15) rectangle (16,16); 
\fill[black] (16,15) rectangle (17,16); 
\fill[black] (17,15) rectangle (18,16); 
\fill[black] (18,15) rectangle (19,16); 
\fill[black] (19,15) rectangle (20,16); 
\fill[black] (20,15) rectangle (21,16); 
\fill[black] (21,15) rectangle (22,16); 
\fill[black] (22,15) rectangle (23,16); 
\fill[black] (25,15) rectangle (26,16); 
\fill[black] (26,15) rectangle (27,16); 
\fill[black] (27,15) rectangle (28,16); 
\fill[black] (28,15) rectangle (29,16); 
\fill[black] (6,14) rectangle (7,15); 
\fill[black] (8,14) rectangle (9,15); 
\fill[black] (9,14) rectangle (10,15); 
\fill[black] (8,13) rectangle (9,14); 
\fill[black] (9,13) rectangle (10,14); 
\fill[black] (24,13) rectangle (25,14); 
\fill[black] (8,12) rectangle (9,13); 
\fill[black] (9,12) rectangle (10,13); 
\fill[black] (16,12) rectangle (17,13); 
\fill[black] (22,12) rectangle (23,13); 
\fill[black] (27,12) rectangle (28,13); 
\fill[black] (8,11) rectangle (9,12); 
\fill[black] (9,11) rectangle (10,12); 
\fill[black] (13,11) rectangle (14,12); 
\fill[black] (8,10) rectangle (9,11); 
\fill[black] (9,10) rectangle (10,11); 
\fill[black] (8,9) rectangle (9,10); 
\fill[black] (9,9) rectangle (10,10); 
\fill[black] (8,8) rectangle (9,9); 
\fill[black] (9,8) rectangle (10,9); 
\fill[black] (8,7) rectangle (9,8); 
\fill[black] (9,7) rectangle (10,8); 
\fill[black] (8,6) rectangle (9,7); 
\fill[black] (9,6) rectangle (10,7); 
\fill[black] (1,5) rectangle (2,6); 
\fill[black] (2,5) rectangle (3,6); 
\fill[black] (3,5) rectangle (4,6); 
\fill[black] (4,5) rectangle (5,6); 
\fill[black] (5,5) rectangle (6,6); 
\fill[black] (6,5) rectangle (7,6); 
\fill[black] (8,5) rectangle (9,6); 
\fill[black] (9,5) rectangle (10,6); 
\fill[black] (1,4) rectangle (2,5); 
\fill[black] (2,4) rectangle (3,5); 
\fill[black] (3,4) rectangle (4,5); 
\fill[black] (4,4) rectangle (5,5); 
\fill[black] (5,4) rectangle (6,5); 
\fill[black] (6,4) rectangle (7,5); 
\fill[black] (7,4) rectangle (8,5); 
\fill[black] (8,4) rectangle (9,5); 
\fill[black] (9,4) rectangle (10,5); 
\fill[black] (11,3) rectangle (12,4); 
	\node at (-1.5,25.5) {in};
	\node at (-1.5,3.5) {in};
	\node at (32.5,14.5) {out};
	\draw (2,25) -- (1,26);
	\draw (1,25) -- (2,26);
	\draw (2,3) -- (1,4);
	\draw (1,3) -- (2,4);
	
	\end{tikzpicture} 
            \end{center}
            \caption{OR gate.}\label{fig: OR gate}
        \end{subfigure}
        \begin{subfigure}[t]{0.325\textwidth}
            \begin{center}
              \begin{tikzpicture}[scale=0.1]
\draw[color=gray, thick] (1,1) grid (29,29);
\draw[color=black, thick] (1,1) rectangle (29,29);
\fill[black] (9,28) rectangle (10,29); 
\fill[black] (1,27) rectangle (2,28); 
\fill[black] (2,27) rectangle (3,28); 
\fill[black] (3,27) rectangle (4,28); 
\fill[black] (4,27) rectangle (5,28); 
\fill[black] (5,27) rectangle (6,28); 
\fill[black] (6,27) rectangle (7,28); 
\fill[black] (7,27) rectangle (8,28); 
\fill[black] (1,26) rectangle (2,27); 
\fill[black] (2,26) rectangle (3,27); 
\fill[black] (3,26) rectangle (4,27); 
\fill[black] (4,26) rectangle (5,27); 
\fill[black] (5,26) rectangle (6,27); 
\fill[black] (6,26) rectangle (7,27); 
\fill[black] (7,26) rectangle (8,27); 
\fill[black] (9,25) rectangle (10,26); 
\fill[black] (9,24) rectangle (10,25); 
\fill[black] (9,23) rectangle (10,24); 
\fill[black] (9,22) rectangle (10,23); 
\fill[black] (9,21) rectangle (10,22); 
\fill[black] (6,20) rectangle (7,21); 
\fill[black] (9,20) rectangle (10,21); 
\fill[black] (9,19) rectangle (10,20); 
\fill[black] (9,18) rectangle (10,19); 
\fill[black] (19,18) rectangle (20,19); 
\fill[black] (20,18) rectangle (21,19); 
\fill[black] (9,17) rectangle (10,18); 
\fill[black] (12,17) rectangle (13,18); 
\fill[black] (19,17) rectangle (20,18); 
\fill[black] (9,16) rectangle (10,17); 
\fill[black] (20,16) rectangle (21,17); 
\fill[black] (21,16) rectangle (22,17); 
\fill[black] (22,16) rectangle (23,17); 
\fill[black] (23,16) rectangle (24,17); 
\fill[black] (24,16) rectangle (25,17); 
\fill[black] (25,16) rectangle (26,17); 
\fill[black] (26,16) rectangle (27,17); 
\fill[black] (27,16) rectangle (28,17); 
\fill[black] (28,16) rectangle (29,17); 
\fill[black] (14,15) rectangle (15,16); 
\fill[black] (15,15) rectangle (16,16); 
\fill[black] (16,15) rectangle (17,16); 
\fill[black] (17,15) rectangle (18,16); 
\fill[black] (18,15) rectangle (19,16); 
\fill[black] (19,15) rectangle (20,16); 
\fill[black] (20,15) rectangle (21,16); 
\fill[black] (21,15) rectangle (22,16); 
\fill[black] (22,15) rectangle (23,16); 
\fill[black] (23,15) rectangle (24,16); 
\fill[black] (24,15) rectangle (25,16); 
\fill[black] (25,15) rectangle (26,16); 
\fill[black] (26,15) rectangle (27,16); 
\fill[black] (27,15) rectangle (28,16); 
\fill[black] (28,15) rectangle (29,16); 
\fill[black] (8,14) rectangle (9,15); 
\fill[black] (11,14) rectangle (12,15); 
\fill[black] (12,14) rectangle (13,15); 
\fill[black] (9,12) rectangle (10,13); 
\fill[black] (14,12) rectangle (15,13); 
\fill[black] (9,11) rectangle (10,12); 
\fill[black] (12,11) rectangle (13,12); 
\fill[black] (9,10) rectangle (10,11); 
\fill[black] (9,9) rectangle (10,10); 
\fill[black] (6,8) rectangle (7,9); 
\fill[black] (9,8) rectangle (10,9); 
\fill[black] (9,7) rectangle (10,8); 
\fill[black] (9,6) rectangle (10,7); 
\fill[black] (1,5) rectangle (2,6); 
\fill[black] (2,5) rectangle (3,6); 
\fill[black] (3,5) rectangle (4,6); 
\fill[black] (4,5) rectangle (5,6); 
\fill[black] (5,5) rectangle (6,6); 
\fill[black] (6,5) rectangle (7,6); 
\fill[black] (7,5) rectangle (8,6); 
\fill[black] (1,4) rectangle (2,5); 
\fill[black] (2,4) rectangle (3,5); 
\fill[black] (3,4) rectangle (4,5); 
\fill[black] (4,4) rectangle (5,5); 
\fill[black] (5,4) rectangle (6,5); 
\fill[black] (6,4) rectangle (7,5); 
\fill[black] (7,4) rectangle (8,5); 
\fill[black] (9,3) rectangle (10,4); 
\fill[black] (10,3) rectangle (11,4); 
\node at (-1.5,25.5) {in};
\node at (-1.5,3.5) {in};
\node at (32.5,14.5) {out};
\draw (2,25) -- (1,26);
\draw (1,25) -- (2,26);
\draw (2,3) -- (1,4);
\draw (1,3) -- (2,4);

\end{tikzpicture}
            \end{center}
            \caption{XOR gate.}\label{fig: XOR gate}
        \end{subfigure}
        \caption{Gadgets for the implementation of logic circuits for the rules $2$ and $24$.}\label{fig: gates}
\end{figure}

We represent the information flowing through wires, which are based, roughly, on a line of active sites. Then, all the sites over (under) this line will have one active neighbor. If a cell over the line becomes active, then in the next step a neighbor of this cell will become active, so the information flows over the wire. 

These constructions of the gates are quite standard. Maybe one exception is the XOR gate. The crucial observation is that we manage to simulate the XOR using the syncronisity of information. An XOR gate consists roughly in two confluent wires. If a signal arrives from one of the two wires, the signal simply passes.  If two signals arrive at the same time, the next cell in the wire will have more than two neighbors, so it will remain active.

Using the XOR gate (Figure \ref{fig: XOR gate}), one can build a planar crossing gadget, concluding the \Pt-completeness constructions.

\begin{theorem}\label{thm: p-complete square}
   \stability is \Pt-complete for rules $2$ and $24$.
\end{theorem}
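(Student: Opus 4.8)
The plan is to treat the two inclusions separately. That \stability is in \Pt for rules $2$ and $24$ is immediate from the general observation in the Preliminaries: any freezing CA reaches its fixed point in $\cO(n^2)$ steps, and each step is computable in polynomial time, so one simply simulates the dynamics. All the effort therefore goes into \Pt-hardness, for which I would give a logarithmic-space reduction from the Monotone Circuit Value Problem --- evaluating a Boolean circuit built only from fan-in-two \textsc{and} and \textsc{or} gates, fan-out, and constant inputs --- which is \Pt-complete \cite{Goldschlager1977, Greenlaw:1995}. Following the methodology of \cite{DBLP:journals/jcss/GolesMPT18}, it is enough to realize inside the CA four primitives --- \emph{wires} carrying one bit, \emph{\textsc{and} gates}, \emph{\textsc{or} gates}, and a planar \emph{crossing gadget} --- because a crossing gadget lets one embed an arbitrary (not necessarily planar) monotone circuit into the grid.

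The core of the argument is then to check that the gadgets of Figure~\ref{fig: gates} do what is claimed. I would model a wire as a fixed line of permanently active cells and a \emph{signal} as a localized packet of extra active cells running alongside it; under the rule $2$ / $24$ update, the signal front advances at one cell of ``wire length'' per step, because a cell abutting the wire acquires exactly two active neighbors precisely when the front reaches it, and has fewer (hence stays inactive) before and after. With this picture I would verify: the turning wires of Figures~\ref{fig: wire1}--\ref{fig: wire2} route a signal around a corner with a known, fixed delay; the duplicator of Figure~\ref{fig: x2} copies one signal onto two outgoing wires (fan-out); the \textsc{and} gate of Figure~\ref{fig: AND gate} lets the front pass the junction iff \emph{both} inputs carry a signal; and the \textsc{or} gate of Figure~\ref{fig: OR gate} lets it pass iff \emph{at least one} does. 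The decision cell $u$ is placed at the far end of the output wire, initially inactive and positioned so that it never spuriously reaches three active neighbors, so that $u$ eventually becomes active if and only if the circuit outputs $1$; thus the reduction maps a circuit to $(x,u)$ with ``circuit evaluates to $1$'' $\Longleftrightarrow$ ``$(x,u)$ is a \emph{no}-instance of \stability'', and since \Pt is closed under complement this gives \Pt-hardness. Because every gadget has constant size and they are tiled along an $\cO(\log n)$-space-computable grid layout of the circuit (wire lengths being padded by inserting turning gadgets), the whole finite configuration is produced in logarithmic space.

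The main obstacle --- and what distinguishes rule $2$ from the ``simple'' and topological rules --- is \textbf{synchronization}: a two-input gate evaluates correctly only when its two incoming signals arrive simultaneously, so I must argue that all gate inputs can be brought to equal, prescribed lengths by inserting delay loops made of turning wires, and that this length-equalization is itself logspace-computable from the circuit. The most delicate component is the planar crossing, which I would build from the \textsc{xor} gate of Figure~\ref{fig: XOR gate} via the standard three-\textsc{xor} swap; the \textsc{xor} gate works because two signals reaching its confluence at the same time leave the next cell with three or more active neighbors, so under rule $2$ that cell fails to activate and the signal is cancelled, realizing $1 \oplus 1 = 0$ --- which once more relies on the two branches being length-matched. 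Finally I would observe that the entire construction is valid verbatim for rule $24$ as well, since no cell in any gadget ever attains four active neighbors, so the additional transition of rule $24$ is never exercised; hence \stability is \Pt-complete for both rules.
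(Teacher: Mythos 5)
Your proposal is correct and follows essentially the same route as the paper: a reduction from the circuit value problem in the style of \cite{Goldschlager1977} and \cite{DBLP:journals/jcss/GolesMPT18}, implemented with the wire, duplicator, \textsc{and}, \textsc{or} and \textsc{xor} gadgets of Figure~\ref{fig: gates}, with the crossing built from the \textsc{xor} gate and the observation that no cell in any gadget ever has four active neighbors, so the argument covers rule $24$ as well as rule $2$. Your explicit treatment of signal synchronization and of the cancellation mechanism in the \textsc{xor} gate (a cell with three active neighbors stays inactive under rule $2$) matches, and indeed states more carefully, what the paper's construction relies on.
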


Remark: In our construction we strongly use neighborhoods composed only of inactive cells, so these constructions can not be used for rules $02$ and $024$, where zero is not a quiescent state. So in these cases stability could have less complexity.

\section{Concluding Remarks}
\subsection{Summary of our results}
In this paper we have studied the complexity of the \stability problem for the set of binary Freezing Totalistic Cellular Automata (FTCA) on the triangular and square grid with von Neumann neighborhood.

We find different complexities for this FTCA, including a \Pt-complete case on the square grid.
For the rules where \stability is in \NC we have considered two approaches: a topological approach (Theorems \ref{thm: SyncStability 2}, \ref{thm: SyncStability 23}, \ref{thm:rule 34}, \ref{thm:rule 3}, and \ref{thm: 234 square NC}) and an algebraic  approach (Theorems \ref{thm: SyncStability or} and \ref{thm: 12** square NC}). A summary of our results is given in Tables \ref{table: complexity 3} and \ref{table: complexity 4}.

\begin{table}[h]
\centering 
{\footnotesize
\[\begin{array}{|c|c|c|}
   \hline
 \text{Rule}& \stability & Theorem \\   
  \hline   
 \phi         &  \text{$\mathcal{O}(1)$ }&\text{ Trivial                 }       \\
 3           &  \text{\text{ \NC} }&\text{ Trivial                }        \\
 2           &  \text{\text{ \NC}   }& \text{Thm \ref{thm: SyncStability 2}}  \\
 23          &  \text{\text{ \NC}   }& \text{Thm \ref{thm: SyncStability 23}}  \\
 12          &  \text{\text{ \NC}   }& \text{Thm \ref{thm: SyncStability or}}  \\
 123         &  \text{\text{ \NC}   }& \text{Trivial                         } \\ 
  \hline
 \end{array}\]}
  \caption
 {Summary of rules and their complexity of \stability on the triangular grid.}
  \label{table: complexity 3} 
\end{table}

\begin{table}[h]
\centering 
{\footnotesize
\[\begin{array}{|c|c|c|c|c|c|}
   \hline
 \text{Rule}&  \stability & Theorem & \text{Rule}&  \stability & Theorem \\   
  \hline   
 4    &\text{\text{ \NC} }  &\text{ Trivial }                         &   234   &\text{\text{ \NC}   }     & \text{Thm \ref{thm: 234 square NC}} \\
 3    &\text{\text{ \NC}   }     & \text{Thm \ref{thm:rule 3} }   &   12    &  \text{\NC}& \text{Thm \ref{thm: 12** square NC}} \\
 34   &\text{\text{ \NC}   }     & \text{Thm \ref{thm:rule 34} }   &   124   &  \text{\NC}& \text{Thm \ref{thm: 12** square NC}} \\
 2    &\text{\text{\Pt-Complete}}& \text{Thm \ref{thm: p-complete square}} &   123   &  \text{\NC}& \text{Thm \ref{thm: 12** square NC}}  \\
 24   &\text{\text{\Pt-Complete}}& \text{Thm \ref{thm: p-complete square}} &   1234  &  \text{$\mathcal{O}(1)$ }  &\text{ Trivial  } \\ 
  \hline
 \end{array}\]
 }
 \caption
 {Summary of rules and their complexity of \stability on the square grid.}
  \label{table: complexity 4} 
\end{table}

\subsection{About Fractal-Growing Rules}
In this paper we have not included a study of fractal growing rules. In fact, the complexity of \stability remains open for these rules, even for fractal rules defined over a triangular grid. 


To have an intuition about the dynamical complexity of those rules, see Figures \ref{fig: rules1 sq} and \ref{fig: rules1 tri}, where starting with only the center active we obtain a fractal behavior.

\begin{figure}
        \centering
  \begin{subfigure}[t]{0.24\textwidth}
    \centering
    \includegraphics[width=.95\textwidth]{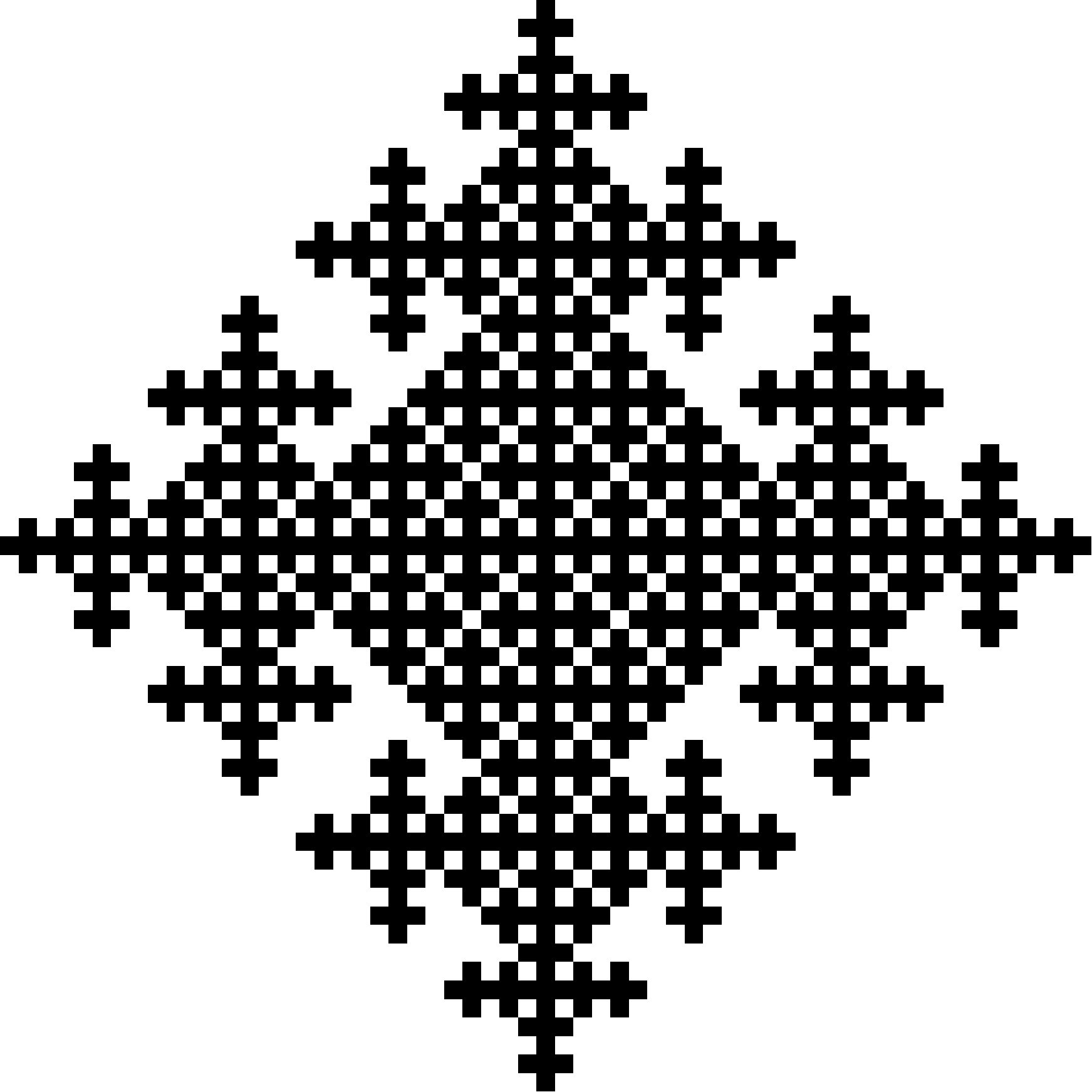}  
    \caption{Rule 1}
  \end{subfigure}   
  \begin{subfigure}[t]{0.24\textwidth}
    \centering  
    \includegraphics[width=.95\textwidth]{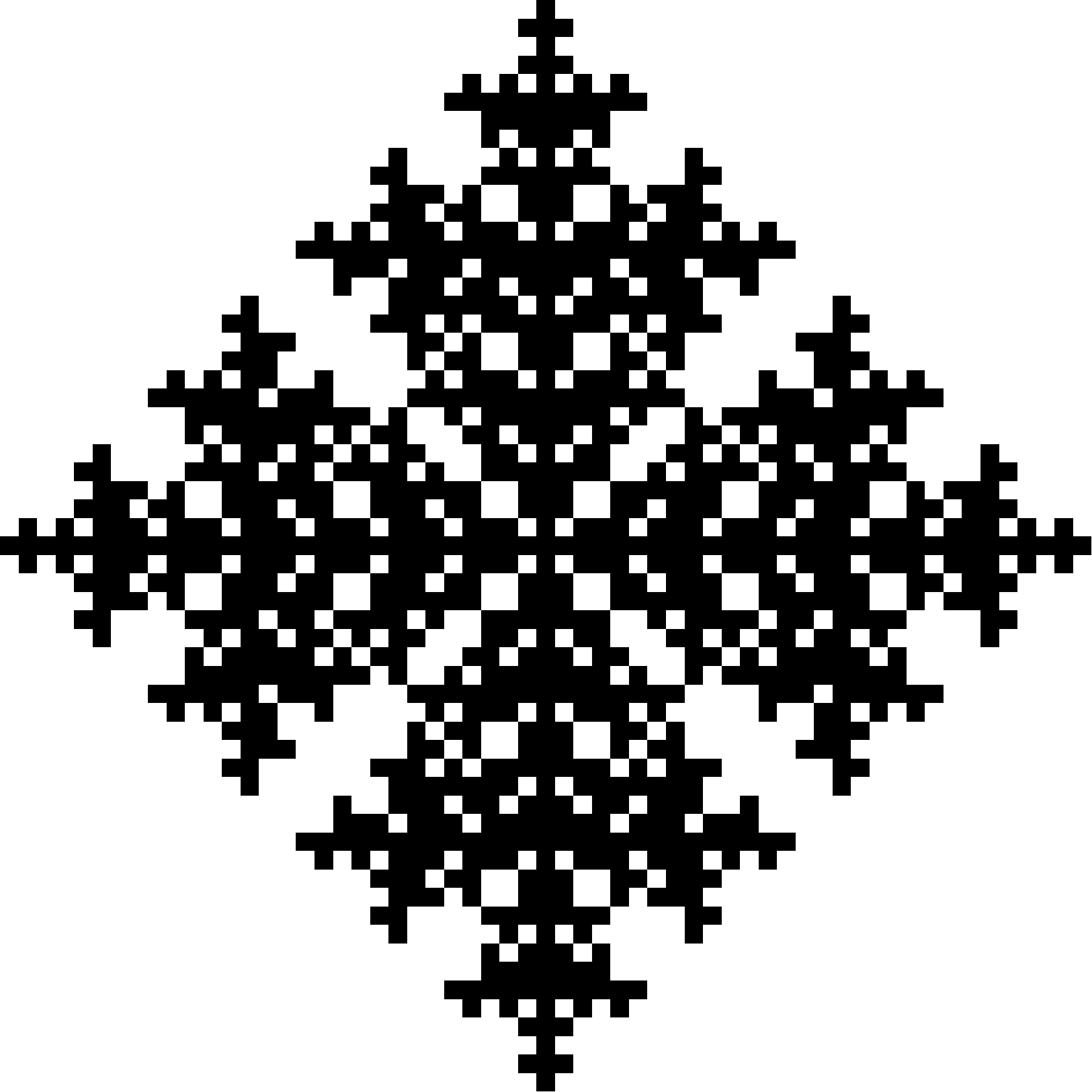}  
    \caption{Rule 13}
  \end{subfigure}      
  \begin{subfigure}[t]{0.24\textwidth}
    \centering 
    \includegraphics[width=.95\textwidth]{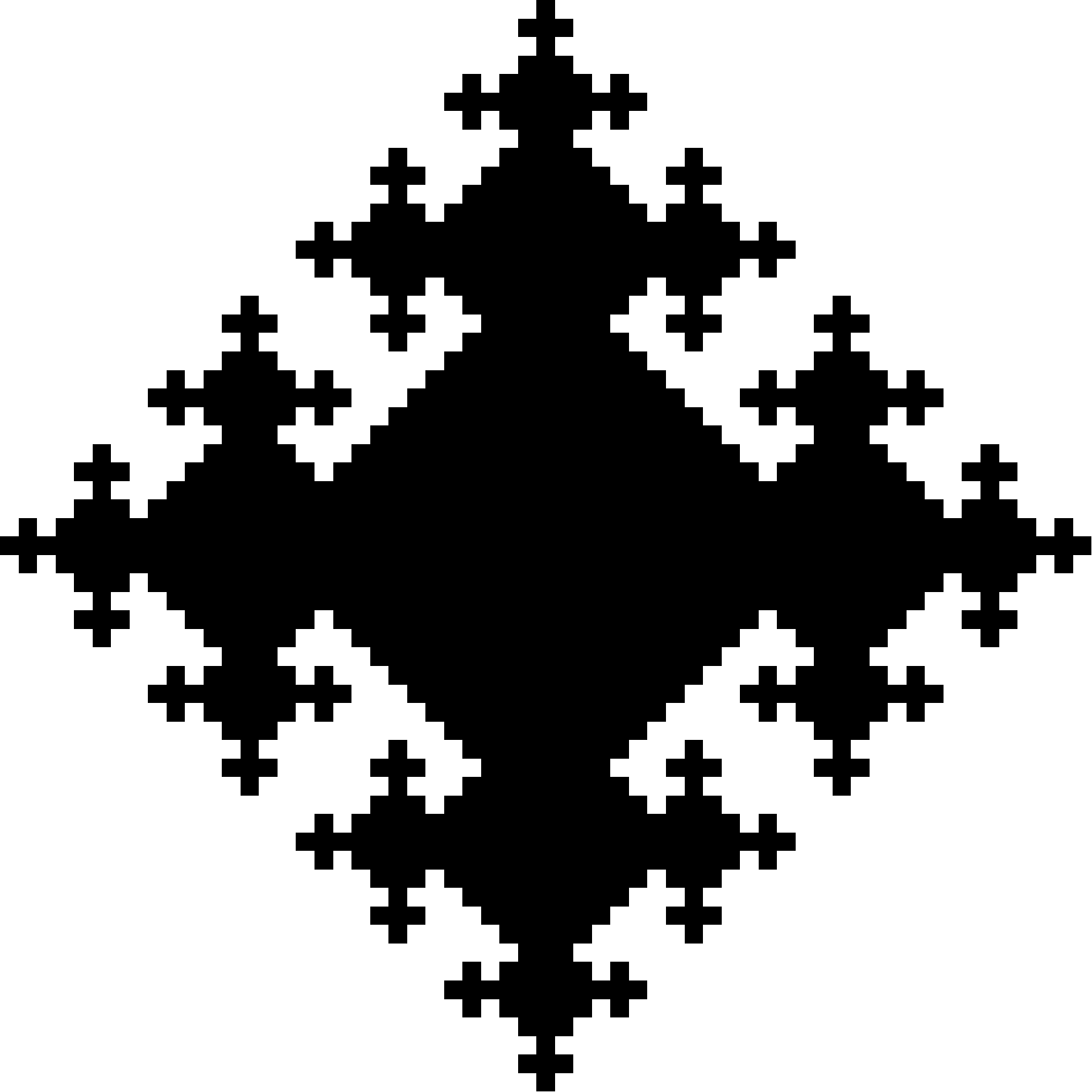} 
    \caption{Rule 14}
  \end{subfigure}             
  \begin{subfigure}[t]{0.24\textwidth}
    \centering 
    \includegraphics[width=.95\textwidth]{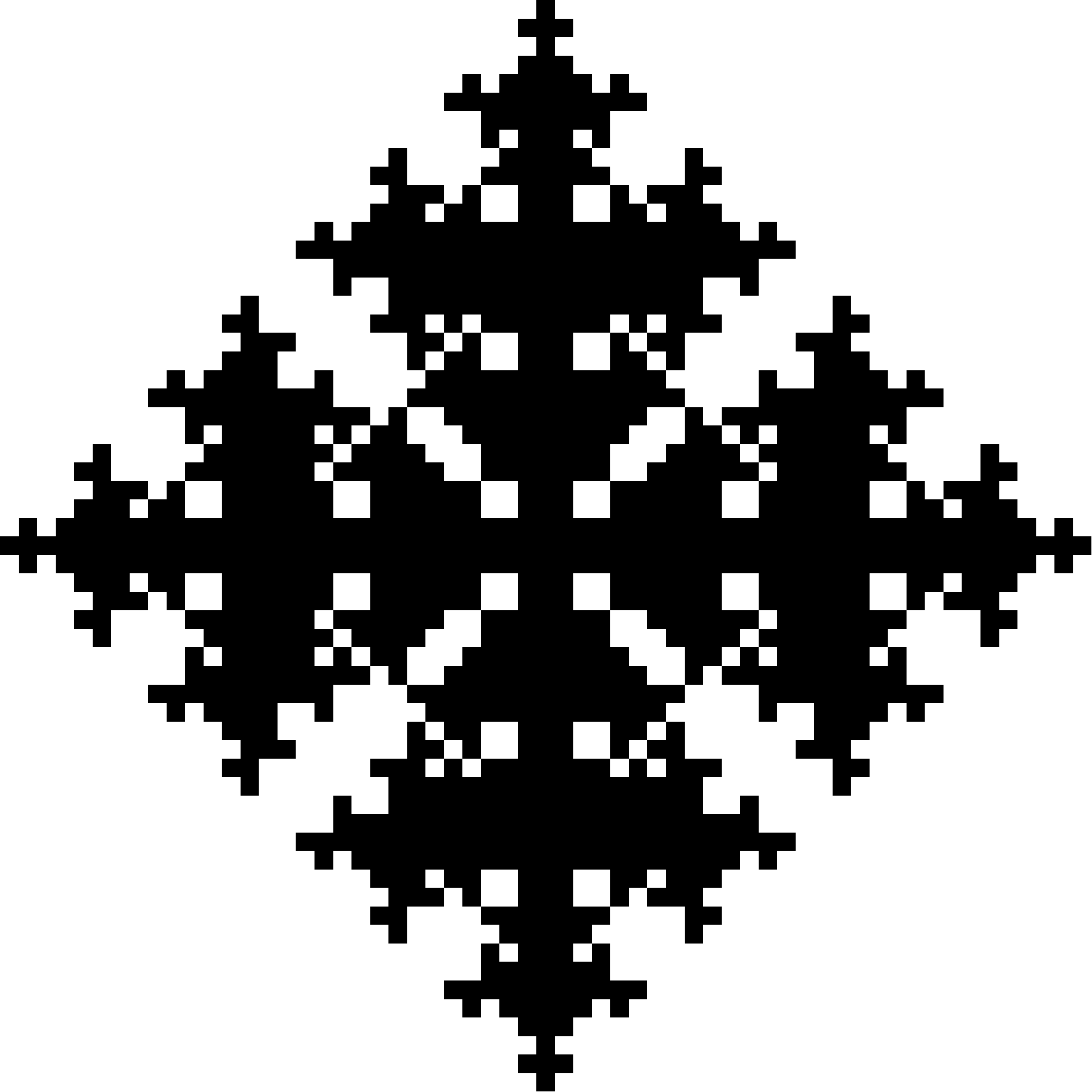} 
    \caption{Rule 134}
  \end{subfigure}             
        \caption{Examples of different rules with the similar fractal dynamics starting with a single active cell on square grid.}\label{fig: rules1 sq}
\end{figure}
\begin{figure}
        \centering
  \begin{subfigure}[t]{0.3\textwidth}
    \centering
    \includegraphics[width=.95\textwidth]{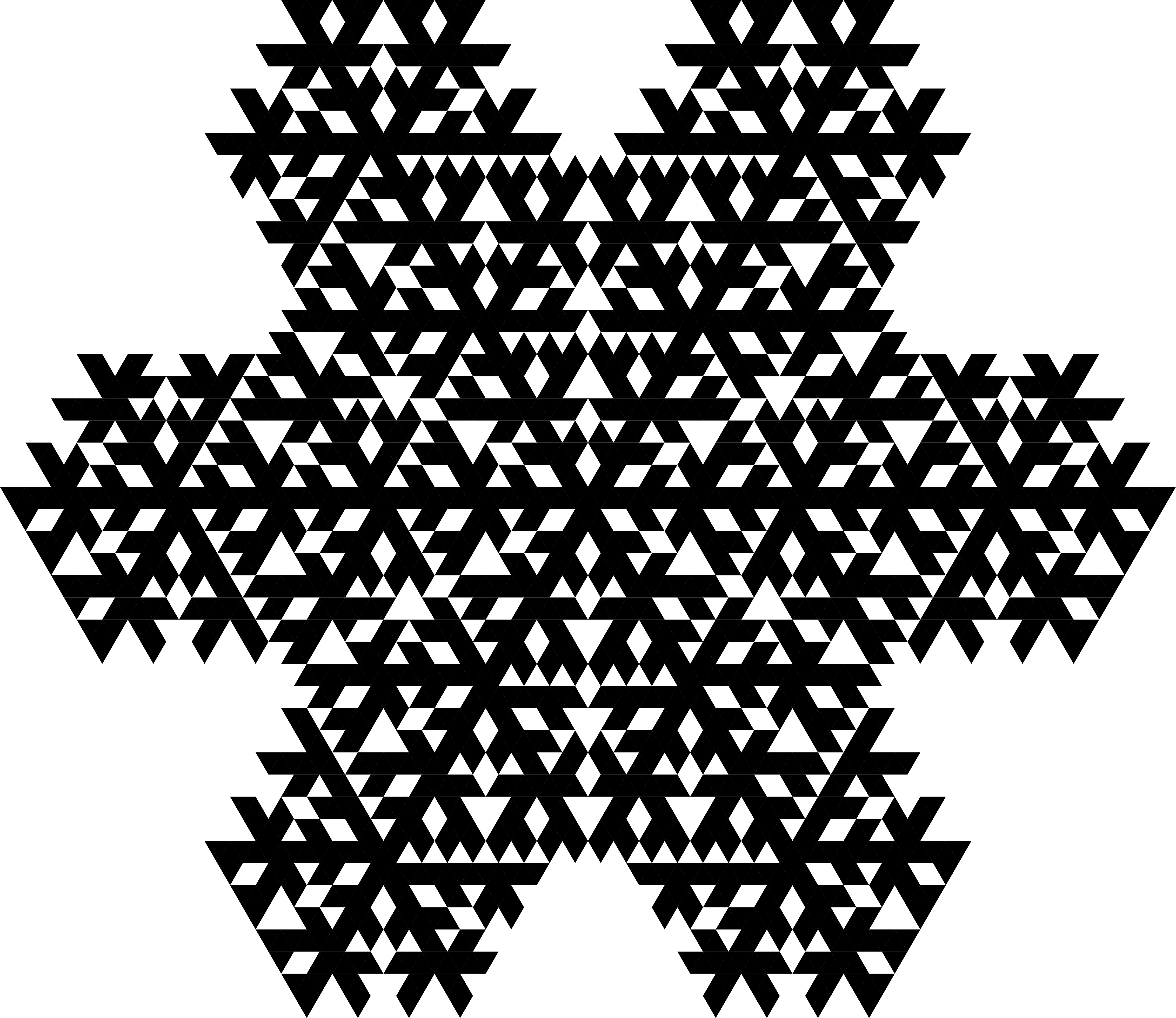}  
    \caption{Rule 1}
  \end{subfigure}   
  \qquad
  \begin{subfigure}[t]{0.3\textwidth}
    \centering  
    \includegraphics[width=.95\textwidth]{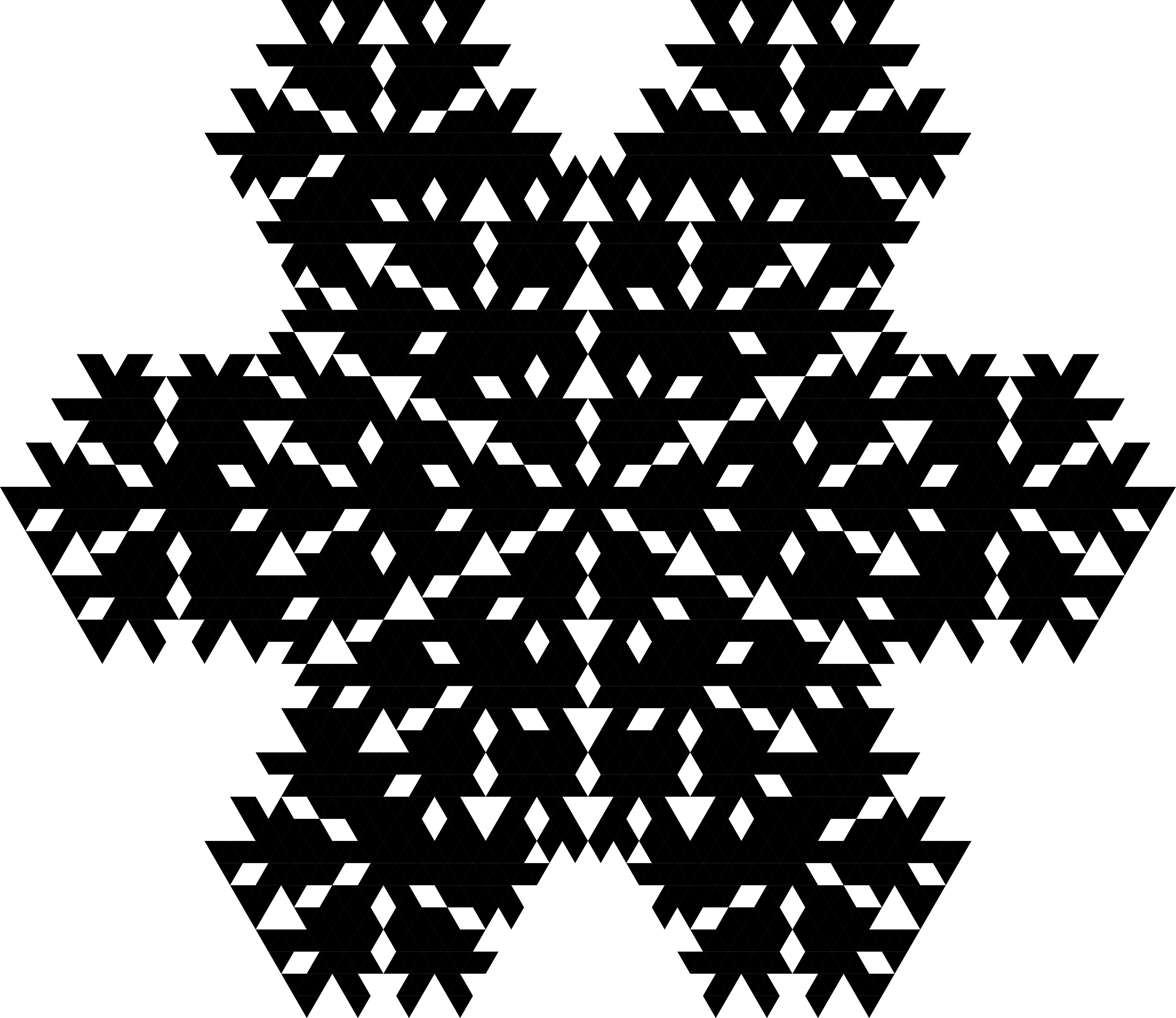}  
    \caption{Rule 13}
  \end{subfigure}                 
        \caption{Examples of different rules with the similar fractal dynamics starting with a single active cell on triangular grid.}\label{fig: rules1 tri}
\end{figure}

It  interesting to remark that non-freezing version of rule $13$ is the usual XOR between the four neighbors, which is a linear cellular automaton. Using a prefix-sum algorithm, we can compute any step of a linear cellular automaton, so the \emph{non-freezing} rule $13$ is in \NC. Although we might imagine that adding \emph{freezing} property simplifies the dynamics of a rule,  the restriction of this rule to freezing dynamics introduces a non-linearity that to prevented us to characterize its complexity.

\subsection{On P-Completeness on the triangular grid}
It is important to point out that for triangular graph (despite rule $1$ or $13$ might be candidates), we do not exhibit a rule such that \stability is \Pt-complete. The reader might think that, like in the one-dimensional case, every freezing rule defined in a triangular grid is \NC. This is not the case. Moreover, three states (that we call $0$, $1$, $2$) suffice to define a \Pt-complete FTCA. The general freezing property means that states may only grow (so, in this case state 2 is stable). In this context, for a triangular grid,  consider the following the local function.

$$f\left(x_u,\sum_{z\in N+u}x_z\right)=\begin{cases}
  1,  &\text{ if } x_u=0 \wedge (\sum_{z\in N+u}x_z=2 \vee \sum_{z\in N}x_z=12)\\
  1,  &\text{ if } x_u=10 \wedge \sum_{z\in N+u}x_z=11\\
  x_u, &\text{ otherwise } 
\end{cases}
$$
The proof of \Pt-Completeness follows similar arguments than the ones we used for rules  $2$ and $24$ in the square grid (Theorem \ref{thm: p-complete square}), i.e. reducing the Circuit Value Problem (CVP) to \stability on this rule. 
Instances of CVP are encoded into a configuration of this FTCA using the idea in the construction of the logical gates. In Figures \ref{fig: AND gate} and \ref{fig: xor tri} we exhibit the gadgets.
 
\begin{figure}
        \centering
  \begin{subfigure}[t]{0.24\textwidth}
    \centering  
    \includegraphics[width=.95\textwidth]{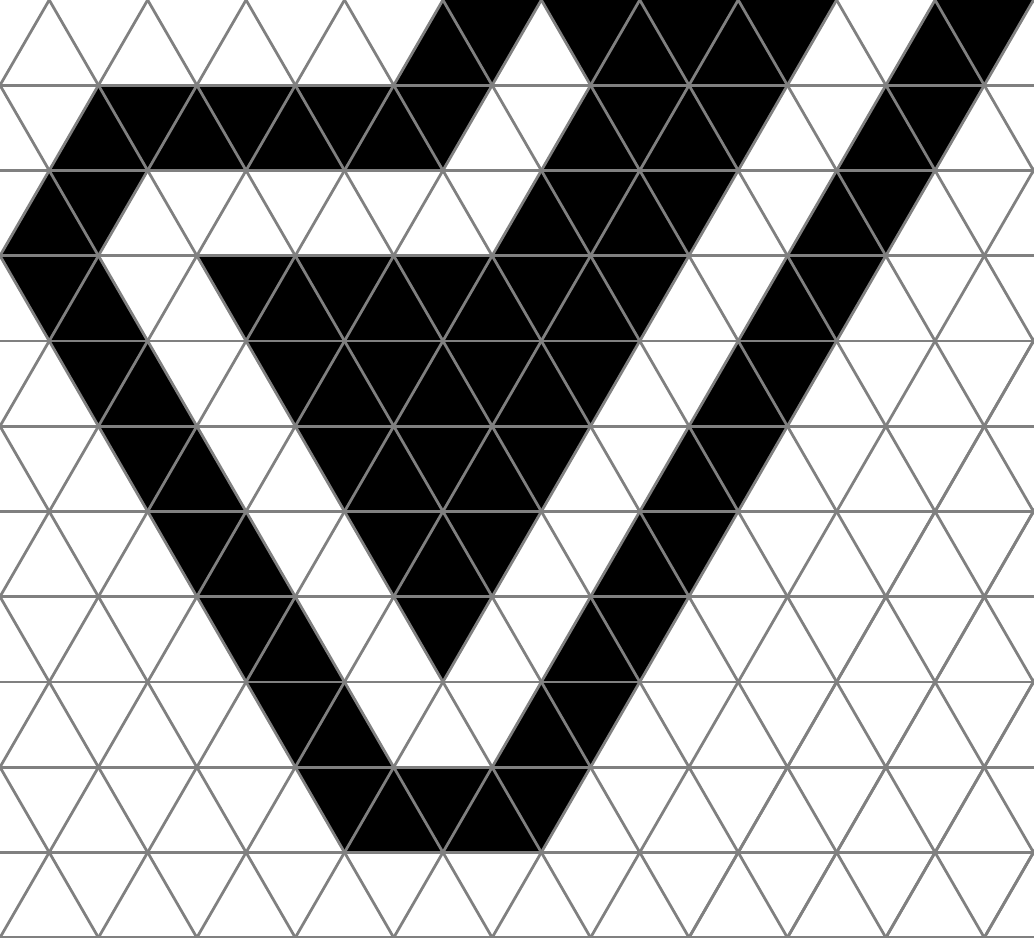}  
    \caption{Wire at time $0$.}\label{fig: wire1t}
  \end{subfigure}       
  \begin{subfigure}[t]{0.24\textwidth}
    \centering 
    \includegraphics[width=.95\textwidth]{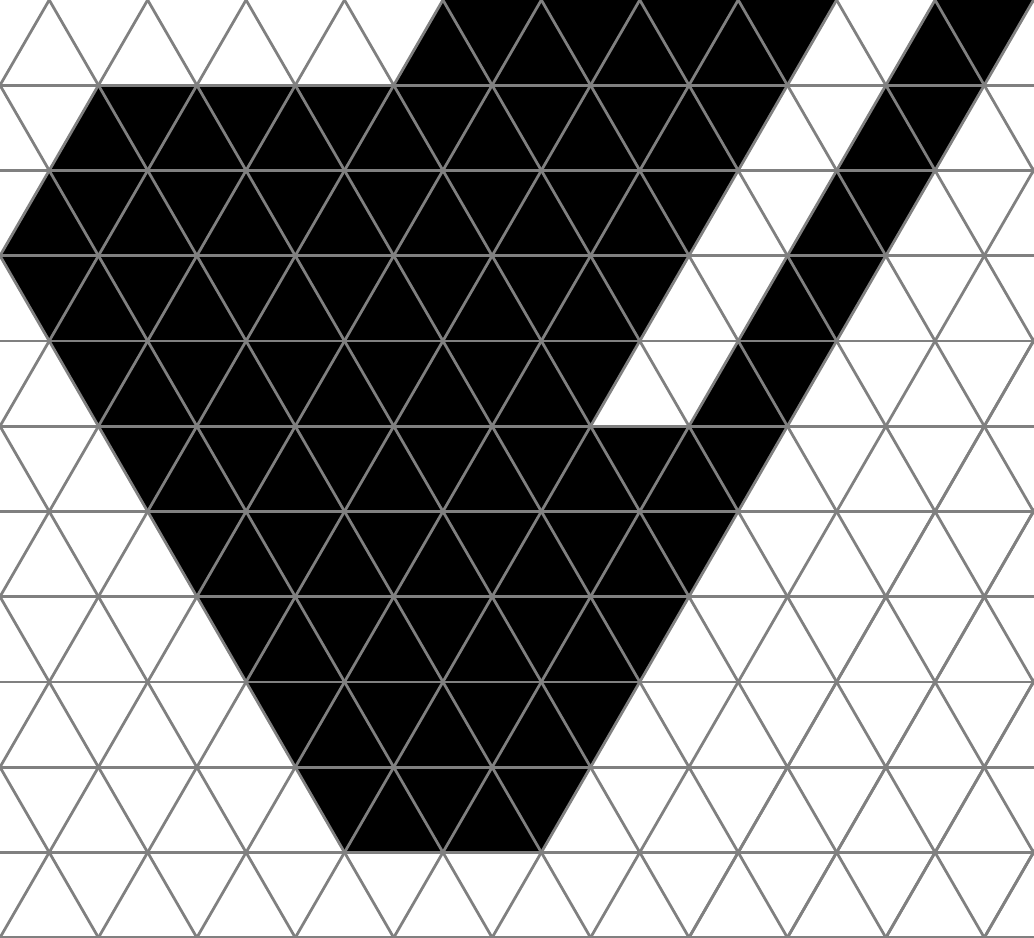}   
    \caption{Wire at time $30$.}
  \end{subfigure}             
\begin{subfigure}[t]{0.24\textwidth}
    \centering  
    \includegraphics[width=.95\textwidth]{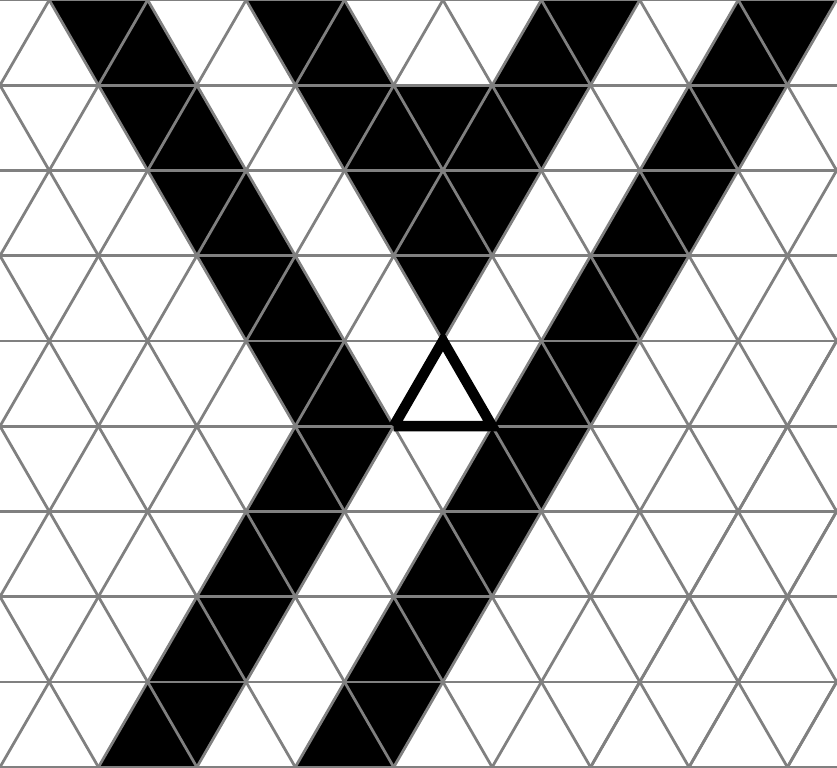}  
    \caption{AND gate.}
    \label{fig: and tri}
  \end{subfigure}
  \vspace{0.16\textwidth}      
  \begin{subfigure}[t]{0.24\textwidth}
    \centering 
    \includegraphics[width=.95\textwidth]{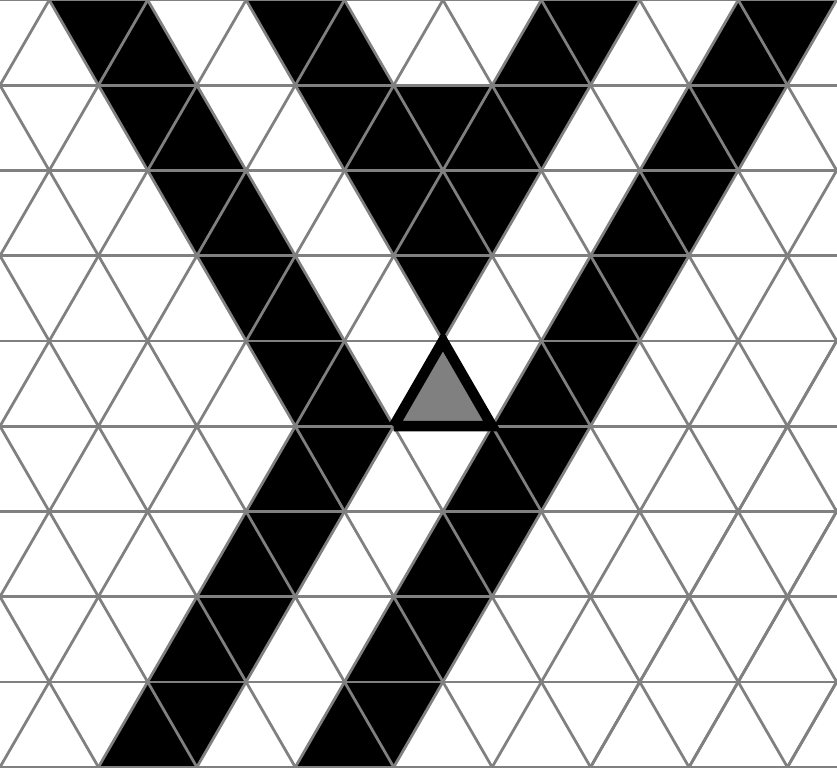}   
    \caption{XOR gate.}
    \label{fig: xor tri}
  \end{subfigure}   
  \vspace{-60pt}        
  \caption{Gadgets for the implementation of logic circuits.
    The thick line marks the cell that makes the calculation from signals. 
    The color code is: \Trianguloooo{black} : 1, \Trianguloooo{gray} : 10 and \Trianguloooo{white} : 0}\label{fig: gatest}
\end{figure}

\subsection{About non-quiescent rules}

Finally, it is convenient to say a word about rules where cells become active with zero active neighbors, i.e., rules where state $0$ is not quiescent.  Clearly, after one step for those rules, every cell will have at least one active neighbor. Then, their complexity is upper-bounded by the complexity of the same rule not considering the case of zero active neighbors as an activating state. For example, consider rule $034$ in the square grid. After one step of rule $034$, the dynamics are exactly the same that the one of rule $34$. Therefore, rule $034$ is in \NC. 
Although, there are some interesting cases. First, notice that rule $01$ is trivial (in the triangular or the square grid), because after only one step the rule reaches a fixed point. This contrasts with rule $1$, which is a Fractal-Growing rule. Second, consider rule $02$ or $024$  in the square grid. We know that rule $2$ and $24$ are \Pt-Complete. However, the reader can verify that the gadgets used to reduce CVP to \stability do not work for rule $02$ and $024$. This fact opens the possibility that rules $02$ and $024$ belong to \NC.

\section{Acknowledgment}
This work was supported by CONICYT + PAI + CONVOCATORIA NACIONAL SUBVENC\'OIN A INSTALACI\'ON EN LA ACADEMIA CONVOCATORIA A\~NO 2017 + PAI77170068 (P.M.), FONDECYT 11190482 (P.M.), and CONICYT via Programa Regional STIC-AmSud (CoDANet) cód. 19-STIC-03 (E.G. and P.M.).
\section{Bibliography}
\bibliographystyle{elsarticle-num} 
\bibliography{biblio}

\end{document}